\providecommand{\algorithmname}{Algorithm}
\declaretheorem[name=Theorem]{theorem}
\declaretheorem[name=Lemma, sibling=theorem]{lemma}
\declaretheorem[name=Assumption]{assumption}
\declaretheorem[name=Definition]{definition}
\declaretheorem[name=Proposition]{proposition}
\declaretheorem[name=Remark, style=remark, unnumbered]{remark}
\newcommand{\cL}{\mathcal{L}}
\newcommand{\cO}{\mathcal{O}}
\newcommand{\E}{\mathbb{E}}
\title{Breaking the Dimensional Barrier: A Pontryagin-Guided Direct Policy Optimization for  Continuous-Time Multi-Asset Portfolio Choice}
\author[1]{Jeonggyu Huh}
\author[2]{Jaegi Jeon}
\author[3]{Hyeng Keun Koo}
\author[4]{Byung Hwa Lim\thanks{Corresponding Author: limbh@skku.edu}}
\affil[1]{\small Department of Mathematics, Sungkyunkwan University, Republic of Korea}
\affil[2]{\small Graduate School of Data Science, Chonnam National University, Republic of Korea}
\affil[3]{\small Department of Financial Engineering, Ajou University, Republic of Korea}
\affil[4]{\small Department of Fintech, SKK Business School, Sungkyunkwan University, Republic of Korea}
\begin{document}

\maketitle

\begin{abstract}
We introduce the Pontryagin-Guided Direct Policy Optimization (PG-DPO) framework for high-dimensional continuous-time portfolio choice. 
Our approach combines Pontryagin’s Maximum Principle (PMP) with backpropagation through time (BPTT) to directly inform neural network–based policy learning, enabling accurate recovery of both myopic and intertemporal hedging demands—an aspect often missed by existing methods. 
Building on this, we develop the Projected PG-DPO (P-PGDPO) variant, which achieves near-optimal policies with substantially improved efficiency. 
P-PGDPO leverages rapidly stabilizing costate estimates from BPTT and analytically projects them onto PMP’s first-order conditions, reducing training overhead while improving precision. 
Numerical experiments show that PG-DPO matches or exceeds the accuracy of Deep BSDE, while P-PGDPO delivers significantly higher precision and scalability. 
By explicitly incorporating time-to-maturity, our framework naturally applies to finite-horizon problems and captures horizon-dependent effects, with the long-horizon case emerging as a stationary special case.

\end{abstract}

\vspace{0.3cm} 
{\em Keywords:} Multi-asset Portfolio choice, Pontryagin's Maximum Principle, direct Policy Optimization, model-based reinforcement learning

\newpage
\section{Introduction}\label{sec:intro}

The continuous-time consumption and portfolio choice problem has been a cornerstone of financial economics, offering fundamental insights into asset pricing and long-term investment behavior \citep{merton1973intertemporal,campbell1999consumption,campbell2001should}. 
Despite its theoretical elegance, the practical implementation of this framework has remained exceptionally challenging. 
Two obstacles are particularly salient. 
First, portfolio choice problems involving many risky assets and state variables are subject to the curse of dimensionality \citep{bellman1966dynamic}. 
Traditional dynamic programming (DP) approaches, built on solving the Hamilton–Jacobi–Bellman (HJB) equation, readily become intractable as dimensionality increases.
This is one reason why Markowitz’s static mean–variance framework still remains dominant in practice, despite its limitations and the greater realism offered by dynamic models.\footnote{Cochrane wrote: ``Merton’s theory is also devilishly hard to implement, which surely helps to account for its disuse in practice. \citep[p.4]{cochrane2022portfolios}" The appeal of the mean-variance model lies in its tractability, as it reduces portfolio choice to a single-period quadratic optimization problem that remains computationally feasible even with many assets. However, it ignores intertemporal hedging motives, relies on a restrictive single-period formulation, and assumes normally distributed returns so that only means and variances matter. Moreover, it disregards the dynamic evolution of investment opportunities and practical frictions such as transaction costs or portfolio constraints, and suffers from a lack of time consistency in multi-period settings.} 
Consequently, most continuous-time models in the literature have been confined to highly stylized, low-dimensional settings.
Second, even when solutions are available, they often fail to recover the structural decomposition of the optimal policy. A well-established theoretical benchmark, dating back to the seminal work of \citet{merton1973intertemporal}, demonstrates that the optimal portfolio can be decomposed into two critical components: the \emph{myopic demand} and the \emph{intertemporal hedging demand}. The myopic component represents the demand for a mean-variance efficient portfolio given current investment opportunities, while the intertemporal component hedges against expected future changes in those opportunities driven by evolving economic conditions. Accurate identification of both components is crucial for evaluating investor behavior and designing long-term investment strategies.

%In particular, as shown by \citet{constantinides1986capital}, transaction costs generate only second-order utility losses, allowing portfolios to deviate substantially from the frictionless optimum while still appearing nearly optimal in terms of utility. Similarly, \citet{cochrane1988sensitivity} demonstrates that first-order deviations from the optimal policy can result in only second-order losses in utility. However, in high-dimensional environments, existing numerical methods tend to approximate the myopic component reasonably well while providing poor estimates of the more subtle hedging demand. The limitation becomes particularly severe in long-horizon investment problems, where the horizon effects play a decisive role.

Recent advances in machine learning have introduced new PDE-based solvers, including Physics-Informed Neural Networks (PINNs), the Deep BSDE method \citep{han2018solving}, and Deep Galerkin Networks (DGNs). 
While these approaches help mitigate the curse of dimensionality, they focus on approximating the value function and are not designed to recover the full structure of the optimal portfolio, particularly the intertemporal hedging demand. 
Moreover, contributions such as \citet{duarte2024machine} restrict attention to infinite-horizon problems, where the optimal solution is stationary and time-independent. 
Although suitable for certain theoretical contexts, such methods are ill-suited for practical applications such as retirement planning, endowment management, or finite-maturity products, where horizon effects and time-varying hedging demands play a central role.

To address these limitations, we propose a new framework, \emph{Pontryagin-Guided Direct Policy Optimization (PG-DPO)}. 
Our approach integrates Pontryagin’s Maximum Principle (PMP) with policy networks trained via backpropagation through time (BPTT), enabling scalable and structurally accurate solutions to high-dimensional continuous-time portfolio choice problems.

PMP characterizes optimal controls through a coupled system of forward and backward equations. 
The forward equation governs the evolution of wealth and state variables that drive asset returns, which we simulate using Monte Carlo methods. 
This allows direct computation of expected utility and removes the need to estimate the value function over future periods—a core but computationally intensive step in DP-based approaches. 
Accordingly, our framework relies solely on policy networks, without requiring a value network. 
The backward equation governs the dynamics of the \emph{costates} (or adjoint variables), corresponding to the first-order partial derivatives of the value function with respect to the state variables.

Modern machine learning frameworks such as PyTorch allow these costates and their derivatives—including second-order partial derivatives of the value function—to be computed automatically via BPTT and automatic differentiation. 
We show that a single forward simulation, combined with BPTT, yields unbiased estimates of the costate processes required by PMP. 
As a result, PG-DPO avoids explicit solution of the backward equation, thereby sidestepping the major bottlenecks of dynamic programming. 
Both the costates and the value function emerge naturally as byproducts of the learning process.

A central innovation is the \emph{Projected PG-DPO (P-PGDPO)} variant. 
P-PGDPO builds on PG-DPO by substituting the estimated derivatives of the value function directly into the structural first-order conditions (FOC) of PMP. 
During training, costate estimates typically stabilize well before the policy networks are fully converged. 
P-PGDPO leverages this property by projecting the stabilized costates and their derivatives onto the analytical manifold defined by PMP’s first-order conditions. 

The algorithm proceeds in two stages. 
In Stage 1, the networks are trained until the costate estimates and their derivatives stabilize. 
In Stage 2, these estimates are substituted into the first-order conditions to construct the optimal policies. 
The resulting controls adhere closely to the structural form of the true solution, as dictated by PMP, and are recovered quasi-analytically at each time–state point. 
This requires minimal additional training while substantially reducing computational cost. 
By disentangling costate estimation from policy construction, P-PGDPO achieves significant gains in both accuracy and scalability.

%The P-PGDPO method is motivated by the empirical observation that value function derivatives, that is, costates, are typically easier to learn than the policy functions themselves. In practice, these partial derivatives tend to stabilize much earlier during training.
%Once the costate estimates have stabilized, usually after a short warm-up phase, the method can bypass the potentially suboptimal outputs of the policy networks. Instead, costates and their derivatives are directly substituted in the first-order conditions (FOCs) of PMP, whose structural form closely mirrors the conditions originally derived by \citet{merton1973intertemporal}. This effectively projects the learned costate information in the manifold of controls that satisfy the necessary conditions of the PMP.
%The resulting control policy adheres closely to the structural form of the true optimal solution, as dictated by the PMP. It is recovered quasi-analytically at each time–state point, requiring minimal additional training and incurring substantially reduced computational cost. %Our PMP-based PGDPO framework thus leverages BPTT-derived estimates of costates and their derivatives, enhancing scalability while providing a principled mechanism for enforcing first-order optimality conditions.

We provide a theoretical justification for the P-PGDPO method in Theorem~\ref{thm:policy_gap}. When the objective function surface is relatively flat near the optimum, the first-order condition (FOC) violation for the policy obtained in the warm-up stage tends to be small. We can show that if this FOC deviation is small and the costate estimates obtained via BPTT are sufficiently accurate, then the control policy recovered in the second stage is close to the true optimal policy.

We demonstrate the numerical accuracy of our methods by comparing them to analytic benchmarks under affine dynamics, where the true solution is available from previous work \citep{kim1996dynamic,liu2007portfolio}. 
We further compare against the Deep BSDE approach, which is representative of existing techniques for solving high-dimensional HJB equations. 
Finally, we apply P-PGDPO to long-horizon investment problems and to models with non-affine return dynamics, thereby demonstrating its broader applicability.

Our numerical experiments in Section~\ref{sec:numerical_experiments} show that baseline PG-DPO produces solutions comparable to those of Deep BSDE in terms of mean squared error. 
However, the P-PGDPO variant yields significantly smaller errors. 
This improvement arises because PG-DPO, like other policy optimization algorithms, struggles to reliably discover the true optimum, whereas P-PGDPO overcomes this limitation by directly enforcing the structural first-order conditions of PMP. 
As a result, P-PGDPO not only scales to problems with many assets and state variables, but also accurately disentangles and reconstructs complex intertemporal hedging demands. 
Furthermore, it generates stable solutions in non-affine settings and converges smoothly to the affine benchmark as the nonlinear terms vanish. 
By contrast, the Deep BSDE approach—highly successful in other high-dimensional applications such as option pricing and hedging—fails to capture the full structure of the optimal policy, particularly the intertemporal hedging component.

%As benchmark models, we consider a problem in which the state process follows a multi-factor Ornstein–Uhlenbeck (OU) process, a special case of affine dynamics, for which analytic solutions can be obtained by solving a system of ordinary differential equations \citep{liu2007portfolio}. 
%Specifically, in PG-DPO, the investor's policy, such as consumption and portfolio allocation, is parametrized by neural networks. The expected utility, defined as the discounted sum of utility from consumption and terminal wealth, is then directly maximized using stochastic gradient ascent, where gradients are computed using BPTT. A key computational advantage of our approach lies in its complete avoidance of the HJB residual minimization, where the residual is defined as the cumulative squared error of the HJB equation. Computing the HJB residual typically requires evaluating first- and second-order partial derivatives of the value function at each time–state grid point and aggregating the errors over time, which incurs a high computational cost. \citet{duarte2024machine} mitigate this by replacing the full gradient computation with the second derivative of a carefully constructed univariate function. In contrast, our method bypasses the need to approximate the global value function, instead using BPTT to obtain unbiased estimates of the costates implied by PMP, resulting in significant computational savings.

 Conceptually, our approach is related to recent methods that parameterize control policies with neural networks and optimize the expected objective function directly through simulation and gradient ascent, thereby bypassing the need to solve the HJB equation \citep{zhang2019deep,kopeliovich2024portfolio}. 
Our PMP-based PG-DPO framework, however, is explicitly guided by the continuous-time optimality conditions of PMP. 
Unlike standard reinforcement learning algorithms such as Proximal Policy Optimization (PPO) or Trust Region Policy Optimization (TRPO), which rely on sampled policy gradient estimates, our method computes exact gradients via BPTT. 
This yields lower-variance updates and improved computational efficiency in high dimensions \citep{Kushner2003Stochastic,Borkar2008Stochastic}. 

This work also contributes to the literature on dynamic portfolio choice, which seeks analytical or numerical solutions. 
Most contributions adopt a discrete-time formulation, with the exception of affine continuous-time models. 
For example, \citet{campbell1999consumption,campbell2001should} employ log-linear approximations, while \citet{balduzzi1999transaction}, \citet{lynch2000predictability}, and \citet{lynch2001portfolio} use discrete-space approximations with transaction costs. 
\citet{garlappi2010solving} apply state-variable decomposition, and \citet{brandt2005simulation} propose a Monte Carlo projection of simulated marginal utility onto basis functions—an approach reminiscent of our forward-simulation step, though lacking the scalability and structural guidance of PMP. 

In continuous time, semi-analytical solutions are available under affine dynamics, where the HJB reduces to ODEs \citep{kim1996dynamic,liu2007portfolio,buraschi2010correlation}. 
For non-affine dynamics, no general solution exists. 
%Recent machine learning advances have introduced PDE-based solvers for HJB equations, including Deep BSDE \citep{han2018solving}, Deep Galerkin \citep{sirignano2018dgm}, PINNs \citep{raissi2019physics}, Deep Splitting, and Deep Branching \citep{beck2021deep,nguwi2024deep}. 

Parallel efforts apply reinforcement learning to continuous-time portfolio problems, including policy evaluation, policy gradients, $Q$-learning, and PPO/TRPO methods \citep{WZZ2020,JZ2022a,JZ2022b,JZ2023,ZTY2023}. 
These have been applied to dynamic mean-variance analysis \citep{WZ2020,DDJ2023,huang2024mean} and to Merton-type portfolio problems \citep{dai2023learning}. 
Such approaches learn return dynamics from data, but are typically restricted to few assets and factors. 
For example, \citet{GNKRS2023} propose the FaLPO method, which combines policy learning with model calibration but remains limited in scale. 
In a related line, \citet{cong2021alphaportfolio} introduce the AlphaPortfolio framework, which bypass preliminary estimation of return distributions and directly constructs portfolios using deep reinforcement learning, showing strong out-of-sample performance. By contrast, we assume the data-generating process is known and focus on solving the general continuous-time consumption and portfolio choice problem with many risky assets and multiple state variables. In a complementary direction, \citet{doh2025informationRL} show that reinforcement learning can recover the rational expectations equilibrium in information-based asset market models without relying on explicit market data estimation, providing a theoretical foundation for the use of reinforcement learning in equilibrium settings.

Other contributions closer to ours include \citet{davey2022deep}, who develop a BSDE-based method with portfolio constraints; \citet{duarte2024machine}, who propose a machine learning approach for infinite-horizon settings; and \citet{cheridito2025deep}, who combine generalized policy iteration with PINNs for finite-horizon DP. 
These methods, however, either require explicit BSDE solutions, restrict attention to stationary infinite-horizon problems, or fail to recover the structural decomposition of myopic and intertemporal hedging demands in high dimensions. 
Our approach, by contrast, recovers partial derivatives directly through automatic differentiation of costates, enabling accurate estimation of both the myopic and hedging components.\footnote{As Cochrane emphasizes, partial derivatives of the value function are critical for long-term portfolio choice yet notoriously difficult to obtain: 
\begin{quote}
“Merton’s theory is also devilishly hard to implement, which surely helps to account for its disuse in practice... What are the partial derivatives of the investor’s value function that guide state-variable investments? That is harder still...” \citep[pp.~4--5]{cochrane2022portfolios}.
\end{quote}} 
The aforementioned literature does not demonstrate accurate recovery of intertemporal hedging terms when both the number of assets and state variables are large.\footnote{To our knowledge, the existing literature has demonstrated the ability to solve problems involving at most a dozen risky assets and a dozen state variables, with two notable exceptions. \citet{davey2022deep} consider a setting with 20 assets, but under a constant investment opportunity set, i.e., without state variables influencing asset returns.
\citet{cheridito2025deep} address a case with 25 assets, although without providing evidence regarding the precision of the resulting solution. In contrast, our method substantially improves scalability, handling much larger values of \( n \) and comparably high values of \( k \), while preserving accuracy even in high-dimensional settings. Our companion paper demonstrates that our method is applicable to constrained portfolio optimization problems involving up to 1000 assets \citep{HJKL}. }

In the remainder of this paper, we begin with an introduction to the machine learning methods used for the PG-DPO framework in Section \ref{sec:machine_learning_basic}. Then we present the multi-asset Merton problem with exogenous states in Section~\ref{sec:Merton_multiasset}. Section~\ref{sec:BPTT_multiasset} introduces our Pontryagin-Guided Direct Policy Optimization (PG-DPO) algorithm, its Projected PG-DPO (P-PGDPO) extension, and provides a theoretical justification for the effectiveness of the P-PGDPO approach under specified regularity conditions. Numerical experiments presented in Section~\ref{sec:numerical_experiments} confirm the framework's scalability and near-optimal performance within this testbed, illustrating its potential for high-dimensional applications. Finally, Section~\ref{sec:conclusion} concludes and discusses potential future directions of research.

%{\color{magenta}Recent advances in deep learning have significantly enhanced the scalability of methods for solving high-dimensional stochastic control problems, including PDE-based approaches such as Physics-Informed Neural Networks (PINNs)  and the deep backward stochastic differential equation (Deep BSDE) method \citep[e.g.,][]{Han2018Solving,sirignano2018dgm,raissi2019physics,beck2021deep,nguwi2024deep},}. Concurrently, deep learning techniques are being applied directly to dynamic programming and stochastic control in finance. Some approaches tackle  utility maximization problems under constraints using formulations based on second-order backward stochastic differential equations (2BSDEs) \citep{davey2022deep}. Additional work seeks to directly solve the associated nonlinear PDEs using methods such as deep branching algorithms \citep{nguwi2024deep}.

%Although these BSDE-based and PDE-centric methods are powerful, applications to problems such as the Merton HJB have revealed limitations in accuracy and computational efficiency for certain variants \citep{nguwi2024deep}. Meanwhile, other approaches successfully leverage the stochastic maximum principle (SMP) and adjoint BSDEs—often via duality methods—to handle even non-Markovian settings in finance \citep{davey2022deep}.  

\section{Machine Learning Basics}\label{sec:machine_learning_basic}

Our proposed Pontryagin-Guided Direct Policy Optimization (PG-DPO) framework is built upon fundamental concepts from machine learning. Before presenting our approach, this section introduces the computational methods and reinforcement learning principles that underpin our methodology. We begin with computational graphs and the algorithm of backpropagation through time (BPTT), which form the basis for efficient gradient calculation in sequential settings. We then discuss reinforcement learning and policy-based methods, highlighting their connection to dynamic portfolio choice and clarifying the distinctions between our framework and conventional RL approaches. Since our method assumes that the dynamics of the financial market are known, it can be classified as a form of model-based reinforcement learning. 

\subsection{Computational Graphs and Backpropagation Through Time}

A computational graph provides a structured way to represent complex mathematical problems as a network of simple operations. In the portfolio optimization problem, the graph encodes the entire process, from the initial wealth and market states to the final objective function $J$, through sequential transformations of wealth and state variables. This representation is not merely a visualization but a computational tool: it enables us to systematically apply the chain rule of calculus across the entire problem.

Once we have this graph, we need an efficient way to understand how small changes in our decisions (the portfolio weights, parameterized by $\theta$) affect the final outcome. This is achieved through backpropagation, which works by starting from the final outcome and moving backward through the graph, systematically applying the chain rule at each step to calculate the contribution of each parameter to the final result. It is an algorithm that efficiently computes the gradient, or the sensitivity of the final output with respect to every parameter in the graph, denoted as $\nabla_{\theta} J$. 

When applied to decision problems evolving over time, such as dynamic portfolio choice, the method is known as \textit{Backpropagation Through Time (BPTT)}. A key advantage of our model-based approach is that the dynamics of the environment, such as the stochastic differential equations (SDEs) for asset prices, are explicitly known. This allows us to compute exact analytical gradients for a given sample path through the simulation, rather than relying on estimation. BPTT is the core mechanism that enables our model to learn from simulated market trajectories by calculating explicit gradients. For a given trajectory, it computes the gradient of the final utility $J$ with respect to the policy parameters $\theta$ by propagating sensitivities backward through time. The influence of wealth in any intermediate time step, $X_{t+1}$, on the final result is found by applying the chain rule sequentially from the terminal state $X_T$:
\begin{equation*}
\frac{\partial J}{\partial X_{t+1}} = \frac{\partial J}{\partial X_T} \frac{\partial X_T}{\partial X_{T-1}} \frac{\partial X_{T-1}}{\partial X_{T-2}} \cdots \frac{\partial X_{t+2}}{\partial X_{t+1}} = \frac{\partial J}{\partial X_T} \prod_{k=t+1}^{T-1} \frac{\partial X_{k+1}}{\partial X_k}.
\end{equation*}
This term represents the gradient "flowing back" from the end of the simulation. The total gradient with respect to the policy parameters $\theta$ is then calculated by summing up the contributions from the policy $\pi_t$ at every time step, weighted by this backward propagated sensitivity:
\begin{equation*}
\frac{\partial J}{\partial \theta} = \sum_{t=0}^{T-1} \frac{\partial J}{\partial X_{t+1}} \frac{\partial X_{t+1}}{\partial \pi_t} \frac{\partial \pi_t}{\partial \theta}
\end{equation*}
Since the dynamics of the financial market are explicitly modeled, our framework can compute analytical gradients along each sample path. This contrasts with model-free methods, which must rely on repeated trial-and-error estimates. In our method, BPTT thus plays a dual role: it provides the gradients necessary for optimizing the policy parameters, and at the same time, it generates pathwise costate estimates that are central to Pontryagin’s Maximum Principle, which will be introduced in Section \ref{sec:PMP_multiasset_state_costate}. 

\subsection{Reinforcement Learning and Policy-Based Methods}

Reinforcement Learning (RL) provides a general framework in which an agent learns to make decisions by interacting with an environment. At each point in time, the agent observes the current state, chooses an action, and receives a reward, with the goal of maximizing cumulative rewards over time. The mapping of this paradigm to dynamic portfolio choice is direct. The agent corresponds to an investor, the environment to a stochastic financial market, the state to the current wealth and economic factors, the action to portfolio allocation and consumption, and the reward to the utility derived from consumption and terminal wealth. 

A central distinction in RL lies between value-based and policy-based methods. Value-based approaches estimate value functions and derive actions indirectly, whereas policy-based methods directly optimize a parameterized policy $\pi_\theta$. In our setting, the policy is represented by a neural network with parameters $\theta$, which maps the current state $s_t$ (e.g., market conditions, current wealth) to an action $a_t=\pi_\theta(s_t)$ (e.g., a specific portfolio allocation). The optimization problem reduces to finding the parameter set $\theta^*$ that  maximize the expected total reward (our objective function $J$):
\begin{equation*}
\theta^* = \arg\max_{\theta} \mathbb{E}[J(\theta)].
\end{equation*}
Parameter updates follow the stochastic gradient ascent rule, 
\begin{equation*}
\theta_{k+1} = \theta_k + \alpha \mathbb{E}\left[
\frac{\partial J(\theta_k)}{\partial \theta}\right]
\end{equation*}
where $\alpha$ is a learning rate and gradients are computed through BPTT.

While our PG-DPO framework is conceptually rooted in this class of algorithms, it departs in important ways. First, it is a model-based RL method. Unlike model-free approaches such as Proximal Policy Optimization or actor-critic schemes, which require repeated trial-and-error exploration to estimate gradients, as seen in \citet{dai2023learning} that tackle the Merton problem, our framework exploits the explicitly known dynamics of financial markets. This allows us to compute exact analytical gradients through BPTT, leading to both greater efficiency and higher numerical accuracy.

Second, our approach is explicitly guided by Pontryagin’s Maximum Principle (PMP). Standard RL methods often converge to policies that deliver nearly optimal utility values but may diverge substantially from the structural form of the true solution. In other domains, such as Atari, Go, or autonomous driving, such structural fidelity is less critical, since achieving a near-optimal utility value alone is often sufficient for strong performance. In dynamic portfolio choice, however, the decomposition of the optimal strategy into myopic demand and intertemporal hedging demand is both theoretically central and practically important. By leveraging PMP, our framework ensures that both components are faithfully captured, yielding policies that are structurally consistent with financial theory.

A further distinction arises in comparison with \citet{duarte2024machine}, whose method employs an actor–critic scheme in which the critic is trained by minimizing the HJB residual in a PINN-like fashion.\footnote{The HJB residual is commonly defined to be the cumulative squared deviation of the left-hand side of the HJB equation from 0.}. However, this approach is designed exclusively for infinite-horizon problems, where the optimal solution is stationary and time-independent. Although appropriate for certain theoretical settings, this restriction makes it less suitable for most practical applications, such as retirement planning, endowment management, or finite-maturity products. By contrast, our PG-DPO framework is explicitly tailored to the finite-horizon setting, incorporating time-to-maturity as a state variable. This enables the algorithm to capture horizon effects, including the time-varying structure of hedging demand and the transition of optimal strategies as maturity approaches. Moreover, since the long-horizon case can be treated as a special instance of finite-horizon formulation, our framework naturally extends to settings such as those analyzed in Section \ref{sec:long_horizon_robustness}, where the horizon is sufficiently long to yield stationary policies. This flexibility is in sharp contrast to the approach of \citet{duarte2024machine}, which is limited to the infinite horizon case and cannot easily accommodate horizon-dependent dynamics that are crucial in practice.

In sum, our method combines the principles of reinforcement learning with the structural discipline of PMP, while extending applicability to finite-horizon continuous-time finance. This integration ensures that learned policies are not only utility-maximizing but also theoretically well-structured, marking a significant departure from existing RL approaches.% and laying the groundwork for the algorithmic developments in the following sections.

\section{The Model and Pontryagin's Maximum Principle}
\label{sec:Merton_multiasset}

%In this section, we describe our model with multiple assets and factors. For a general market environment, Pontryagin's Maximum Principle can be applied to obtain solutions to stochastic control problems via the forward-backward stochastic differential equation (FBSDE) approach.

\subsection{General Multi-Asset Model}\label{sec:market_model_with_state}

We consider the continuous-time portfolio choice problem introduced by \citet{merton1973intertemporal}, in which a representative investor allocates wealth between  risky assets and a risk-free asset, while continuously consuming over time.\footnote{While we focus on Brownian motion as the source of return risk, the methods can be extended to accommodate jumps. Furthermore, if a discrete-time model is viewed as an approximation of a continuous-time model, our approach can be adapted accordingly.}

The stochastic investment opportunities are characterized by an exogenous \( k \)-dimensional \emph{state process} \( \mathbf{Y}_t =(Y_{1,t},\ldots, Y_{k,t})^\top \in \mathbb{R}^k \), where notation $^\top$ denotes the transpose of a vector (or a matrix), and each component \( Y_{i,t} \) evolves according to an Itô diffusion:
\[
dY_{i,t} = \mu_{Y,i}\bigl(t,\mathbf{Y}_t\bigr)\,dt + \sigma_{Y,i}\bigl(t,\mathbf{Y}_t\bigr)\,dW^Y_{i,t}, \quad i=1,\dots,k,
\]
where the Brownian motions \( W^Y_{i,t} \) are potentially correlated, satisfying
$\operatorname{Cov}\bigl(dW^Y_{i,t},\, dW^Y_{j,t}\bigr) = \phi_{ij}\,dt,$
with constant correlations \( \phi_{ij} \). For instance, \(\mathbf{Y}_t\) could include macroeconomic indices, volatility factors, exchange rates, etc. 

We consider a financial market consisting of \( n \) risky assets and one risk-free asset. The risk-free asset earns an instantaneous rate \( r(t,\mathbf{Y}_t) \), which depends on time and the state variable. The price of the \( i \)-th risky asset, \( S_{i,t} \), evolves according to
\[
dS_{i,t} = S_{i,t}\left( \mu_i(t,\mathbf{Y}_t)\,dt + \sigma_i(t,\mathbf{Y}_t)\,dW^X_{i,t} \right), \quad i = 1,\dots,n,
\]
where \( \mu_i(t,\mathbf{Y}_t) \) is the expected return and \( \sigma_i(t,\mathbf{Y}_t) \) the volatility. Brownian motions \( W^X_{i,t} \) have correlations that satisfy $
\operatorname{Cov}(dW^X_{i,t},\, dW^X_{j,t}) = \psi_{ij}\,dt$, and $\operatorname{Cov}(dW^X_{i,t},\, dW^Y_{j,t}) = \rho_{ij}\,dt,$
with constant correlations \( \psi_{ij} \) and \( \rho_{ij} \).
The investor selects an \( n \)-dimensional \emph{portfolio weight} vector \( \boldsymbol{\pi}_t = (\pi_{1,t}, \ldots, \pi_{n,t})^\top \), where \( \pi_{i,t} \) denotes the proportion of current wealth \( X_t \) invested in the \( i \)-th risky asset at time \( t \).

To clarify the structure of risk in the economy, we define the combined \((n+k)\)-dimensional Brownian motion vector
\(
\mathbf{W}_t = \begin{pmatrix} \mathbf{W}^X_t \\ \mathbf{W}^Y_t \end{pmatrix}.
\) where $\mathbf{W}_t^{X}$ and $\mathbf{W}_t^{Y}$ are Brownian vectors for the risky assets and state variables.
Then, the instantaneous covariance matrix of its increments \( d\mathbf{W}_t \) is denoted by \( \boldsymbol{\Omega} \,dt \), with
\(
\operatorname{Cov}(d\mathbf{W}_t) = \boldsymbol{\Omega}\,dt
=
\begin{pmatrix}
 \Psi & \boldsymbol{\rho} \\
 \boldsymbol{\rho}^\top & \Phi
\end{pmatrix} \,dt,
\)
where \( \Psi = (\psi_{ij}) \in \mathbb{R}^{n \times n} \) is the \textcolor{teal}{correlation} matrix of the asset-specific shocks \( d\mathbf{W}^X_t \), \( \Phi = (\phi_{ij}) \in \mathbb{R}^{k \times k} \) is the correlation matrix of the state shocks \( d\mathbf{W}^Y_t \), and \( \boldsymbol{\rho} = (\rho_{ij}) \in \mathbb{R}^{n \times k} \) captures the cross-correlation between the two sources of risk. 

Throughout the paper, we consider a complete probability space \( \left( \Omega, \mathcal{F}, \mathbb{P} \right) \) that supports a standard \((n + k)\)-dimensional Brownian motion \( \mathbf{W}_t \) and its natural filtration \( \{\mathcal{F}_t\}_{t \geq 0} \), augmented to satisfy the \emph{usual conditions} such as right-continuity and completeness. Then the investor's wealth dynamics is given by 
\begin{equation}\label{eq:wealth}
dX_t
= \left[\,X_t\left(r(t,\mathbf{Y}_t)+
    \boldsymbol{\pi}_t^\top\bigl(\boldsymbol{\mu}(t,\mathbf{Y}_t)
    - r(t,\mathbf{Y}_t)\mathbf{1}\bigr)\right) - C_t\,\right]dt
+ X_t\,\boldsymbol{\pi}_t^\top\boldsymbol{\sigma}(t,\mathbf{Y}_t)\,d\mathbf{W}^X_t,
\end{equation}
where $\mathbf{\mu}(t,\mathbf{Y}_t)$ is the vector of expected returns,  and \( \mathbf{1} \in \mathbb{R}^n \) is a vector of ones.\footnote{For the existence and uniqueness of the wealth process, we impose some regularity conditions for the market parameter in Online Appendix \ref{sec:regularity conditions}}. 
The weight on the risk-free asset is $1-\mathbf{1}^\top\boldsymbol{\pi}_t$, and we define 
$\boldsymbol{\sigma}(t,\mathbf{y}) := \mathrm{diag}\!\big(\sigma_1(t,\mathbf{y}),\ldots,\sigma_n(t,\mathbf{y})\big)\in\mathbb{R}^{n\times n}$.
In order to preclude bankruptcy and ensure solvency, 
%an arbitrage opportunity, 
we impose the non-negative wealth constraint 
\begin{equation}\label{constraint:non-negative_wealth}
    X_t \ge 0\quad \mbox{for every}\ t\in [0,T].
\end{equation}

The investor's preferences are represented by the following expected utility function:
\begin{align*}
    {\cal U} =  \mathbb{E}_{t,x,\mathbf{y}}\!\left[
\int_t^T e^{-\delta (s-t)}\, U\bigl(C_s\bigr)\,ds
\;+\;
\kappa\,e^{-\delta (T-t)}\, U\bigl(X_T\bigr)
\right], \label{eq:utility_function}
\end{align*}
where \( \delta > 0 \) is the subjective discount rate, and \( \kappa \ge 0 \) governs the strength of the bequest motive, \( \mathbb{E}_{t,x,\mathbf{y}}[\cdot] := \mathbb{E}[\cdot \mid X_t = x,\, \mathbf{Y}_t = \mathbf{y}] \) denotes the conditional expectation, and $U$ is a
continuously differentiable, strictly increasing and strictly concave utility function, which satisfies the following Inada conditions\footnote{The method in this paper can be easily extended to the case where $\lim_{x\to 0} U'(x)<\infty$ by considering the possibility of the corner solution $x=0$ in the first-order condition. Thus, the method covers the large class of utility functions such as the Hyperbolic Absolute Risk Aversion (HARA) family:
\begin{equation*}
  U(x)=
  \begin{cases}
    \dfrac{(a+bx)^{\,1-\gamma}}{1-\gamma}, & \gamma\neq1,\\[6pt]
    \log(a+bx),                             & \gamma=1,
  \end{cases}
  \qquad
  b > 0,\; \gamma > 0,\; a \in \mathbb{R},
\label{eq:hara_utility}
\end{equation*}
defined in the half-line \( \mathcal{D}_U := \{x > x_{\min} \} \), with \( x_{\min} := -a/b \).  }:
\begin{equation*}\label{condition:Inada}
    \lim_{x\to \infty} U'(x)=0, \quad \lim_{x\to 0} U'(x)=\infty.
\end{equation*}

%To ensure that the utility function is well-defined and that the wealth process \(X_t\) remains strictly positive, we assume \(x_{\min} \ge 0\).  
%This restriction precludes bankruptcy and reflects standard modeling practice in continuous-time portfolio choice under HARA preferences.\footnote{On its domain, \( U \) is twice continuously differentiable, strictly increasing, and strictly concave. It also satisfies the following \textit{Inada} conditions.\footnote{See \citet[§3.2]{pennacchi2008theory} for further discussion of the properties required of utility functions.}:
%\[
%U'(x) \to \infty \text{ as } x \downarrow x_{\min}, \qquad U'(x) \to 0 \text{ as } x \to \infty.
%\]} Note that this class includes the Constant Relative Risk Aversion (CRRA) utility for \( a = 0, b = 1 \), and the Constant Absolute Risk Aversion (CARA) utility for \( \gamma = 1, b = 1 \).

We consider control pairs $(\boldsymbol{\pi}, C)$ that are both economically sensible and mathematically tractable and denote the admissible set by $\mathcal{A}(x,\mathbf{y})$.\footnote{Specifically, we define the admissible set $\mathcal{A}(x,\mathbf{y})$ in Online Appendix \ref{sec:regularity conditions}.} Then the investor’s optimization problem involves dynamically choosing both the portfolio weight vector \( \boldsymbol{\pi}_t \in \mathbb{R}^n \) and the consumption rate \( C_t \), conditional on the current state \( \mathbf{Y}_t \) and wealth \( X_t \). In particular, the value function \( V(t, x, \mathbf{y}) \) is defined as the maximum expected utility achievable when starting from wealth \( x \) and state \( \mathbf{y} \) at time \( t \):
\begin{align}
    V(t, x, \mathbf{y}) = \sup_{(\boldsymbol{\pi}_s, C_s) \in {\mathcal{A}(x,\mathbf{y})}} \mathbb{E}_{t,x,\mathbf{y}}\!\left[
\int_t^T e^{-\delta (s-t)}\, U\bigl(C_s\bigr)\,ds
\;+\;
\kappa\,e^{-\delta (T-t)}\, U\bigl(X_T\bigr)
\right], \label{eq:value_function}
\end{align}
subject to wealth dynamics  \eqref{eq:wealth} and constraint \eqref{constraint:non-negative_wealth}.

\subsection{Pontryagin's Maximum Principle}\label{sec:PMP_multiasset_state_costate}

We introduce Pontryagin's Maximum Principle (PMP) for solving the continuous-time portfolio choice problem formulated in \eqref{eq:value_function}. In general, PMP provides necessary optimality conditions through a system of forward–backward stochastic differential equations (FBSDEs), coupling the dynamics of the state variables with those of the associated costate (adjoint) processes.

In our problem, the system state consists of the investor's wealth \( X_t \) and the exogenous state variables \( \mathbf{Y}_t \in \mathbb{R}^k \). Specifically, for a given optimal control pair \( (\boldsymbol{\pi}_t^*, C_t^*) \), the wealth dynamics and state processes  are governed by the forward SDEs
\begin{align}
dX_t^*
&=
\left[
X_t^*\,\bigl(r(t,\mathbf{Y}_t) + \boldsymbol{\pi}_t^{*\top}\bigl(\boldsymbol{\mu}(t,\mathbf{Y}_t) - r(t,\mathbf{Y}_t)\mathbf{1}\bigr)\bigr) - C_t^*
\right] dt
+ X_t^*\, \boldsymbol{\pi}_t^{*\top} \boldsymbol{\sigma}(t,\mathbf{Y}_t)\, d\mathbf{W}_t^X,
\label{eq:X_forward_optimal}
\\[6pt]
d\mathbf{Y}_t
&=
\boldsymbol{\mu}_Y(t,\mathbf{Y}_t)\,dt + \boldsymbol{\sigma}_Y(t,\mathbf{Y}_t)\,d\mathbf{W}_t^Y,
\label{eq:Y_forward}
\end{align}
with initial conditions \( X_0^* = x_0 > x_{\min} \) and \( \mathbf{Y}_0 = \mathbf{y}_0 \in \mathbb{R}^k \). The corresponding costate processes represent the marginal values with respect to the state variables. In particular, we have two costate processes which are the scalar process, \( \lambda_t := V_x(t, X_t, \mathbf{Y}_t) \), corresponding to the sensitivity to wealth, and the vector process \( \boldsymbol{\eta}_t := V_{\mathbf{y}}(t, X_t, \mathbf{Y}_t) \in \mathbb{R}^k \), capturing the sensitivity to the exogenous state. 

%In the case with control-dependent diffusion, the Hamiltonian must also account for the interaction with costate diffusion terms (i.e., the \( \mathbf{Z} \)-components of the adjoint BSDEs).

To characterize optimal controls, PMP introduces a Hamiltonian function that incorporates both the instantaneous utility and the effects of the controls on the system's drift and diffusion.  Following the standard stochastic control formulation \citep[e.g.,][]{yong2012stochastic}, the Hamiltonian \( \mathcal{H} \) is defined by
\begin{equation*}\label{eq:H_complete_revised}
\begin{aligned}
\mathcal{H} :=
e^{-\delta t}\, U(C)
+ \lambda \left[ x \bigl(r + \boldsymbol{\pi}^\top (\boldsymbol{\mu} - r\mathbf{1})\bigr) - C \right]
+ \boldsymbol{\eta}^\top \boldsymbol{\mu}_Y 
+ \mathbf{Z}^{\lambda X} \cdot (x\, \boldsymbol{\sigma}^\top \boldsymbol{\pi})
+ (\mathbf{Z}^{\eta Y})^\top \boldsymbol{\sigma}_Y,
\end{aligned}
\end{equation*}
where \( \lambda \) and \( \boldsymbol{\eta} \) are the costate processes,
 \( \mathbf{Z}^{\lambda X} \), \( \mathbf{Z}^{\eta X} \), and \( \mathbf{Z}^{\eta Y} \) represent the diffusion coefficients of the BSDEs associated with \( \lambda_t \) and \( \boldsymbol{\eta}_t \),
 and all functions are evaluated at \( (t, x, \mathbf{y}, \boldsymbol{\pi}, C) \), though we suppress this for simplicity of notation. 

Complementing the forward dynamics, the optimal costate processes \( (\lambda_t^*, \boldsymbol{\eta}_t^*) \) satisfy a system of BSDEs, driven by the partial derivatives of the Hamiltonian \( \mathcal{H} \) with respect to the state variables:
\begin{align}
-d\lambda_t^*
&=
\mathcal{H}_x\bigl(t, X_t^*, \mathbf{Y}_t, \boldsymbol{\pi}_t^*, C_t^*, \lambda_t^*, \boldsymbol{\eta}_t^*, \mathbf{Z}_t^* \bigr)\,dt
-
\mathbf{Z}_t^{\lambda X*}\,d\mathbf{W}_t^X
-
\mathbf{Z}_t^{\lambda Y*}\,d\mathbf{W}_t^Y,
\label{eq:lambda_BSDE}
\\[6pt]
-d\boldsymbol{\eta}_t^*
&=
\mathcal{H}_{\mathbf{y}}\bigl(t, X_t^*, \mathbf{Y}_t, \boldsymbol{\pi}_t^*, C_t^*, \lambda_t^*, \boldsymbol{\eta}_t^*, \mathbf{Z}_t^* \bigr)\,dt
-
\mathbf{Z}_t^{\eta X*}\,d\mathbf{W}_t^X
-
\mathbf{Z}_t^{\eta Y*}\,d\mathbf{W}_t^Y.
\nonumber\label{eq:eta_BSDE}
\end{align}
Here, \( \mathcal{H}_x := \partial \mathcal{H} / \partial x \) and \( \mathcal{H}_{\mathbf{y}} := \nabla_{\mathbf{y}} \mathcal{H} \) denote the partial derivatives of the Hamiltonian with respect to wealth and the state vector, respectively. The terms \( \mathbf{Z}_t^{\lambda X} \in \mathbb{R}^{1 \times n} \), \( \mathbf{Z}_t^{\lambda Y} \in \mathbb{R}^{1 \times k} \), \( \mathbf{Z}_t^{\eta X} \in \mathbb{R}^{k \times n} \), and \( \mathbf{Z}_t^{\eta Y} \in \mathbb{R}^{k \times k} \) are adapted processes\footnote{These terms arise from the martingale representation theorem applied to the backward dynamics of the costate variables. See \citet{karatzas1991brownian} for a detailed treatment of the martingale representation theorem.} and collectively denoted by \( \mathbf{Z}_t \). Note that \( \mathcal{H}_x \) and \( \mathcal{H}_{\mathbf{y}} \) generally depend not only on the state, costate, and control variables, but also on the \(\mathbf{Z}_t\) terms. This reflects the coupling between the forward and backward systems, which arises from the second-order sensitivity of the value function. The terminal conditions for the BSDEs are determined by the structure of the objective function:
\[
\lambda_T^* = \kappa\,e^{-\delta T}\,U'\bigl(X_T^*\bigr), \qquad \boldsymbol{\eta}_T^* = \mathbf{0}.
\]
The terminal condition $\eta_T=0$ arises because the terminal utility depends only on $X_T$ 
and not directly on the state $Y_T$.

%For the \(\boldsymbol{\pi}\)-FOC, only the drift term \(\lambda x \boldsymbol{\pi}^\top(\boldsymbol{\mu}(t,\mathbf{Y}_t)-r(t,\mathbf{Y}_t)\mathbf{1})\) and the diffusion-coupling term \(\mathbf{Z}^{\lambda X}(x\,\boldsymbol{\sigma}(t,\mathbf{Y}_t)^\top \boldsymbol{\pi})\) matter for the maximization step.

%The next step is to derive the first-order conditions for optimality by differentiating \( \mathcal{H} \) with respect to the controls \( (\boldsymbol{\pi}, C) \) and solving the resulting system jointly with the forward-backward SDEs.

The core requirement of PMP is that the optimal pair \( (\boldsymbol{\pi}_t^*, C_t^*) \) maximizes the Hamiltonian \( \mathcal{H} \) \eqref{eq:H_complete_revised} pointwise in time, almost surely, given the optimal state and costate processes \( (X_t^*, \mathbf{Y}_t, \lambda_t^*, \boldsymbol{\eta}_t^*, \mathbf{Z}_t^*) \).   We derive the following FOC for the consumption rate $C_t$\footnote{The FOC is valid because of the Inada condition \eqref{condition:Inada}. If $\lim_{x\to 0} U'(x)<\infty$, we need to consider also the possibility of a corner solution,  $C_t^*=0.$} 
\begin{equation*}
e^{-\delta t}\,U'(C_t^*) = \lambda_t^*.
\label{eq:foc_C}
\end{equation*}
This implies that optimal consumption is the inverse function of marginal utility, \( C_t^* = (U')^{-1}(e^{\delta t} \lambda_t^*) \). For the portfolio weights \( \boldsymbol{\pi}_t \), we differentiate \( \mathcal{H} \) with respect to \( \boldsymbol{\pi} \), accounting for both the drift and diffusion terms that depend on the control. %From \eqref{eq:H_complete_revised}, the relevant expressions are:
%\[
%\lambda x\, \boldsymbol{\pi}^\top (\boldsymbol{\mu} - r \mathbf{1})
%\quad \text{and} \quad
%\mathbf{Z}^{\lambda X} \cdot (x\, \boldsymbol{\sigma}^\top \boldsymbol{\pi}).
%\]
Setting the gradient \( \nabla_{\boldsymbol{\pi}} \mathcal{H} = \mathbf{0} \) and assuming $X_t^*\ne 0$, we obtain
\begin{equation}
\lambda_t^* \bigl( \boldsymbol{\mu}(t,\mathbf{Y}_t) - r(t,\mathbf{Y}_t)\, \mathbf{1} \bigr)
+ \boldsymbol{\sigma}(t,\mathbf{Y}_t)\, (\mathbf{Z}_t^{\lambda X *})^\top
= \mathbf{0}.
\label{eq:foc_pi}
\end{equation}
This condition implicitly defines the optimal portfolio \( \boldsymbol{\pi}_t^* \) through its dependence on the costate \( \lambda_t^* \) and the martingale component \( \mathbf{Z}_t^{\lambda X *} \), which captures the sensitivity of the marginal value of wealth to risk sources.

To obtain an explicit expression for \( \boldsymbol{\pi}_t^* \), we need to relate the coefficient \( \mathbf{Z}^{\lambda X *} \) of the martingale component in BSDE \eqref{eq:lambda_BSDE} to the derivatives of the value function. This relationship can be derived by applying Itô’s lemma to \( \lambda_t = V_x(t, X_t, \mathbf{Y}_t) \) and matching terms with the dynamics of the costate process in \eqref{eq:lambda_BSDE}.\footnote{The full derivation is presented in Online Appendix~\ref{sec:ito_derivation}.} This procedure yields an expression for \( \mathbf{Z}^{\lambda X *} \) in terms of the second-order derivatives \( V_{xx} \), \( V_{x\mathbf{y}} \), the optimal controls \( \boldsymbol{\pi}_t^* \), and model parameters \( (\boldsymbol{\sigma}, \boldsymbol{\sigma}_Y, \Psi, \boldsymbol{\rho}) \). Substituting this into the first-order condition \eqref{eq:foc_pi} and solving the resulting equation for \( \boldsymbol{\pi}_t^* \) leads to the well-known optimal portfolio formula, originally derived by \citet{merton1973intertemporal}:

\begin{equation*}
\label{eq:pi_star}
\begin{aligned}
\boldsymbol{\pi}_t^{*}
&= -\frac{V_x}{X_t V_{xx}}\,
    \Sigma^{-1}(t,\mathbf{Y}_t)\,
    \bigl(\boldsymbol{\mu}(t,\mathbf{Y}_t)-r(t,\mathbf{Y}_t)\,\mathbf{1}\bigr) \\
&\qquad\qquad -\frac{1}{X_t V_{xx}}\,
    \Sigma^{-1}(t,\mathbf{Y}_t)\,
    \boldsymbol{\sigma}(t,\mathbf{Y}_t)\, \boldsymbol{\rho}\, \boldsymbol{\sigma}_Y(t,\mathbf{Y}_t)\,
    V_{x\mathbf{Y}}(t,X_t,\mathbf{Y}_t).
\end{aligned}
\end{equation*}

Here, \( \Sigma(t,\mathbf{Y}_t) := \boldsymbol{\sigma}(t,\mathbf{Y}_t)\, \Psi\, \boldsymbol{\sigma}(t,\mathbf{Y}_t)^\top \) is the instantaneous covariance matrix of risky asset returns. Notice that all the derivatives in the optimal portfolio can be represented by the costate processes and their derivatives. Specifically, we have \( V_x = \lambda_t^* \), \( V_{xx} = \partial_x \lambda_t^* \), and \( V_{x\mathbf{y}} = \nabla_{\mathbf{y}} \lambda_t^* \in \mathbb{R}^{k} \), and they are evaluated along the optimal trajectory \( (t, X_t^*, \mathbf{Y}_t) \).

This expression decomposes the optimal portfolio into two components: a \emph{myopic demand}, which optimizes the static trade-off between expected return and risk under fixed investment opportunities, and an \emph{intertemporal hedging demand}, which adjusts the portfolio to hedge against stochastic shifts in future investment opportunities. The hedging component is shaped by the correlation structure \( \boldsymbol{\rho} \) between asset returns and state innovations, and by the cross-derivative \( V_{x\mathbf{y}} \), which captures how the marginal value of wealth responds to changes in the economic state.

While both PMP and the Hamilton-Jacobi-Bellman (HJB) equation characterize optimality in continuous-time control problems, they differ fundamentally in formulation and computational implications. The HJB approach is based on dynamic programming, yielding a single  nonlinear PDE for the value function $V(t,x,\mathbf{y})$. Solving the HJB equation requires evaluating first- and second-order derivatives of $V$ over the entire time–state domain and enforcing the PDE residual—the deviation of the left-hand side of the equation from zero—to vanish. This provides both necessary and sufficient conditions for optimality under regularity but becomes computationally intractable when the state dimension is high due to the curse of dimensionality.

By contrast, PMP formulates optimality through a two-point boundary value problem in the form of FBSDEs. The forward SDEs describe the evolution of state variables, while the backward SDEs define the costate processes and their terminal conditions. Optimal controls are obtained by maximizing the Hamiltonian pointwise in time, given the current state and costate values. PMP provides first-order necessary conditions without requiring a global solution for the value function, and its formulation naturally supports stochastic simulation and pathwise gradient computation via BPTT.

For high-dimensional problems, this structural difference is critical. The HJB method must approximate a PDE over the full state grid or via global function approximation, whereas PMP can estimate costates along Monte Carlo trajectories, reducing the computational burden. In this paper, we exploit this property by embedding PMP's first-order conditions directly into the learning loop, avoiding explicit HJB residual minimization and enabling scalability. Moreover, while neural PDE solvers such as the Deep BSDE method approximate HJB solutions indirectly via simulated trajectories, our PMP-based approach bypasses the global PDE entirely, allowing more accurate recovery of structural components, particularly the intertemporal hedging term, in high-dimensional portfolio problems.

\subsubsection{Application: Multivariate OU Process}
\label{sec:multiasset_multifactor_KO}

We apply the PMP framework to a \emph{multi-factor} environment. Specifically, we  consider \( n \) risky assets whose excess returns are driven by a \( k \)-dimensional Ornstein--Uhlenbeck (OU) factor vector \( \mathbf{Y}_t \in \mathbb{R}^k \). Our PMP formulation remains valid even when the state variables are not OU processes, but we use the OU specification here because it admits semi-analytic benchmark solutions and facilitates clear comparison of numerical accuracy. 

The risk premium for asset \( i = 1, \dots, n \) is given by \( \sigma_i (\boldsymbol{\alpha}_i^\top \mathbf{Y}_t) \), where \( \boldsymbol{\alpha}_i \in \mathbb{R}^k \) is the vector of factor loadings for asset \( i \), and \( \sigma_i > 0 \) is the asset’s constant volatility. This formulation captures richer interactions between multiple sources of risk and asset returns.
The factor process \( \mathbf{Y}_t \) evolves according to 
\[
d\mathbf{Y}_t = \kappa_Y (\boldsymbol{\theta}_Y - \mathbf{Y}_t)\, dt + \boldsymbol{\sigma}_Y\, d\mathbf{W}_t^Y,
\]
where \( \kappa_Y \in \mathbb{R}^{k \times k} \) is the mean-reversion matrix, \( \boldsymbol{\theta}_Y \in \mathbb{R}^k \) is the long-run mean, and \( \boldsymbol{\sigma}_Y \in \mathbb{R}^{k \times k} \) is a diagonal matrix of factor-specific volatilities. The Brownian motion \( \mathbf{W}_t^Y \in \mathbb{R}^k \) driving the factors has instantaneous covariance \( \mathrm{Cov}(d\mathbf{W}_t^Y) = \Phi\,dt \). 

The price $S_t^i$ of each asset follows
\(
\frac{dS_t^i}{S_t^i} = \bigl(r + \sigma_i\, \boldsymbol{\alpha}_i^\top \mathbf{Y}_t \bigr)\, dt + \sigma_i\, dW_t^{X,i}, \quad i = 1, \dots, n,
\)
where \( \mathbf{W}_t^X \in \mathbb{R}^n \) is a Brownian motion with instantaneous covariance \( \mathrm{Cov}(d\mathbf{W}_t^X) = \Psi\, dt \). The asset and factor Brownian motions may be correlated,  that is, \( \mathrm{Cov}(dW_t^{X,i}, dW_t^{Y,j}) = \rho_{ij}\,dt \), with cross-correlation matrix \( \boldsymbol{\rho} \in \mathbb{R}^{n \times k} \).

% Given portfolio weights \( \boldsymbol{\pi}_t = (\pi_t^1, \dots, \pi_t^n)^\top \) and consumption rate \( C_t \ge 0 \), the investor’s wealth evolves
% \(
% dX_t = \left[ X_t \left( r + \sum_{i=1}^n \pi_t^i \sigma_i \boldsymbol{\alpha}_i^\top \mathbf{Y}_t \right) - C_t \right] dt + X_t \sum_{i=1}^n \pi_t^i \sigma_i\, dW_t^{X,i}.
% \)

PMP's first-order conditions yield
\[ C_t^* = (U')^{-1}(e^{\delta t} \lambda_t^*),\quad 
\boldsymbol{\pi}_t^* = -\;\frac{1}{X_t^*\, \partial_x \lambda_t^*} \, \Sigma^{-1} \left\{
\lambda_t^*\, \boldsymbol{\sigma} \mathbf{A} \mathbf{Y}_t + \boldsymbol{\sigma} \boldsymbol{\rho} \boldsymbol{\sigma}_Y \, \partial_{\mathbf{Y}} \lambda_t^*
\right\}, \label{eq:optimal_policies_FOC}
\]
where
 \( \boldsymbol{\sigma} = \operatorname{diag}(\sigma_1, \dots, \sigma_n) \in \mathbb{R}^{n \times n} \) is the volatility matrix,
 \( \Sigma = \boldsymbol{\sigma} \Psi \boldsymbol{\sigma}^\top  \in \mathbb{R}^{n \times n} \) is the return covariance matrix,
 \( \mathbf{A} = (\boldsymbol{\alpha}_1, \dots, \boldsymbol{\alpha}_n)^\top \in \mathbb{R}^{n \times k} \) is the factor loading matrix,
 \( \boldsymbol{\sigma}_Y \in \mathbb{R}^{k \times k} \) is the factor volatility matrix, and
 \( \partial_{\mathbf{Y}} \lambda_t^* \in \mathbb{R}^k \) is the gradient of the costate with respect to \( \mathbf{Y}_t \).

The optimal portfolio can be separated into two components. The first component is a \emph{myopic demand}, driven by the instantaneous excess returns \( \boldsymbol{\sigma} \mathbf{A} \mathbf{Y}_t \) and the risk-return trade-off under current conditions. The second one is an \emph{intertemporal hedging demand}, which partially offsets the effects of changes in investment opportunities, depending on the sensitivity of marginal utility to changes in \( \mathbf{Y}_t \) (\( \partial_{\mathbf{Y}} \lambda_t^* \)), and the correlation structure between asset and factor shocks (\( \boldsymbol{\rho} \), \( \Psi \), and \( \Phi \)). 

For the special case with no intermediate consumption (\( C_t \equiv 0 \)), the model admits analytic solutions via an exponential-quadratic value function ansatz, leading to a system of matrix Riccati ODEs. These closed-form (exponential–quadratic) solutions are standard; see \citet{kim1996dynamic} and \citet{liu2007portfolio}. We use them as analytical benchmarks.

\subsubsection{Application: Non-Affine Models}\label{sec:non-affine models}

We can extend the model to allow for non-affine asset return dynamics, providing a more flexible representation of the relationship between state variables and risk premia. While the OU factor process in Section \ref{sec:multiasset_multifactor_KO} offers analytical tractability, many realistic market environments exhibit nonlinear effects that cannot be captured by an affine specification. This subsection introduces such nonlinearities in a controlled manner, enabling us to test the robustness of our PMP-based methods in settings where closed-form solutions are not available.  

We retain the same $k$-dimensional OU process for the state factors, \(\mathbf{Y}_t = [Y_{1,t}, \dots, Y_{k,t}]^T\), where $\kappa_Y, \theta_Y,$ and $\sigma_Y$ are as defined in Section \ref{sec:multiasset_multifactor_KO}. The expected returns of risky assets are modified to 
$$ \boldsymbol{\mu}_S(t, \mathbf{Y}_t) = \text{diag}(\boldsymbol{\sigma}) (\boldsymbol{\alpha} \mathbf{Y}_{\beta,t}) + r\mathbf{1} = \text{diag}(\boldsymbol{\sigma}) \left( \boldsymbol{\alpha} (\mathbf{Y}_t + \boldsymbol{\beta} \mathbf{Y}_t^2) \right) + r\mathbf{1},
$$
where \(\boldsymbol{\alpha}\in\mathbb{R}^{n\times k}\) is the factor loading matrix, \(\boldsymbol{\beta}\) controls the strength and direction of the nonlinear impact of each state variable.\footnote{The multiplication and addition in $\boldsymbol{\alpha} (\mathbf{Y}_t + \boldsymbol{\beta} \mathbf{Y}_t^2)$ is understood as element-wise operations. In addition, $Y_t^2$ denotes the element-wise squaring of $Y_t$. Moreover, the parameter $\boldsymbol{\beta}$ is generated as \(\boldsymbol{\beta} = \beta_{\text{norm}} \cdot \mathbf{u}\), where \(\beta_{\text{norm}} = ||\boldsymbol{\beta}||_2\) sets the overall magnitude of nonlinearity and $\mathbf{u}$ is a randomly generated unit vector (fixed across experiments for consistent). } This specification increases the complexity of the costate dynamics and makes analytical solutions infeasible, motivating a purely numerical approach. 
When \(\boldsymbol{\beta}=0\), the model reduces exactly to the affine specification of Sections \ref{sec:multiasset_multifactor_KO}, allowing us to check that our numerical methods converge to the known affine benchmark.

From the perspective of PMP, the introduction of non-affine terms changes the functional form of the drift $\mu(t,\mathbf{Y}_t)$ in the Hamiltonian, but the overall structure of the first-order conditions remains identical. The optimal controls are still given by
\[ C_t^* = (U')^{-1}(e^{\delta t} \lambda_t^*),\quad 
\boldsymbol{\pi}_t^* = -\;\frac{1}{X_t^*\, \partial_x \lambda_t^*} \, \Sigma^{-1}(t, \mathbf{Y}_t) \left\{
\lambda_t^*\,(\mu(t, \mathbf{Y}_t)-r\mathbf{1}) + \boldsymbol{\sigma} (t, \mathbf{Y}_t)\boldsymbol{\rho} \boldsymbol{\sigma}_Y (t, \mathbf{Y}_t)\, \partial_{\mathbf{Y}} \lambda_t^*
\right\},
\]
where $\mu(t, \mathbf{Y}_t), \sigma(t, \mathbf{Y}_t)$, and $\Sigma(t, \mathbf{Y}_t)$ incorporate the nonlinear effects of the state variables. 

This non-affine formulation plays a dual role in our study. When $\boldsymbol{\beta}\rightarrow 0$, the PMP-based methods need to reproduce the affine benchmark results from Section \ref{sec:multiasset_multifactor_KO}, providing a direct accuracy check. In addition, for a larger $\boldsymbol{\beta}$, the problem departs from the affine setting, allowing us to evaluate the stability and accuracy of the algorithms when no closed-form solution exists. 

In Section \ref{sec:non_affine_robustness}, we examine numerical results of this non-affine model with various values of $\boldsymbol{\beta}$, examining how the learned policies deviate from the affine benchmark as the degree of nonlinearity increases, and assessing how effectively the PMP-based approach captures both the myopic and intertemporal hedging demands in this more general environment.

\subsection{PMP and Neural Networks}

While PMP provides a set of first-order optimality conditions through the coupled FBSDE system, solving these equations exactly remains challenging in high-dimensional portfolio choice problems. The costate processes $\lambda_t=V_x(t, X_t, \mathbf{Y_t})$ and $\mathbf{\eta}_t=V_\mathbf{y}(t, X_t, \mathbf{Y_t})$ depend on derivatives of the value function that are typically unavailable in closed form. Traditional numerical approaches attempt to discretize the state space and solve the resulting system of equations, but this approach quickly becomes infeasible as the number of state variables or assets increases.

We address this challenge by developing a novel \emph{gradient-based} numerical approach using deep learning. The main idea is to parameterize the investor’s policy pair \( (\boldsymbol{\pi}_t, C_t) \) with neural networks:
$$\boldsymbol{\pi}_t=\boldsymbol{\pi}_\theta(t, x, \mathbf{y}), \quad C_t=C_\phi(t, x, \mathbf{y}),$$
where $\theta$ and $\phi$ are the trainable parameters of the respective networks. The role of these networks is to approximate the functional form of the optimal control decisions implied by the PMP first-order conditions. 

From PMP, the optimal portfolio can be expressed in terms of costate and its partial derivatives as in equations \eqref{eq:foc_C} and \eqref{eq:pi_star}. While the network parameters $\theta$ and $\phi$ are updated using the stochastic gradient method, the costate variables $(\lambda_t, \mathbf{\eta}_t)$ and their derivatives can be computed directly from the simulation paths through the automatic differentiation function of deep learning platforms such as Pytorch. All sample paths thus provide unbiased estimates of the necessary PMP quantities, allowing us to bypass the need for a global solution of the value function.

Importantly, this structure allows us to separate the myopic demand and intertemporal hedging demand during training. In particular, the myopic term depends on the current risk premia and the local curvature $V_{xx}$, while the hedging term involves the cross-derivative $V_{x\mathbf{y}}$ and the correlation structure between asset returns and state shocks. Embedding this decomposition into network training enables explicit enforcement of the PMP conditions for both components, rather than relying solely on implicit utility maximization.

The objective function for training is the expected utility, 
\begin{align*}
J(\theta, \phi) := \mathbb{E}\!\left[
\int_0^T e^{-\delta s}\, U\bigl(C_\phi(s,X_s, \mathbf{Y}_s)\bigr)\,ds
\;+\;
\kappa\,e^{-\delta T}\, U\bigl(X_T\bigr)
\right],\label{eq:objective}
\end{align*}
which is estimated over simulated sample paths of $(X_t, Y_t)$ driven by the policy networks. Automatic differentiation through simulation via BPTT yields unbiased gradients of $J(\theta, \phi)$ with respect to all network parameters, while simultaneously producing the costate estimates needed for PMP projection steps in our algorithm. 

The integration of PMP with neural network parameterizations combines the theoretical structure of continuous-time optimal control with the function approximation power of deep learning. In the next section, we will build on this framework to construct the Pontryagin-Guided Direct Policy Optimization (PG-DPO) algorithm and its Projected variant (P-PGDPO), which leverage BPTT-derived costates to achieve scalable solutions even in high-dimensional portfolio choice problems.

\section{The Method}\label{sec:BPTT_multiasset}

We present a novel machine learning approach to solving the optimal portfolio choice problem with multiple assets and numerous factors. Unlike HJB-based methods that rely on solving a global PDE for the value function, PMP provides pointwise first-order conditions that can be estimated pathwise. This difference makes PMP particularly compatible with Monte Carlo simulation and BPTT, thereby avoiding the main scalability bottlenecks of HJB approaches. 

Let us define the \textit{extended objective function}, \( \widetilde{J}(\theta, \phi) \), as the expected value of the conditional utility \( J(t_0, x_0, \mathbf{y}_0; \theta, \phi) \), evaluated over initial states sampled from a given distribution \( \nu \) on a domain \( \mathcal{D} \). Specifically,
\begin{equation}
\label{eq:extended_value_function}
\widetilde{J}(\theta,\phi)
=
\mathbb{E}_{(t_{0},x_{0}, \mathbf{y}_0)\,\sim\,\nu}
\left[
 J(t_0, x_0, \mathbf{y}_0; \theta, \phi)
\right],
\end{equation}
where $J(t_0, x_0, \mathbf{y}_0;\theta,\phi)$ is defined by
\begin{equation*}
\label{eq:policy_value_function}
\begin{split}
J(t_0, x_0, \mathbf{y}_0; \theta, \phi) = \mathbb{E}^{\theta, \phi}\Bigg[ & \int_{t_{0}}^{T} e^{-\delta (u-t_0)}\,U\bigl(C_{\phi}(u,X_u, \mathbf{Y}_u)\bigr)\,du \\
& + \kappa\,e^{-\delta (T-t_0)}\,U\bigl(X_T\bigr)
\;\Bigg| \; X_{t_0}=x_0,\, \mathbf{Y}_{t_0}=\mathbf{y}_0 \Bigg].
\end{split}
\end{equation*}
Here, \( \mathbb{E}^{\theta, \phi}[\cdot \,|\, \cdot] \) denotes the expectation under the dynamics induced by policy \( (C_\phi(t,X_t,\mathbf{Y}_t)\) and \( \boldsymbol{\pi}_\theta(t,X_t,\mathbf{Y}_t)) \). Notice that the extended value function can have an initial value at any time $t<T$. By doing this, the learning results of the neural network parameters are robust across a range of initial conditions, which can significantly reduce exploration costs and lead to rapid convergence of the algorithm. 

To update the parameter $(\theta, \phi)$, we apply the BPTT to Monte Carlo simulations of system trajectories initialized from $\nu$. 
A key feature of this approach is that BPTT not only provides unbiased gradient estimates \( \nabla \widetilde{J} \) for policy optimization, but also computes pathwise estimates of Pontryagin costate variables \( (\lambda_k^{(i)}, \boldsymbol{\eta}_k^{(i)}) \), capturing the marginal value sensitivities as prescribed by PMP. Thus, we refer to our approach as \emph{Pontryagin-Guided Direct Policy Optimization (PG-DPO)}. In this section, we first present the baseline PG-DPO algorithm, which is guided by the PMP. Then, we extend the approach to explicitly exploit costate information and stabilize the costate estimates by Monte Carlo averaging. The second approach will be referred to as \emph{Projected PG-DPO (P-PGDPO)}.

\subsection{Baseline PG-DPO}
\label{sec:pgdpo_baseline_detailed}

The baseline PG-DPO algorithm aims to maximize the extended objective function \(\widetilde{J}(\theta, \phi)\) defined in Eq.~\eqref{eq:extended_value_function} using mini-batch stochastic gradient ascent. The baseline PG-DPO directly parameterizes the control policies $\boldsymbol{\pi}_t$ and $C_t$ as neural networks and optimizes their parameters to maximize the extended objective function while satisfying the PMP-implied state-costate dynamics. The state dynamics follow the forward SDEs for wealth $X_t$ and factors $Y_t$, while the PMP-implied costates $\lambda_t=V_x$ and $\boldsymbol{\eta}_t=V_\mathbf{y}$ evolve according to the backward SDEs given by the Hamiltonian's adjoint equations. The policy networks $\boldsymbol{\pi}_\theta$ and $C_\phi$ map the current state $(t, X_t, Y_t)$ to the control variables, which are then used to simulate forward trajectories of $(X_t, Y_t)$. The key idea is to approximate the gradient \(\nabla \widetilde{J}\) by averaging gradients computed on individual sample paths, where each path starts from a randomly drawn initial state.

In each training iteration, Monte Carlo paths for the driving Brownian motions are first generated. These are used to simulate forward trajectories of $X_t$ and $Y_t$ by using the policies generated from the current policy networks. Starting from the terminal conditions $\lambda_T=\kappa e^{-\delta T}U'(X_T)$ and $\eta_T=0$, the costates $\lambda_t$ and $\eta_t$ are then computed backward along each path using PMP backward SDEs. The expected utility $J(\theta, \phi)$ is evaluated from these simulated paths, and gradients with respect to the network parameters are obtained by BPTT through both the forward state evolution and the backward costate computation. Network parameters are updated by stochastic gradient ascent using an optimizer such as Adam. Algorithm \ref{alg:pgdpo_base} describes our baseline PG-DPO method with a pseudo-code.\footnote{We provide the detailed algorithm for the baseline PG-DPO in Online Appendix \ref{app:PGDPO_algorithm}.}

\begin{algorithm}[t]
\caption{Baseline Pontryagin‑Guided Direct Policy Optimization (PG‑DPO)}
\label{alg:pgdpo_base}
\small 
\begin{algorithmic}[1]
 \Statex \textbf{Inputs:} Policy nets $\pi_\theta,\,C_\phi$; learning‑rate schedule $\{\alpha_j\}_{j=0}^{J_{\max}-1}$;
        max iters $J_{\max}$, batch size $M$, time steps $N$; initial‑state distribution $\nu$;
        utility $U$, discount $\delta$, terminal weight $\kappa$;
        market params $(\boldsymbol\mu,\boldsymbol\sigma,r,\boldsymbol\mu_Y,\boldsymbol\sigma_Y \text{ (diagonal diffusion matrix, } k\times k\text{)},\boldsymbol\Omega)$
 \Statex \textbf{Output:} optimized params $(\theta_\text{opt},\phi_\text{opt})$
 \State Initialize $\theta,\phi$
 \For{$j=0$ \textbf{to} $J_{\max}-1$}
  \State Draw mini‑batch $\{(t_0^{(i)},x_0^{(i)},\mathbf y_0^{(i)})\}_{i=1}^{M}\sim\nu$
  \State $\nabla_{\text{batch}}\gets\mathbf 0$
  \For{$i=1$ \textbf{to} $M$}
   \State $(X_0,\mathbf Y_0)\gets(x_0^{(i)},\mathbf y_0^{(i)})$,
          $\Delta t\gets(T-t_0^{(i)})/N$ % Note: t_0^(i) is used here for this path
   \For{$k=0$ \textbf{to} $N-1$}
    \State $t_k\gets t_0^{(i)}+k\Delta t$
    \State $\boldsymbol\pi_k\gets\boldsymbol\pi_\theta(t_k,X_k,\mathbf Y_k)$
    \State $C_k\gets C_\phi(t_k,X_k,\mathbf Y_k)$
    \State Sample $\Delta\mathbf W_k\sim\mathcal N(\mathbf 0,\boldsymbol\Omega\Delta t)$
    \State $\mathbf Y_{k+1}= \mathbf Y_k+\boldsymbol\mu_Y(t_k,\mathbf Y_k)\Delta t
            +\boldsymbol\sigma_Y(t_k,\mathbf Y_k)\,\Delta\mathbf W_k^{\text{Y}}$
    \State $\displaystyle
     X_{k+1}= X_k+\Bigl[X_k\bigl(r(t_k,\mathbf Y_k)
     +\boldsymbol\pi_k^{\!\top}(\boldsymbol\mu(t_k,\mathbf Y_k)-r(t_k,\mathbf Y_k)\boldsymbol 1)\bigr)
     -C_k\Bigr]\Delta t
     +X_k\boldsymbol\pi_k^{\!\top}\boldsymbol\sigma(t_k,\mathbf Y_k)
      \,\Delta\mathbf W_k^{\text{X}}$
   \EndFor
   \State $J^{(i)}=\sum_{k=0}^{N-1} e^{-\delta (t_k - t_0^{(i)})}U(C_k)\Delta t 
          +\kappa e^{-\delta (T - t_0^{(i)})}U(X_N)$ % Corrected discount
   \State $g^{(i)}\gets\nabla_{(\theta,\phi)}J^{(i)}$
   \State $\nabla_{\text{batch}}\gets\nabla_{\text{batch}}+g^{(i)}$
  \EndFor
  \State $(\theta,\phi)\gets(\theta,\phi)+\dfrac{\alpha_j}{M}\nabla_{\text{batch}}$
 \EndFor
 \State \Return $(\theta,\phi)$
\end{algorithmic}
\end{algorithm}

The PG-DPO adheres to the fundamental principle of reinforcement learning: starting from a randomly chosen initial state, it seeks the optimal policy that maximizes the extended value function. We adopt the standard updating rule from backpropagation and take an average of the individual path gradients over the mini-batch. Even though we consider the finite-horizon continuous-time model with multiple assets and state variables, the randomly chosen initial state and the averaging of individual sample paths enable stable convergence. 

We now turn to the differences between our approach and existing neural network–based methods for solving continuous-time control problems. Among the most widely used are the physics-informed neural networks (PINNs) and the Deep BSDE method of \citet{han2016deep} and \citet{E2017Deep}. PINNs aim to directly solve the HJB equation, whereas the Deep BSDE method reformulates the HJB equation as a forward–backward stochastic differential system and trains two neural networks: one to approximate the value function and another to approximate its gradient. %\citet{duarte2024machine} apply this framework to continuous-time finance problems, following the same algorithm structure but designing reverse-engineered environments in which the optimal policy is known by construction. This enables direct pointwise comparison between learned and true policies, even in high-dimensional settings, without relying on analytical benchmarks.

Although the Deep BSDE approach is powerful for approximating HJB solutions, it recovers the control policy only indirectly from the estimated gradient $Z_t$. As a result, the intertemporal hedging term—depending on the mixed derivative $V_{x\mathbf{y}}$—is not explicitly parameterized but instead inferred indirectly through the network approximation of $Z_t$. In practice, this often yields a weak identification of hedging demand, especially in high-dimensional settings. Even minor inaccuracies in $Z_t$ along these gradient directions can become magnified when constructing the policy, particularly when the return covariance matrix $\Sigma$ is ill-conditioned.

Our PG-DPO framework adopts a different strategy. Costates $(\lambda_t, \eta_t)$ and their derivatives are computed pathwise via BPTT, and the policy is then either constructed directly or projected onto the manifold defined by the PMP first-order conditions. This explicit structural enforcement ensures accurate recovery of both the myopic and intertemporal hedging components—even in cases where Deep BSDE captures the myopic demand well but fails to approximate the more complex hedging demand. As shown in Section \ref{sec:numerical_experiments}, this distinction is critical in high-dimensional portfolio optimization, where our approach consistently yields lower RMSE and more stable convergence.

\paragraph{Role of intermediate BPTT values and the ``Pontryagin-guided'' nature.}

A central insight of PG-DPO is that the intermediate quantities produced during BPTT, namely the pathwise estimates of the costates and their derivatives, can themselves be exploited as guiding signals for policy learning. This gives the method its "Pontryagin-guided" nature, that is, rather than allowing the policy networks to search blindly for an optimum, the training is explicitly steered by the PMP first-order conditions. 

By applying the chain rule backward from the final reward \(J^{(i)}\), BPTT inherently calculates the sensitivity of the reward of the intermediate state variables. These sensitivities are the pathwise unbiased estimates to the Pontryagin costates:
\begin{equation*}
    \lambda_k^{(i)}=\frac{\partial J^{(i)}}{\partial X_k^{(i)}},
    \qquad
    \boldsymbol{\eta}_k^{(i)}=\frac{\partial J^{(i)}}{\partial\mathbf Y_k^{(i)}}.
\end{equation*}
To connect this computational recurrence to the formal PMP framework, we  generalize it to an \(\mathcal{F}_{t_k}\)-adapted process \(\lambda_k = \mathbb{E}[\lambda_k^{(i)} \mid \mathcal{F}_{t_k}]\). This transition from a path-specific value to an adapted process reveals a deep structural correspondence, which we formalize below.

\begin{lemma}[Pathwise BPTT Recurrence]\label{lem:pathwise_bptt_rec_full}
Along any simulated trajectory $i$, the BPTT algorithm computes the backward recursion for the pathwise sensitivity $\lambda_k^{(i)}$ as:
\begin{align}\label{eq:pathwise_rec_full}
  \lambda_k^{(i)}&=\lambda_{k+1}^{(i)}
  +\lambda_{k+1}^{(i)}
     \bigl[r_k+\boldsymbol{\pi}_k^{\!\top}(\boldsymbol{\mu}_k-r_k\mathbf1)\bigr]\Delta t
  +\bigl[e^{-\delta(t_k-t_0)}U'(C_k^{(i)})-\lambda_{k+1}^{(i)}\bigr]
     \frac{\partial C_k^{(i)}}{\partial X_k^{(i)}}\,\Delta t \nonumber\\
  &\quad+\lambda_{k+1}^{(i)}\,
        \boldsymbol{\pi}_k^{\!\top}\boldsymbol{\sigma}_k\,
        \Delta\mathbf W_k^{X,(i)}.
\end{align}
\end{lemma}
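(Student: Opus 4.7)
The plan is to unfold the computational graph of the forward simulation defined in Algorithm~\ref{alg:pgdpo_base} and apply the chain rule backward from the path reward $J^{(i)}$ to the wealth $X_k^{(i)}$. Since $\lambda_k^{(i)}$ is defined as $\partial J^{(i)}/\partial X_k^{(i)}$ and BPTT is, by construction, exact reverse-mode automatic differentiation on this graph, the recurrence will follow purely from identifying all paths through which $X_k^{(i)}$ enters $J^{(i)}$ and differentiating the Euler step of the wealth dynamics.

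First I would isolate the two channels by which $X_k^{(i)}$ influences $J^{(i)}$. The direct channel is the running utility at time $t_k$, namely $e^{-\delta(t_k-t_0)}U(C_k^{(i)})\Delta t$, which feels $X_k^{(i)}$ only through $C_k^{(i)}=C_\phi(t_k,X_k^{(i)},\mathbf Y_k^{(i)})$. The indirect channel is the next wealth $X_{k+1}^{(i)}$, which propagates the sensitivity via $\lambda_{k+1}^{(i)}=\partial J^{(i)}/\partial X_{k+1}^{(i)}$. Writing the chain rule explicitly,
\begin{equation*}
\lambda_k^{(i)}
= e^{-\delta(t_k-t_0)}U'(C_k^{(i)})\,\frac{\partial C_k^{(i)}}{\partial X_k^{(i)}}\Delta t
 + \lambda_{k+1}^{(i)}\,\frac{\partial X_{k+1}^{(i)}}{\partial X_k^{(i)}}.
\end{equation*}

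Next I would differentiate the Euler update
\begin{equation*}
X_{k+1}^{(i)} = X_k^{(i)} + \bigl[X_k^{(i)}(r_k+\boldsymbol\pi_k^{\top}(\boldsymbol\mu_k-r_k\mathbf 1))-C_k^{(i)}\bigr]\Delta t
 + X_k^{(i)}\boldsymbol\pi_k^{\top}\boldsymbol\sigma_k\,\Delta\mathbf W_k^{X,(i)}
\end{equation*}
with respect to $X_k^{(i)}$, treating $\boldsymbol\pi_k$ as independent of wealth (so that $\boldsymbol\pi_k$ enters only through the network input $(t_k,\mathbf Y_k)$, consistent with the scale-invariant parameterization used for portfolio weights) and pushing the $X_k$-dependence of $C_k^{(i)}$ through explicitly. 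This yields
\begin{equation*}
\frac{\partial X_{k+1}^{(i)}}{\partial X_k^{(i)}}
= 1 + \bigl[r_k+\boldsymbol\pi_k^{\top}(\boldsymbol\mu_k-r_k\mathbf 1)\bigr]\Delta t
 - \frac{\partial C_k^{(i)}}{\partial X_k^{(i)}}\Delta t
 + \boldsymbol\pi_k^{\top}\boldsymbol\sigma_k\,\Delta\mathbf W_k^{X,(i)}.
\end{equation*}
Substituting this back into the chain-rule expression and grouping the two $\partial C_k^{(i)}/\partial X_k^{(i)}$ contributions gives exactly \eqref{eq:pathwise_rec_full}.

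The main obstacle is the handling of the control's dependence on $X_k^{(i)}$. The clean form of \eqref{eq:pathwise_rec_full} relies on the fact that $\partial \boldsymbol\pi_k/\partial X_k^{(i)}=0$ under the paper's parameterization of portfolio weights (a scale-invariant choice natural for Merton-type problems); if one instead allowed $\boldsymbol\pi_k$ to depend on $X_k^{(i)}$, extra terms of the form $\lambda_{k+1}^{(i)}X_k^{(i)}(\partial\boldsymbol\pi_k^{\top}/\partial X_k^{(i)})[(\boldsymbol\mu_k-r_k\mathbf 1)\Delta t+\boldsymbol\sigma_k\Delta\mathbf W_k^{X,(i)}]$ would appear. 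I would therefore open the proof by fixing this convention, after which the remainder is a mechanical unwinding of the chain rule. Everything else---the appearance of the drift-like term $\lambda_{k+1}^{(i)}[r_k+\boldsymbol\pi_k^{\top}(\boldsymbol\mu_k-r_k\mathbf 1)]\Delta t$, the mismatch bracket $[e^{-\delta(t_k-t_0)}U'(C_k^{(i)})-\lambda_{k+1}^{(i)}]$ multiplying $\partial C_k^{(i)}/\partial X_k^{(i)}$, and the martingale increment $\lambda_{k+1}^{(i)}\boldsymbol\pi_k^{\top}\boldsymbol\sigma_k\Delta\mathbf W_k^{X,(i)}$---then falls out by inspection, matching the discrete analog of the Hamiltonian adjoint equation $-d\lambda_t=\mathcal H_x\,dt-\mathbf Z^{\lambda X}d\mathbf W^X$ presented in \eqref{eq:lambda_BSDE}.
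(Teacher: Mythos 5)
Your proposal is correct and follows essentially the same route as the paper's proof: decompose $J^{(i)}$ into the current-step utility plus the future reward, apply the chain rule through $C_k^{(i)}$ and $X_{k+1}^{(i)}$, differentiate the Euler step of the wealth dynamics, and regroup the two $\partial C_k^{(i)}/\partial X_k^{(i)}$ contributions. Your explicit remark that the clean form requires $\partial\boldsymbol{\pi}_k/\partial X_k^{(i)}=0$ is a point the paper leaves implicit (its policy network does take $X_k$ as an input), so stating that convention up front is a worthwhile addition rather than a deviation.
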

Lemma~\ref{lem:pathwise_bptt_rec_full} shows that BPTT naturally recovers the discrete-time adjoint dynamics of PMP along each simulated path. In the limit $\Delta t \to 0$, these recursions converge to the continuous-time PMP adjoint equations, with an $O(\Delta t)$ discretization error. 
This establishes that the backward sensitivities obtained by BPTT align with the PMP costate dynamics.

\begin{lemma}[Quadratic-Covariation Drift]\label{lem:quad_drift_full}
Let $\lambda_{k+1}^{(i)}= \lambda_{k+1} +\widetilde{\mathbf Z}^{\lambda X}_k\Delta\mathbf W_k^{X,(i)} +\widetilde{\mathbf Z}^{\lambda Y}_k\Delta\mathbf W_k^{Y,(i)}$ be the martingale representation of $\lambda_{k+1}^{(i)}$, where $\widetilde{\mathbf Z}^{\lambda X}_k$ is a row vector. The conditional expectation of the stochastic term in Eq.~\eqref{eq:pathwise_rec_full} is:
\[
  \mathbb{E}\!\Bigl[
     \lambda_{k+1}^{(i)}(\boldsymbol{\pi}_k^{\!\top}\boldsymbol{\sigma}_k
     \,\Delta\mathbf W_k^{X,(i)})\Bigm|\mathcal F_{t_k}\Bigr]
  \;=\;
     \bigl(\widetilde{\mathbf Z}^{\lambda X}_k
           +\widetilde{\mathbf Z}^{\lambda Y}_k\boldsymbol{\rho}_{Y\!X}\bigr)
     (\boldsymbol{\sigma}_k^{\!\top}\boldsymbol{\pi}_k)\,\Delta t.
\]
\end{lemma}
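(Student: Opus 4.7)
}

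The plan is to substitute the martingale representation of $\lambda_{k+1}^{(i)}$ directly into the conditional expectation and then exploit the orthogonality and covariation properties of Brownian increments. Writing
\[
\lambda_{k+1}^{(i)}\,\bigl(\boldsymbol{\pi}_k^{\!\top}\boldsymbol{\sigma}_k\Delta\mathbf W_k^{X,(i)}\bigr)
= \lambda_{k+1}\,\bigl(\boldsymbol{\pi}_k^{\!\top}\boldsymbol{\sigma}_k\Delta\mathbf W_k^{X,(i)}\bigr)
+\bigl(\widetilde{\mathbf Z}^{\lambda X}_k\Delta\mathbf W_k^{X,(i)}\bigr)\bigl(\boldsymbol{\pi}_k^{\!\top}\boldsymbol{\sigma}_k\Delta\mathbf W_k^{X,(i)}\bigr)
+\bigl(\widetilde{\mathbf Z}^{\lambda Y}_k\Delta\mathbf W_k^{Y,(i)}\bigr)\bigl(\boldsymbol{\pi}_k^{\!\top}\boldsymbol{\sigma}_k\Delta\mathbf W_k^{X,(i)}\bigr),
\]
I would handle the three terms separately. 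Since $\lambda_{k+1}$, $\boldsymbol{\pi}_k$, and $\boldsymbol{\sigma}_k$ are $\mathcal F_{t_k}$-measurable, the first term drops out after conditioning because $\mathbb{E}[\Delta\mathbf W_k^{X,(i)}\mid\mathcal F_{t_k}]=\mathbf 0$.

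The remaining two terms are quadratic in the Brownian increments and reduce to computations of the conditional second moments
\[
\mathbb{E}\!\bigl[\Delta\mathbf W_k^{X,(i)}\bigl(\Delta\mathbf W_k^{X,(i)}\bigr)^{\!\top}\!\bigm|\mathcal F_{t_k}\bigr]=\Psi\,\Delta t,
\qquad
\mathbb{E}\!\bigl[\Delta\mathbf W_k^{Y,(i)}\bigl(\Delta\mathbf W_k^{X,(i)}\bigr)^{\!\top}\!\bigm|\mathcal F_{t_k}\bigr]=\boldsymbol{\rho}_{Y\!X}\,\Delta t.
\]
To apply these cleanly, I would rewrite each quadratic pairing using the scalar identity $(a\,\Delta\mathbf W)(\boldsymbol{\pi}_k^{\!\top}\boldsymbol{\sigma}_k\Delta\mathbf W)=a\,\Delta\mathbf W(\Delta\mathbf W)^{\!\top}\boldsymbol{\sigma}_k^{\!\top}\boldsymbol{\pi}_k$ for a row vector $a$. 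Taking conditional expectations then yields $\widetilde{\mathbf Z}^{\lambda X}_k\Psi(\boldsymbol{\sigma}_k^{\!\top}\boldsymbol{\pi}_k)\Delta t$ and $\widetilde{\mathbf Z}^{\lambda Y}_k\boldsymbol{\rho}_{Y\!X}(\boldsymbol{\sigma}_k^{\!\top}\boldsymbol{\pi}_k)\Delta t$, respectively; combining them reproduces the claimed identity (with the convention $\Psi=I$ already folded into $\boldsymbol{\sigma}_k$, consistent with the statement).

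The only genuinely delicate point is justifying why $\widetilde{\mathbf Z}^{\lambda X}_k$ and $\widetilde{\mathbf Z}^{\lambda Y}_k$ may be treated as $\mathcal F_{t_k}$-measurable coefficients that commute through the conditional expectation. This follows from the martingale representation theorem applied to $\lambda_{k+1}^{(i)}$ conditional on $\mathcal F_{t_k}$: the representation is unique and the integrand processes are $\mathcal F_{t_k}$-adapted, so in the discrete one-step setting on $[t_k,t_{k+1}]$ the $\widetilde{\mathbf Z}$-terms are in fact $\mathcal F_{t_k}$-measurable constants. Given that this measurability is assured, the rest of the argument is a straightforward bookkeeping exercise on Gaussian second moments. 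The main obstacle, therefore, is notational rather than mathematical: keeping row/column conventions and the correlation blocks $\Psi$, $\Phi$, $\boldsymbol{\rho}_{Y\!X}$ aligned with the ambient definition of $\boldsymbol{\Omega}$, so that the final expression collapses to the compact form stated in the lemma.
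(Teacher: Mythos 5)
Your proof is correct and follows essentially the same route as the paper's own argument: substitute the martingale representation, kill the $\lambda_{k+1}$ term by the martingale property of the increment, and evaluate the two remaining quadratic terms via conditional second moments of the Brownian increments. You are also right to flag the correlation convention: the paper's proof uses $\mathbb{E}\bigl[\Delta\mathbf W_k^{X}(\Delta\mathbf W_k^{X})^{\top}\mid\mathcal F_{t_k}\bigr]=I\,\Delta t$, so the lemma as stated presumes the asset-block correlation $\Psi$ is absorbed elsewhere (otherwise a factor $\Psi$ would appear in the $\widetilde{\mathbf Z}^{\lambda X}_k$ term), exactly as you note.
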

Lemma~\ref{lem:quad_drift_full} establishes the martingale representation,\footnote{The Martingale Representation Theorem, cited from \citet{karatzas1991brownian}, is a continuous-time result. Its application to our discrete-time grid is justified by interpreting the discrete process as a sampling of an underlying continuous process, for which the representation holds on each interval $[t_k, t_{k+1}]$.} 
where $\lambda_{k+1}^{(i)}$ denotes the costate along the $(i)$-th path. 
The recursion decomposes into a deterministic drift term and a stochastic fluctuation term, with the quadratic covariation structure ensuring well-defined conditional expectations. 
Consequently, the randomness in the BPTT-based costate update arises solely from martingale components, which vanish on average, leaving the deterministic drift consistent with PMP. 
This explains why costate estimates converge stably in expectation.

\begin{theorem}[BPTT--PMP Correspondence]\label{thm:bptt_pmp_correspondence_full}
Define the \emph{effective} martingale component as $\mathbf Z^{\lambda X}_k:= \widetilde{\mathbf Z}^{\lambda X}_k + \widetilde{\mathbf Z}^{\lambda Y}_k\boldsymbol{\rho}_{Y\!X}$. The adapted costate process $(\lambda_k)_{k=0}^{N}$ satisfies the discrete-time backward stochastic difference equation:
\begin{align}\label{eq:disc_bsde_costate_full}
  \lambda_k
  &=\lambda_{k+1}
  +\Bigl[
      \lambda_{k+1}
        \bigl(r_k+\boldsymbol{\pi}_k^{\!\top}(\boldsymbol{\mu}_k-r_k\mathbf1)\bigr)
     +\mathbf Z^{\lambda X}_k(\boldsymbol{\sigma}_k^{\!\top}\boldsymbol{\pi}_k)
     +\bigl(e^{-\delta t_k}U'(C_k)-\lambda_{k+1}\bigr)
        \frac{\partial C_k}{\partial X_k}
    \Bigr]\Delta t \nonumber\\
  &\quad-\widetilde{\mathbf Z}^{\lambda X}_k\,\Delta\mathbf W^X_k
        -\widetilde{\mathbf Z}^{\lambda Y}_k\,\Delta\mathbf W^Y_k.
\end{align}
Furthermore, the drift term in the square brackets of Eq.~\eqref{eq:disc_bsde_costate_full} is identical to the partial derivative of the Hamiltonian, $\partial_x\mathcal H$, evaluated at the corresponding state and control variables.
\end{theorem}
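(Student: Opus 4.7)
The plan is to derive the discrete-time BSDE \eqref{eq:disc_bsde_costate_full} by promoting the pathwise recurrence of Lemma~\ref{lem:pathwise_bptt_rec_full} to an $\mathcal{F}_{t_k}$-adapted statement, separating the deterministic drift from the martingale increments via the representation supplied by Lemma~\ref{lem:quad_drift_full}, and then verifying the Hamiltonian identification by direct differentiation. First I would start from \eqref{eq:pathwise_rec_full} and substitute the martingale representation $\lambda_{k+1}^{(i)} = \lambda_{k+1} + \widetilde{\mathbf{Z}}^{\lambda X}_k \Delta\mathbf{W}^{X,(i)}_k + \widetilde{\mathbf{Z}}^{\lambda Y}_k \Delta\mathbf{W}^{Y,(i)}_k$ into every occurrence of $\lambda_{k+1}^{(i)}$ on the right-hand side.

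The deterministic contributions from the $\lambda_{k+1}\cdot(\cdot)\Delta t$ pieces are immediate. The crucial step is expanding the cross product $\lambda_{k+1}^{(i)} \boldsymbol{\pi}_k^{\!\top}\boldsymbol{\sigma}_k \Delta\mathbf{W}^{X,(i)}_k$: the piece containing $\lambda_{k+1}$ is a genuine $\mathcal{F}_{t_{k+1}}$-martingale increment, while the pieces containing $\widetilde{\mathbf{Z}}^{\lambda X}_k$ and $\widetilde{\mathbf{Z}}^{\lambda Y}_k$ yield quadratic-covariation terms whose $\mathcal{F}_{t_k}$-conditional expectation is precisely $(\widetilde{\mathbf{Z}}^{\lambda X}_k + \widetilde{\mathbf{Z}}^{\lambda Y}_k\boldsymbol{\rho}_{YX})(\boldsymbol{\sigma}_k^{\!\top}\boldsymbol{\pi}_k)\Delta t$ by Lemma~\ref{lem:quad_drift_full}. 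With the abbreviation $\mathbf{Z}^{\lambda X}_k := \widetilde{\mathbf{Z}}^{\lambda X}_k + \widetilde{\mathbf{Z}}^{\lambda Y}_k\boldsymbol{\rho}_{YX}$, collecting all drift pieces produces exactly the square-bracketed expression in \eqref{eq:disc_bsde_costate_full}, while the remaining zero-mean fluctuations regroup into $-\widetilde{\mathbf{Z}}^{\lambda X}_k\Delta\mathbf{W}^X_k - \widetilde{\mathbf{Z}}^{\lambda Y}_k\Delta\mathbf{W}^Y_k$, establishing the BSDE form of the adapted process.

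For the Hamiltonian identification, I would directly differentiate $\mathcal{H} = e^{-\delta t}U(C) + \lambda[x(r + \boldsymbol{\pi}^{\!\top}(\boldsymbol{\mu}-r\mathbf{1})) - C] + \boldsymbol{\eta}^{\!\top}\boldsymbol{\mu}_Y + \mathbf{Z}^{\lambda X}\cdot(x\boldsymbol{\sigma}^{\!\top}\boldsymbol{\pi}) + (\mathbf{Z}^{\eta Y})^{\!\top}\boldsymbol{\sigma}_Y$ with respect to $x$, treating the consumption control $C=C_\phi(t,x,\mathbf{y})$ as policy-dependent while holding $\boldsymbol{\pi}$ fixed at its realized value $\boldsymbol{\pi}_k$. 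The three resulting contributions $\lambda(r+\boldsymbol{\pi}^{\!\top}(\boldsymbol{\mu}-r\mathbf{1}))$, $\mathbf{Z}^{\lambda X}\cdot(\boldsymbol{\sigma}^{\!\top}\boldsymbol{\pi})$, and $e^{-\delta t}U'(C)\,\partial_x C - \lambda\,\partial_x C = (e^{-\delta t}U'(C) - \lambda)\,\partial_x C$ reproduce the bracketed drift term-for-term when evaluated at the simulated state-costate-control tuple, establishing the claim that the drift equals $\partial_x\mathcal{H}$.

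The hard part will be the bookkeeping in the BSDE-derivation step: carefully isolating the deterministic drift from the stochastic contributions in the expanded cross product, ensuring that the quadratic-covariation term provided by Lemma~\ref{lem:quad_drift_full} is assigned to the drift while the genuine martingale increments generated by $\lambda_{k+1}\boldsymbol{\pi}_k^{\!\top}\boldsymbol{\sigma}_k\Delta\mathbf{W}^{X,(i)}_k$ and by the raw $\widetilde{\mathbf{Z}}^{\lambda X}_k, \widetilde{\mathbf{Z}}^{\lambda Y}_k$ pieces are repackaged cleanly into the $-\widetilde{\mathbf{Z}}^{\lambda X}_k\Delta\mathbf{W}^X_k - \widetilde{\mathbf{Z}}^{\lambda Y}_k\Delta\mathbf{W}^Y_k$ form. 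In particular, the higher-order products such as $(\widetilde{\mathbf{Z}}^{\lambda X}_k\Delta\mathbf{W}^{X,(i)}_k)(\boldsymbol{\pi}_k^{\!\top}\boldsymbol{\sigma}_k\Delta\mathbf{W}^{X,(i)}_k)$ must be shown to split into their predictable $O(\Delta t)$ compensator plus a mean-zero fluctuation that is absorbed into the martingale increments, so that the identity \eqref{eq:disc_bsde_costate_full} holds with the same $O(\Delta t)$ discretization error already inherited from Lemma~\ref{lem:pathwise_bptt_rec_full}.
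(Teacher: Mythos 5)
Your proposal is correct and follows essentially the same route as the paper: start from the pathwise recurrence of Lemma~\ref{lem:pathwise_bptt_rec_full}, use the martingale representation together with Lemma~\ref{lem:quad_drift_full} to convert the stochastic cross term into the effective drift involving $\mathbf Z^{\lambda X}_k$, separate drift from martingale increments, and then identify the drift with $\partial_x\mathcal H$ by direct differentiation. Your pathwise substitute-then-split bookkeeping is just a more explicit rendering of the paper's take-conditional-expectation-then-separate-martingale-parts step.
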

Theorem \ref{thm:bptt_pmp_correspondence_full} shows that the costate process obtained through BPTT coincides with the PMP adjoint equations, thereby establishing the equivalence between BPTT recursions and the structural optimality conditions of PMP. Put differently, the recursion underlying BPTT is mathematically identical to the adjoint dynamics in PMP.

In sum, we establish the theoretical foundation of the baseline PG-DPO framework. The costates and their pathwise estimates, obtained via BPTT, serve as reliable guiding signals for learning control policies. By exploiting this Pontryagin-guided structure, PG-DPO achieves stable training with provable near-optimality guarantees. Thus, the training procedure is not merely heuristic but grounded in the PMP optimality conditions. While the baseline PG-DPO leverages this link implicitly to obtain effective policy gradients, our Projected PG-DPO variant, introduced in the next section, exploits it explicitly by using the BPTT-computed costates to construct policies aligned with financial theory.

\subsection{Projected PG-DPO}
\label{sec:pgdpo_twostage_detailed} % Retained label, content changed to P-PGDPO

While the baseline PG-DPO algorithm described in Section~\ref{sec:pgdpo_baseline_detailed} effectively optimizes the extended objective \( \widetilde{J} \) through end-to-end training of the policy networks \( (\boldsymbol{\pi}_\theta, C_\phi) \), it often requires many iterations for the networks to converge. The Projected PG-DPO (P-PGDPO) variant provides a more computationally efficient alternative, motivated by the empirical observation that the costate estimates---\( (\lambda_k^{(i)}, \boldsymbol{\eta}_k^{(i)}) \) and their derivatives (e.g., \( \partial_x \lambda_k^{(i)}, \partial_{\mathbf{Y}} \lambda_k^{(i)}) \)---tend to stabilize and achieve reasonable accuracy well before full convergence of the policy networks.

P-PGDPO leverages these early-stabilizing costate estimates to construct near-optimal controls by projecting them onto the manifold defined by PMP’s first-order conditions, thereby avoiding the need for fully converged networks to deploy effective policies. The procedure unfolds in two distinct stages, as described below.

\begin{enumerate}
 \item[Stage 1.] \textbf{Warm-Up Phase for Costate Estimation}

 First, we execute the \textbf{Baseline PG-DPO} (see Section~\ref{sec:pgdpo_baseline_detailed}) for a predetermined, relatively small number of iterations \( K_0 \), optimizing \( \widetilde{J} \). The purpose of this stage is \textbf{not} to obtain fully trained networks \((\boldsymbol{\pi}_\theta, C_\phi)\), but rather to run the BPTT process long enough for the mechanism generating the \textbf{costate estimates} 
\[
\lambda_k^{(i)} = \frac{\partial J^{(i)}(\dots)}{\partial X_k^{(i)}}, \quad 
\boldsymbol{\eta}_k^{(i)} = \frac{\partial J^{(i)}(\dots)}{\partial \mathbf{Y}_k^{(i)}}
\]
and their relevant state-derivatives (e.g., \(\partial_x \lambda_k^{(i)}, \partial_{\mathbf{Y}} \lambda_k^{(i)}\)) to stabilize. We denote the policy parameters at the end of this warm-up phase by \((\theta^*, \phi^*)\).

 \item[Stage 2.] \textbf{Analytic Control Deployment}

After completing the \(K_0\) warm-up iterations, the trained policy networks \((\boldsymbol{\pi}_{\theta^*}, C_{\phi^*})\) are typically \textbf{not used directly} to generate controls during deployment. Instead, for any given state \((t_{deploy}, X_k, \mathbf{Y}_k)\) encountered at deployment, we employ a Monte Carlo averaging procedure to obtain reliable costate estimates for use in the PMP formulas.
\end{enumerate}

Specifically, to compute the \textbf{stabilized costate estimates} at \((t_{deploy}, X_k, \mathbf{Y}_k)\), we proceed as follows: simulate \(M_{\text{MC}}\) forward paths starting from \((t_{deploy}, X_k, \mathbf{Y}_k)\) (or nearby states) up to the horizon \(T\), using the \emph{fixed} policy parameters \((\theta^*, \phi^*)\) obtained in Stage~1. For each simulated path \(j=1,\dots,M_{\text{MC}}\), calculate its realized reward 
\[
J^{(j)}(t_{deploy}, X_k, \mathbf{Y}_k; \theta^*, \phi^*)
\]
(with discounting relative to \(t_{deploy}\), as in Algorithm~\ref{alg:pgdpo_two}), and apply BPTT to obtain the path-specific costate estimates \(\lambda_{deploy}^{(j)}, \partial_x \lambda_{deploy}^{(j)}, \partial_{\mathbf{Y}} \lambda_{deploy}^{(j)}\) at the initial node \(t_{deploy}\). The final stabilized estimates are then computed as Monte Carlo averages:
\begin{equation*}
  \lambda_{deploy} \approx \frac{1}{M_{\text{MC}}} \sum_{j=1}^{M_{\text{MC}}} \lambda_{deploy}^{(j)}, \quad
  \partial_x \lambda_{deploy} \approx \frac{1}{M_{\text{MC}}} \sum_{j=1}^{M_{\text{MC}}} \partial_x \lambda_{deploy}^{(j)}, \quad
  \partial_{\mathbf{Y}} \lambda_{deploy} \approx \frac{1}{M_{\text{MC}}} \sum_{j=1}^{M_{\text{MC}}} \partial_{\mathbf{Y}} \lambda_{deploy}^{(j)} .
\end{equation*}

This averaging step substantially reduces the variance inherent in single-path BPTT estimates from Stage~1 and provides more robust inputs for the PMP formulas.

 \begin{algorithm}[t!]
\caption{Projected PG‑DPO (P-PGDPO): Warm‑up Training \& Deployment}
\label{alg:pgdpo_two} % Retained label

\begin{algorithmic}[1]
%--- Stage 1 -------------------------------------------------
\Statex \textbf{Stage 1: Warm‑up Training}
\Require Inputs of Algorithm~\ref{alg:pgdpo_base}; warm‑up iterations $K_0$
\Ensure Stabilized parameters $(\theta^*,\phi^*)$
\State Run Algorithm~\ref{alg:pgdpo_base} for $K_0$ iterations to obtain $(\theta^*,\phi^*)$

%--- Stage 2 -------------------------------------------------
\Statex
\Statex \textbf{Stage 2: Deployment at state $(t,X,\boldsymbol Y)$}
\Require State $(t,X,\boldsymbol Y)$; parameters $(\theta^*,\phi^*)$;
        Monte‑Carlo rollouts $M_{\mathrm{MC}}$, steps $N^{\prime}$
\Ensure Near‑optimal controls $(\boldsymbol\pi^{\mathrm{PMP}},C^{\mathrm{PMP}})$

\State Initialize $\mathcal L,\mathcal L_x,\mathcal L_Y\gets\varnothing$ % These are example accumulators for costate estimates
\For{$j=1$ \textbf{to} $M_{\mathrm{MC}}$}
    \State Simulate path from $(t,X,\boldsymbol Y)$ with $(\pi_{\theta^*},C_{\phi^*})$ using $N^{\prime}$ steps. Let $t_{0}'=t$.
    % In the loop below, t_n are points along the path from t to T.
    \State Compute realized reward for path $j$:
    \State $\qquad \displaystyle J^{(j)} \gets \sum_{n=0}^{N^{\prime}-1}e^{-\delta (t_n - t_{0}')}U\bigl(C_{t_n}^{(j)}\bigr)\,\Delta t^{\prime} + \kappa e^{-\delta (T - t_{0}')}U\bigl(X_T^{(j)}\bigr)$ % Corrected discount
    \State Evaluate $\lambda_t^{(j)}$, $\partial_x\lambda_t^{(j)}$, $\partial_{\boldsymbol Y}\lambda_t^{(j)}$ (costates at $t_{0}'=t$) via BPTT on $J^{(j)}$ and append to lists $\mathcal L,\mathcal L_x,\mathcal L_Y$.
\EndFor

\State Compute $\hat\lambda_t,\widehat{\partial_x\lambda}_t,\widehat{\partial_{\boldsymbol Y}\lambda}_t$ \Comment{E.g., by averaging estimates from lists $\mathcal L,\mathcal L_x,\mathcal L_Y$}
\State $C^{\mathrm{PMP}}\gets(U')^{-1}\!\bigl(e^{\delta t}\hat\lambda_t\bigr)$ \Comment{Using PMP FOC Eq.~\eqref{eq:foc_C}}
\State Compute $\boldsymbol\pi^{\mathrm{PMP}}$ via PMP FOC Eq.~ \eqref{eq:pi_star} \Comment{Using estimated costates/derivatives}
\State \Return $(\boldsymbol\pi^{\mathrm{PMP}},C^{\mathrm{PMP}})$
\end{algorithmic}
\end{algorithm}

Next, we substitute the \textbf{averaged costate estimates} 
\((\lambda_{deploy}, \partial_x \lambda_{deploy}, \partial_{\mathbf{Y}} \lambda_{deploy})\) 
directly into the analytical formulas derived from the PMP first-order conditions 
(Section~\ref{sec:PMP_multiasset_state_costate}). 

The optimal consumption at \(t_{deploy}\) is given by
\[
  C_{deploy}^{\mathrm{PMP}} 
    = \bigl(U'\bigr)^{-1}\!\bigl(e^{\delta t_{deploy}}\,\lambda_{deploy}\bigr).
\]

The optimal investment portfolio at \(t_{deploy}\) is
\begin{align*}
\boldsymbol{\pi}_{deploy}^{\mathrm{PMP}}
 &= -\frac{1}{X_k \,\partial_x \lambda_{deploy}}\,
    \Sigma^{-1}(t_{deploy}, \mathbf{Y}_k)
    \Biggl\{
      \lambda_{deploy}\,\bigl[\boldsymbol{\mu}(t_{deploy}, \mathbf{Y}_k)
      - r(t_{deploy}, \mathbf{Y}_k)\mathbf{1}\bigr] \nonumber \\[4pt]
 &\qquad\qquad
      + \bigl(\boldsymbol{\sigma}(t_{deploy}, \mathbf{Y}_k)\,
        \boldsymbol{\rho}\,
        \boldsymbol{\sigma}_Y(t_{deploy}, \mathbf{Y}_k)\bigr)\,
        (\partial_{\mathbf{Y}} \lambda_{deploy})
    \Biggr\}.
\end{align*}

This expression coincides with Eq.~\eqref{eq:pi_star}, where the averaged costate estimates \\
\((\lambda_{deploy}, \partial_x \lambda_{deploy}, \partial_{\mathbf{Y}} \lambda_{deploy})\) 
correspond in expectation to the true derivatives 
\((V_x, V_{xx}, V_{x\mathbf{Y}})\) 
at the state \((t_{deploy}, X_k, \mathbf{Y}_k)\).

Finally, the analytically computed controls 
\((C_{deploy}^{\mathrm{PMP}}, \boldsymbol{\pi}_{deploy}^{\mathrm{PMP}})\), 
based on these averaged costates, are applied as the control actions at 
\((t_{deploy}, X_k, \mathbf{Y}_k)\) during deployment.

In sum, the Projected PG-DPO (P-PGDPO) framework proceeds in two stages. 
Stage~1 consists of a short warm-up phase using the baseline PG-DPO procedure to obtain a preliminary policy \((\theta^*, \phi^*)\), during which BPTT produces reasonably stable costate estimates. 
In Stage~2, these estimates are refined via Monte Carlo averaging at deployment time and substituted into the analytical PMP formulas to compute the optimal controls. 
The overall procedure is summarized in Algorithm~\ref{alg:pgdpo_two}.

This two-stage strategy achieves significant computational savings while yielding high-quality, near-optimal controls that explicitly reflect the decomposition of optimal portfolios into myopic and intertemporal hedging components.
In particular, the Monte Carlo averaging in Stage~2 provides reliable inputs for the PMP formulas by mitigating the noise inherent in single-path estimates.

% \subsection{Rationale for the Projected PG-DPO (P-PGDPO) Approach}
% \label{sec:rationale_twostage}
\paragraph{Rationale for the Projected PG-DPO (P-PGDPO) Approach.}

Here we provide a theoretical justification for P-PGDPO. 
The importance of this justification stems from a fundamental challenge facing many simulation-based optimization methods. 
In direct policy optimization, including standard reinforcement learning, a learned policy may achieve a near-optimal objective value (a small \emph{value gap}) while being structurally far from the true optimal policy (a large \emph{policy gap}). 
This discrepancy arises because the objective function is often locally ``flat'' near the optimum: even when the Pontryagin first-order conditions exhibit only small residuals and the resulting utility loss appears at the second-order level, the learned policy can still differ substantially from the true optimum.

Our framework addresses this issue by shifting the focus from the objective value to the satisfaction of Pontryagin's first-order conditions (FOCs). 
The effectiveness of this approach relies on a key property of the HJB equations common in finance: their parabolic nature. 
Parabolic PDEs, much like the heat equation, exhibit a ``smoothing'' effect, ensuring that a well-behaved value function also possesses well-behaved derivatives (costates). 
This is not merely an intuition but a rigorous result of regularity theory in Sobolev spaces, which guarantees that convergence in value entails convergence of its derivatives. 
It is precisely this regularity that allows us to trust the BPTT-derived costates and to reconstruct a robust policy via the PMP formulas.

Building on this expected regularity, we provide a more formal justification for P-PGDPO. We aim to bound the policy gap, $\|\widehat\pi-\pi^*\|$, which is the error between our P-PGDPO policy \(\widehat\pi\) and the true optimal policy \(\pi^*\). The bound depends on two primary sources of error. The first is the FOC residual, denoted by \(\varepsilon\), which measures how much the policy from the warm-up stage violates the optimality conditions of PMP. The second is the BPTT estimation error, \(\delta_{\mathrm{BPTT}}\), in the costate estimate \(\widehat\lambda\). This estimation error itself stems from two numerical sources: the time discretization of the SDEs (with time step \(\Delta t\)) and the Monte Carlo sampling used for the expectation (with batch size \(M\)). The following theorem encapsulates this relationship, with full technical details provided in Online Appendix~\ref{app:math_foundation_publishable}.

\begin{restatable}[Policy Gap Bound]{theorem}{thm:policy_gap}
\label{thm:policy_gap} 
\vspace{2pt}
Let $\pi^*$ be the true optimal policy and $\widehat\pi$ be the policy generated by the P-PGDPO algorithm using a time step $\Delta t$ and batch size $M$. Under the regularity conditions specified in Assumption~\ref{ass:baseline} in Online Appendix \ref{subsec:assumption_publishable}, the gap between the P-PGDPO policy and the true optimum is bounded as follows:
\[
     \|\widehat\pi-\pi^*\|_{L^{q,p}}
     \;\le\;
     C_{\text{tot}} \left( \varepsilon + \kappa_1 \Delta t + \frac{\kappa_2}{\sqrt{M}} \right)
\]
where:
\begin{itemize}[leftmargin=*,labelindent=5pt,itemsep=2pt]
    \item \(\varepsilon\) is the $L^{q,p}$-norm of the Pontryagin FOC residual from the warm-up policy.
    \item \(\kappa_1, \kappa_2\) are positive constants governing the BPTT estimation error, which arises from time discretization (error proportional to \(\Delta t\)) and Monte Carlo sampling (error proportional to \(1/\sqrt{M}\)).
    \item \(C_{\text{tot}}\) is a positive constant that depends on the model parameters but not on $\varepsilon$, $\Delta t$, or $M$.
\end{itemize}
\end{restatable}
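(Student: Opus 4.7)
The plan is to bound $\|\widehat\pi - \pi^*\|_{L^{q,p}}$ by decomposing the error along the PMP channel and then controlling each component separately. The starting point is the observation that both the true optimal policy and the P-PGDPO deployment policy are produced by the same analytic map $\Phi$ that takes a triple of costate quantities $(\lambda,\partial_x\lambda,\partial_{\mathbf Y}\lambda)$ and returns a portfolio via the PMP first-order condition \eqref{eq:pi_star}. So I would write
\[
\widehat\pi-\pi^* \;=\; \underbrace{\Phi(\widehat\lambda,\widehat{\partial_x\lambda},\widehat{\partial_{\mathbf Y}\lambda}) - \Phi(\lambda^{\theta^*},\partial_x\lambda^{\theta^*},\partial_{\mathbf Y}\lambda^{\theta^*})}_{\text{(A) BPTT estimation error}} \;+\; \underbrace{\Phi(\lambda^{\theta^*},\partial_x\lambda^{\theta^*},\partial_{\mathbf Y}\lambda^{\theta^*}) - \pi^*}_{\text{(B) FOC residual of the warm-up policy}},
\]
where $(\lambda^{\theta^*},\partial_x\lambda^{\theta^*},\partial_{\mathbf Y}\lambda^{\theta^*})$ denote the \emph{true} costates associated with the warm-up policy $(\theta^*,\phi^*)$, and hatted quantities denote the Monte-Carlo averaged BPTT estimates used at deployment. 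The triangle inequality in $L^{q,p}$ then reduces the task to bounding (A) and (B) separately.

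For the Lipschitz step I would establish, under Assumption~\ref{ass:baseline}, that the map $\Phi$ is Lipschitz in its three arguments with a finite constant $L_\Phi$ depending on uniform bounds on $\Sigma^{-1}$, $\boldsymbol{\sigma}_Y$, $\boldsymbol{\rho}$, $\|\mu-r\mathbf1\|$, on the wealth $X_t$ being bounded away from zero, and crucially on $\partial_x\lambda$ being bounded away from zero (strict concavity of $V$ in $x$). This handles term (B): since the warm-up policy by definition produces a Pontryagin FOC residual of $L^{q,p}$-norm at most $\varepsilon$, and because substituting its \emph{exact} costates into $\Phi$ inverts the FOC up to that residual, we obtain $\|(B)\|_{L^{q,p}} \le C_1\varepsilon$ for some $C_1$ absorbing $L_\Phi$ and the norms of $\Sigma^{-1}$.

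For term (A) I would split the BPTT error further as
\[
\bigl\|\widehat\lambda - \lambda^{\theta^*}\bigr\| \;\le\; \underbrace{\bigl\|\widehat\lambda - \mathbb E[\lambda_{\Delta t}^{\theta^*}]\bigr\|}_{\text{Monte Carlo}} + \underbrace{\bigl\|\mathbb E[\lambda_{\Delta t}^{\theta^*}] - \lambda^{\theta^*}\bigr\|}_{\text{discretization}},
\]
and analogously for $\partial_x\lambda$ and $\partial_{\mathbf Y}\lambda$. The Monte-Carlo part is controlled by a standard i.i.d. sample-mean argument: under the moment bounds in the baseline assumption, the variance of each path-wise costate estimate is finite, so the CLT gives a rate of $\kappa_2/\sqrt M$. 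The discretization part uses Theorem~\ref{thm:bptt_pmp_correspondence_full}, which identifies the BPTT recursion as a consistent Euler-Maruyama scheme for the PMP backward SDE; standard strong-convergence results for Lipschitz FBSDE coefficients then give a rate of $\kappa_1\Delta t$ (or $\kappa_1\sqrt{\Delta t}$ under weaker conditions, which I would treat as the conservative case and then tighten under the $C^{1,2}$ regularity assumed in Assumption~\ref{ass:baseline}). Applying $L_\Phi$ and the triangle inequality assembles the three contributions into $C_{\text{tot}}(\varepsilon + \kappa_1\Delta t + \kappa_2/\sqrt M)$.

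The hard part, I expect, is the discretization analysis for the \emph{derivatives} $\partial_x\lambda$ and $\partial_{\mathbf Y}\lambda$ rather than $\lambda$ itself, because BPTT computes these by differentiating the Euler scheme, and controlling the interchange of discretization and differentiation requires propagating regularity of the value function (second-order $C^{1,2}$ bounds) through the backward recursion. This is exactly where the parabolic regularity emphasized before the theorem statement is essential: Sobolev-type estimates for the HJB equation give uniform control of $V_{xx}$ and $V_{x\mathbf Y}$, which in turn stabilize the tangent BPTT system and let one invoke strong-convergence results for linearized FBSDEs. A secondary technical nuisance is the quadratic-covariation drift term identified in Lemma~\ref{lem:quad_drift_full}, which must be handled so that the conditional martingale increments do not inflate the effective Lipschitz constant of $\Phi$; here I would rely on the explicit $\mathbf Z^{\lambda X}_k$ representation to show that the induced bias vanishes at the same $O(\Delta t)$ rate already absorbed into $\kappa_1$.
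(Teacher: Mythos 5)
Your decomposition mirrors the paper's final step (the paper writes $\|\widehat\pi-\pi^*\|\le L_F\bigl(\delta_{\mathrm{BPTT}}+E_v\bigr)$ with $E_v:=\|\nabla V^{\pi,c}-\nabla V^*\|_{L^{q,p}}$), and your treatment of term (A) — Lipschitzness of the PMP map plus the Monte-Carlo/discretization split — is consistent with how the paper handles the BPTT error (it essentially packages your analysis into Definition~\ref{def:bptt_error_publishable}). The genuine gap is in term (B). You assert $\|\Phi(\lambda^{\theta^*},\partial_x\lambda^{\theta^*},\partial_{\mathbf Y}\lambda^{\theta^*})-\pi^*\|\le C_1\varepsilon$ on the grounds that ``substituting the exact costates into $\Phi$ inverts the FOC up to the residual.'' But what that inversion actually gives you (via strong monotonicity of the FOC map, Lemma~\ref{lem:strong_monotonicity}) is closeness of the projected policy to the \emph{warm-up} policy $\pi^{\theta^*}$, not to the \emph{optimal} policy $\pi^*$. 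Since $\pi^*=\Phi(\lambda^*,\partial_x\lambda^*,\partial_{\mathbf Y}\lambda^*)$ with the costates of the optimal policy, bounding (B) requires controlling $\|\lambda^{\theta^*}-\lambda^*\|$ and the corresponding derivative gaps — i.e., showing that a small FOC residual for the warm-up policy forces its value-function gradient to be close to the optimal one. Nothing in your proposal delivers this.

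This is precisely the heart of the paper's proof and it is not a one-line Lipschitz argument, because the dependence is circular: the coefficient and source gaps $(E_h,E_f)$ of the warm-up PDE are bounded by $L_G E_v+C\varepsilon$ and $C_f'E_v+C_f\varepsilon$ (Lemmas~\ref{lem:foc_reduction_publishable} and \ref{lem:source_gap_publishable}, using the Lipschitz inverse of the FOC map), while $E_v$ is in turn bounded by $C_S(E_h+E_f)$ via parabolic maximal regularity for the difference $W=V^{\pi,c}-V^*$ (Lemma~\ref{lem:gradient_stab_publishable}). The paper closes the loop $E_v\le\alpha E_v+\beta\varepsilon$ by choosing a time-slab length $\tau_\star$ small enough that $\alpha<1$ and then concatenating slabs (Proposition~\ref{prop:coupled_publishable}), yielding $E_v\le C_v''\varepsilon$. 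You invoke parabolic/Sobolev regularity only for the discretization analysis of the costate derivatives in term (A); to repair the proof you would need to redeploy it as the stability estimate in this fixed-point argument, which is where the real work of the theorem lies.
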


The proof of Theorem~\ref{thm:policy_gap} formalizes the intuition outlined above. 
Heuristically, a small FOC-gap ($\varepsilon$) implies that the candidate policy is nearly consistent with the PDE's structural requirements. 
Owing to the smoothing property of uniformly parabolic PDEs—guaranteed by regularity theory in Sobolev spaces (e.g., \citet{dong2009parabolic})—this near-consistency extends beyond the value function itself to its derivatives. 
With the costates thereby shown to be close to their optimal counterparts, projecting them through the PMP map becomes a well-posed and stable operation, yielding a policy that is correspondingly close to the true optimum.

Our formal analysis shows that, under suitable regularity conditions—particularly uniform parabolicity—a small Pontryagin FOC gap ($\varepsilon$) from the warm-up stage guarantees that the P-PGDPO policy \(\widehat{\pi}\) is provably close to the true optimum \(\pi^*\), up to simulation-induced numerical errors. The resulting bound makes clear that achieving a highly accurate policy requires not only an effective warm-up (small \(\varepsilon\)) but also a sufficiently fine time discretization (\(\Delta t\)) and large batch size (\(M\)) to control BPTT estimation error. Moreover, when the objective function is relatively flat near the optimum, the FOC residual \(\varepsilon\) for the warm-up policy tends to be small, so that numerical errors become the dominant source of the final policy gap.
This provides a rigorous justification for the core idea of the P-PGDPO method: substituting numerically controlled and rapidly stabilizing costate estimates into the analytical PMP formulas yields a near-optimal policy.\footnote{Extending this proof framework beyond uniform parabolicity—to hypo-elliptic (\emph{H-class}, \citep{hormander1967hypoelliptic}) or boundary-degenerate (\emph{D-class}, \citep{oleinik2012second}) models, or to general non-affine systems—faces significant analytical challenges, since key assumptions (such as global \(L^p\) regularity or the costate floor) may fail. Despite these hurdles, the fundamental \emph{algorithmic rationale} of P-PGDPO remains compelling for broader applicability. We classify parabolic HJB PDEs relevant to financial markets in Online Appendix \ref{app:parabolic_financial_models}. } 

The central innovation is the decoupling of costate estimation from direct policy construction. 
Standard end-to-end methods conflate these two tasks, making it difficult to capture the fine structure of the policy in a ``flat'' optimization landscape. 
By contrast, P-PGDPO leverages BPTT for what it excels at—efficient gradient (costate) estimation—and then applies the robust analytical structure of PMP to deterministically map these estimates into a high-fidelity policy. 
This principle, reinforced by our numerical results, suggests that P-PGDPO is a practical and powerful tool for a broad class of continuous-time control problems, even when the strict assumptions required for the proof do not hold.

\section{Numerical Results}
\label{sec:numerical_experiments}

In this section, we present numerical experiments to assess both the performance and scalability of the proposed PG-DPO framework, considering its baseline form (PG-DPO) as well as the Projected variant (P-PGDPO). 
As outlined in Section~\ref{sec:multiasset_multifactor_KO}, the experiments consider a multi-asset, multi-factor portfolio optimization problem where the state factors follow Ornstein–Uhlenbeck (OU) processes, together with an extension to a non-affine specification.
While the methodology readily accommodates more general factor dynamics, we focus on OU processes to enable direct comparison with available analytical benchmarks.\footnote{Our evaluation differs from the reverse-engineering procedure of \citet{duarte2024machine}, where the model is designed so that the optimal policy is known by construction, permitting direct comparison in high-dimensional settings even without an analytic benchmark. 
While reverse-engineering is useful when no tractable benchmark exists, its accuracy assessment is conditional on the artificial structure imposed. 
By contrast, our experiments leverage an analytical benchmark from the original affine OU model, allowing RMSEs to be computed against the true optimal policy without altering the economic environment. 
This ensures that the accuracy measure is both exact and economically meaningful, though it requires the existence of such a tractable benchmark.} 

We consider the constant relative risk aversion utility function for experiments:
\begin{equation*}
    U(X) = \begin{cases}
        \dfrac{X^{1-\gamma}}{1-\gamma} & {\rm if}\  \gamma\ne 1\\
        \log X & {\rm if}\ \gamma=1,
    \end{cases}
\end{equation*}
where $\gamma>0$ is the coefficient of relative risk aversion.

We first examine the affine factor specification, deferring the non-affine case to Section~\ref{sec:non_affine_robustness}. 
To isolate the portfolio allocation problem—specifically the decomposition into myopic and intertemporal hedging demands, abstracting from consumption choice—we impose no intermediate consumption (i.e., \(C_t = 0\)), reducing the objective to 
\[
  J = \mathbb{E}[U(X_T)] .
\] 
This setup admits an analytical benchmark solution derived from the associated HJB equation,\footnote{The analytical benchmark follows the standard exponential–quadratic/Riccati formulation; see \citet{kim1996dynamic, liu2007portfolio}. We implement these benchmarks following their notation.} 
which serves as the ground truth for quantitative performance evaluation.

In addition to the analytic benchmark, we compare against a leading neural network–based baseline: the Deep BSDE method of \citet{E2017Deep} and \citet{han2018solving}. 
This approach reformulates the stochastic control problem as a system of FBSDEs and trains neural networks to approximate both the value function and its gradients. 
The approximations are then used to construct the control policy, typically by enforcing or approximating the HJB first-order conditions. 
Network parameters are learned by minimizing residuals in the BSDE dynamics and the terminal condition, evaluated under Monte Carlo simulation.

Throughout this section, we evaluate the portfolio strategies generated by PG-DPO, P-PGDPO, and Deep BSDE against the analytic benchmark. 
Performance is measured by the root mean squared error (RMSE) of portfolio weights on risky assets, computed across a range of problem dimensions.\footnote{Full implementation details—including model parameters, network architectures, training protocols, and evaluation methodology—are provided in Online Appendix~\ref{app:experimental_setup_details}.} 

We begin with a relatively short investment horizon of $T=1.5$ years. 
This choice enables a clean assessment of intrinsic accuracy and convergence properties, minimizing the confounding numerical issues that arise in long-horizon simulations. 
Establishing this baseline allows for a controlled comparison of methods before turning, in Section~\ref{sec:long_horizon_robustness}, to the long-horizon case with targeted algorithmic enhancements. 
Finally, Section~\ref{sec:non_affine_robustness} presents results for models with non-affine asset return dynamics.

\subsection{Single-Asset Models}\label{sec:single_asset_numerical}

We begin with the simplest setting: a single risky asset whose excess return is driven by a $k$-dimensional OU state process ($n=1$). 
This case provides a clear environment in which to examine the accuracy of each method without the numerical instability associated with the large-dimensional covariance matrix $\Sigma$.
In this case, the optimal portfolio is given by
\begin{equation*} 
\pi_t^* = -\frac{1}{\sigma^2 X_t^* \, \partial_x \lambda_t^*}
\left\{
\lambda_t^* \alpha \sigma Y_t
+ \sigma \rho \sigma_Y \, \partial_Y \lambda_t^*
\right\}, \label{eq:pi_single_factor}
\end{equation*}
where $Y_t$ is the dynamic risk-premium process
\[
dY_t = k_Y(\theta_Y - Y_t)\,dt + \sigma_Y \, dW_t^Y .
\]
%The explicit solution can also be derived from the HJB equation \citep{kim1996dynamic}, and depends only on the costate $\lambda_t^*$ and its derivatives $\partial_x \lambda_t^*$ and $\partial_Y \lambda_t^*$.\footnote{
%For the multi-factor case, the value function can likewise be obtained analytically via the HJB approach, yielding a system of Riccati ordinary differential equations (ODEs). See Appendix~\ref{app:analytic_OU} and~\ref{app:analytic_MF} for details.}

Because the multivariate OU factor process yields an analytical benchmark, we can compute RMSE separately for the myopic and intertemporal hedging components. This decomposition is a key advantage of our PMP-based approach. 

Table~\ref{tab:rmse_n1_vary_k} reports the root mean squared errors (RMSEs) for $k \in \{1,5,10\}$ state factors, along with the iteration counts at which the minimum errors are achieved. 
Across all configurations, P-PGDPO achieves the lowest RMSE. 
For example, with $k=5$, P-PGDPO attains an RMSE of \(2.50 \times 10^{-4}\), consistently outperforming both the baseline PG-DPO \((3.71 \times 10^{-2})\) and Deep BSDE \((3.09 \times 10^{-2})\). 
The relative performance of baseline PG-DPO and Deep BSDE varies with $k$: for small dimensions ($k=1,5$), Deep BSDE converges faster and achieves lower RMSE than baseline PG-DPO, whereas at $k=10$, PG-DPO \((1.75 \times 10^{-2})\) surpasses Deep BSDE \((8.15 \times 10^{-2})\) in accuracy.\footnote{For $k=10$, Deep BSDE failed to train beyond initialization, so its lowest RMSE was attained at the first iteration. This reflects optimization difficulties in the high-dimensional BSDE loss. Figure~\ref{fig:rmse_1_10_updated} in Online Appendix~\ref{app:convergence_accuracy} illustrates the iteration-by-iteration policy RMSE comparison between methods.} 

These results suggest that both PMP-based and Deep BSDE approaches can mitigate the curse of dimensionality in $k$, but convergence speed and final precision depend critically on the interaction between factor dimension and algorithmic structure.\footnote{In this regime, Monte Carlo sampling error and function approximation quality become the primary computational bottlenecks, rather than an ill-conditioned covariance matrix $\Sigma$.}

\begin{table}[t!]
\centering
\caption{Summary of Policy Errors and  Iteration Counts when $n=1$}
\label{tab:rmse_n1_vary_k}
\begin{tabular}{@{}c c c c@{}}
\toprule
$k$ & Method & RMSE & Iterations at Min. \\
\midrule
\multirow{3}{*}{1}
  & PG-DPO & $3.260\times10^{-2}$ & 1000 \\
  & P-PGDPO         & $1.120\times10^{-3}$ & 800  \\
  & Deep BSDE         & $9.350\times10^{-3}$ & 3600 \\
\cmidrule(lr){1-4}
\multirow{3}{*}{5}
  & PG-DPO & $3.714\times10^{-2}$ & 1400 \\
  & P-PGDPO         & $2.500\times10^{-4}$ & 800  \\
  & Deep BSDE         & $3.094\times10^{-2}$ & 9400 \\
\cmidrule(lr){1-4}
\multirow{3}{*}{10}
  & PG-DPO & $1.751\times10^{-2}$ & 2800 \\
  & P-PGDPO         & $4.710\times10^{-4}$ & 1200 \\
  & Deep BSDE         & $8.146\times10^{-2}$ & 1    \\
\bottomrule
\end{tabular}
\end{table}

Notably, the structural decomposition of the PMP framework allows us to evaluate the myopic and hedging demands separately. 
In the single-asset case, decomposing the P-PGDPO policy shows that nearly all of the residual error originates from the hedging demand. 
This reflects the greater numerical sensitivity of the hedging term, which depends on cross-derivatives of the value function and is amplified by correlations between asset returns and state shocks. 
By contrast, the myopic demand is recovered with near machine precision across all experiments. 
The implications of hedging-demand errors in the multi-asset setting are examined in the next section.

\subsection{Multi-Asset Models}\label{sec:multi_asset_numerical}

We extend the analysis to settings with multiple risky assets ($n>1$). 
As $n$ increases, the optimization problem becomes more challenging due to the higher dimensionality of the control vector and the potential ill-conditioning of the covariance matrix $\Sigma$. 
We consider cases with $n \in \{1,10,50\}$ and factor dimensions $k \in \{2,10\}$, focusing on the affine OU factor model in order to retain analytic benchmark solutions.\footnote{To our knowledge, existing studies in portfolio management typically consider fewer than 30 risky assets when $k \geq 1$. 
For example, \citet{davey2022deep} analyze a model with 20 assets but without state variables influencing returns, while \citet{cheridito2025deep} study a 25-asset case without reporting the precision of the resulting solution.} 

Table~\ref{tab:rmse_vary_n} summarizes the minimum policy RMSEs and the iteration count at which each minimum is attained. 
All methods converge even with a large number of risky assets and state variables, and the resulting RMSE values remain small. 
Notably, increasing the number of state variables does not materially affect either the RMSEs or the iteration counts required for convergence. 

The numerical results are consistent with those from the single-asset model. 
In all cases, P-PGDPO delivers the lowest RMSE, demonstrating robustness to increased dimensionality. 
For example, with $n=50$ and $k=10$, P-PGDPO achieves an RMSE of $1.52\times 10^{-2}$, substantially outperforming both baseline PG-DPO \((1.07\times 10^{-1})\) and Deep BSDE \((8.20\times 10^{-2})\). 
While the relative performance of PG-DPO and Deep BSDE varies with $n$ and $k$, P-PGDPO remains clearly superior across all settings.\footnote{Figure~\ref{fig:rmse_plots_updated} in Online Appendix \ref{app:convergence_accuracy} report RMSE trajectories over iterations for different numbers of risky assets. P-PGDPO exhibits steady convergence as the number of iterations increases.}

The pronounced degradation with increasing $n$, even for small $k$, reflects a fundamental difference between scaling in the asset dimension and scaling in the state dimension. 
Increasing $n$ introduces control-space complexity and numerical fragility. 
The first-order condition for the optimal portfolio involves $\Sigma^{-1}$, and when $n$ is large, $\Sigma$ often becomes ill-conditioned. 
In such cases, even small estimation errors in the costates—particularly the cross-derivatives $V_{x\mathbf{y}}$—are amplified by the matrix inversion, significantly degrading accuracy. 
This effect is structural and cannot be fully offset by generic improvements in function approximation. 

By contrast, increasing $k$ primarily enlarges the state space over which the value function must be approximated. 
Although this raises the difficulty of the approximation task, it does not inherently suffer from the same numerical instability as the large-$n$ case, since $\Sigma$ remains fixed in size and conditioning. 
In other words, scaling in $k$ increases approximation complexity, whereas scaling in $n$ directly amplifies estimation errors through $\Sigma^{-1}$. 
As a result, the neural network–based methods studied here scale more gracefully with $k$ than with $n$.

\begin{table}[t!]

{\centering
\caption{{\centering Summary of Policy Errors and Iteration Counts.}}
\label{tab:rmse_vary_n}

\begin{tabular}{c |c |c c c| c c c}
\toprule
 \multicolumn{2}{c}{} & \multicolumn{3}{c}{$k=2$} & \multicolumn{3}{c}{$k=10$} \\
$n$ & Method & RMSE & Iterations &  HR(\%) & RMSE & Iterations & HR(\%) \\
\midrule
\multirow{3}{*}{1} 
  & PG-DPO & $4.770\times10^{-2}$ & 1200 &  0.1 & $1.751\times10^{-2}$ & 2800 & 0.1 \\
  & P-PGDPO         & $1.990\times10^{-4}$ & 800 &  0.1 & $4.710\times10^{-4}$ & 1200 & 0.1 \\
  & Deep BSDE         & $2.675\times10^{-3}$ & 3600 &  0.1 &$8.146\times10^{-2}$ & 1  & 0.1 \\
\cmidrule(lr){1-8}
\multirow{3}{*}{10}
  & PG-DPO & $1.145\times10^{-1}$ & 4000 &  4.2 & $4.996\times10^{-2}$ & 9800 & 0.9 \\
  & P-PGDPO         & $3.645\times10^{-3}$ & 4600 & 4.2 & $2.799\times10^{-3}$ & 4800 & 0.9 \\
  & Deep BSDE         & $3.835\times10^{-2}$ & 1   & 4.2 & $7.623\times10^{-2}$ & 200  & 0.9 \\
\cmidrule(lr){1-8}
\multirow{3}{*}{50}
  & PG-DPO & $3.875\times10^{-1}$ & 5600 & 11.8 & $1.070\times10^{-1}$ & 8800 & 5.7 \\
  & P-PGDPO         & $7.330\times10^{-2}$ & 9800 & 11.8 & $1.522\times10^{-2}$ & 10000 & 5.7 \\
  & Deep BSDE         & $6.340\times10^{-1}$ & 3800 & 11.8 & $8.199\times10^{-2}$ & 1200 & 5.7 \\
\bottomrule
\end{tabular}
}
\\{\small HR(\%) represents the percentage ratio of hedging demand in the total portfolio.}
\end{table}

Our numerical results confirm this distinction. 
For fixed $n$, increasing $k$ from $2$ to $10$ produces only modest changes in RMSE; in some cases, such as Deep BSDE with $n=50$, performance even improves at higher $k$. 
By contrast, for fixed $k$, raising $n$ from $1$ to $50$ causes a much larger degradation in accuracy across all methods, with the steepest decline observed for baseline PG-DPO.

We focus on the most challenging high-dimensional case with $n=50$ and $k=10$, presenting results for the first asset together with graphical illustrations. Figure~\ref{fig:n50k10_total_comparison_all_methods} compares the sum of myopic and intertemporal hedging demands for the first asset across analytic solution, Baseline PG-DPO, P-PGDPO, and Deep BSDE, evaluated at each method's optimal iteration. 
The plots are drawn against the first state variable $Y_1$ at a representative time to maturity $T-t$, with all other factors $Y_{j\neq 1}$ fixed at their long-term means $\theta_{Y,j}$.

% --- Updated Figure for 3x2 Layout (Baseline, P-PGDPO, Deep BSDE) ---
\begin{figure}[t!]
 \centering
 % Exact solution
 \begin{subfigure}[b]{0.45\textwidth}
  \centering
  \includegraphics[width=\textwidth]{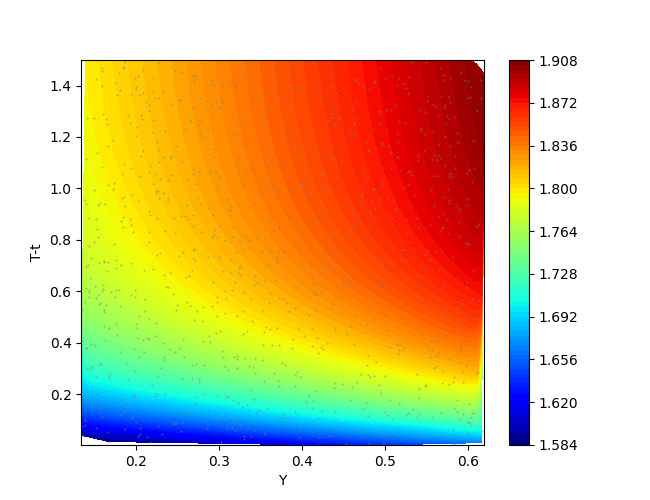}
  \caption{Analytic Policy ($\pi_{1}^{\text{sol}}$)}
  \label{fig:benchmark_policy_total_row}
 \end{subfigure}
 \hfill
 \centering
 \begin{subfigure}[b]{0.45\textwidth}
  \centering
  \includegraphics[width=\textwidth]{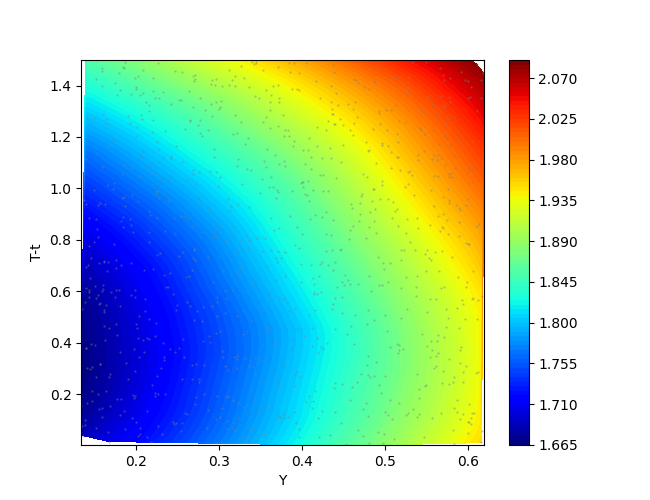}
  \caption{Baseline PG-DPO Policy ($\pi_{1,\theta}$)}
  \label{fig:n50k10_base_pol}
 \end{subfigure}
\vspace{1ex}
 \begin{subfigure}[b]{0.45\textwidth}
  \centering
  \includegraphics[width=\textwidth]{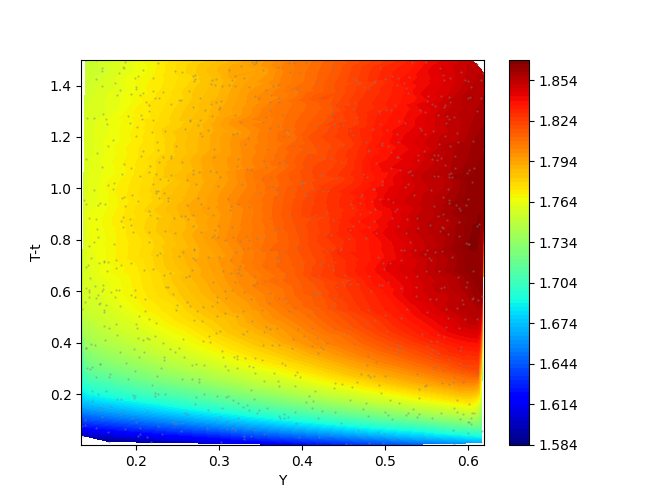}
  \caption{P-PGDPO Policy ($\pi_{1}^{\mathrm{PMP}}$)}
  \label{fig:n50k10_2pg_pol} % Note: label retained from original for consistency if referenced elsewhere, though content is P-PGDPO
 \end{subfigure}
 \hfill
 % Row 3: Deep BSDE
 \begin{subfigure}[b]{0.45\textwidth}
  \centering
  \includegraphics[width=\textwidth]{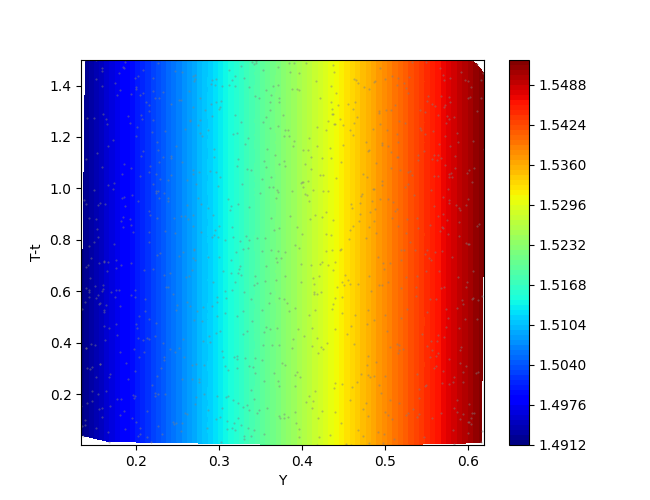}
  \caption{Deep BSDE Policy ($\pi_{1}^{\mathrm{BSDE}}$)}
  \label{fig:n50k10_bsde_pol}
 \end{subfigure}
 \caption{Comparison of total policies for the first asset ($i=1$) across Baseline PG-DPO, Projected PG-DPO (P-PGDPO), and Deep BSDE in the case where $n=50, k=10$. Results shown at (near) optimal iterations for each method (e.g., iteration 6400 for P-PGDPO as plotted), plotted against $Y_1$ at a representative time-to-maturity $T-t$. Other state factors $Y_{j \neq 1}$ are held at their long-term means $\theta_{Y,j}$. Errors use a symmetric logarithmic (symlog) scale. Different color bar ranges reflect varying policy/error magnitudes.}
 \label{fig:n50k10_total_comparison_all_methods}
\end{figure}
% --- End Figure ---

In this high-dimensional setting, the baseline PG-DPO produces policies that deviate substantially from the benchmark, resulting in large errors. 
This reflects the difficulty of directly approximating the optimal policy without structural guidance from PMP. 
By contrast, the P-PGDPO policy in the panel (c) of Figure \ref{fig:n50k10_total_comparison_all_methods} is visually indistinguishable from the benchmark, with residuals reduced to near zero except in small regions where hedging effects are most sensitive.\footnote{The visualizations of policy errors for the first asset across models are depicted in Figure \ref{fig:policy_error} in Online Appendix \ref{app:convergence_accuracy}.} 
Its errors are substantially smaller than those of the other methods, consistent with its superior RMSE of \(1.522 \times 10^{-2}\) for case \(n=50, k=10\) as reported in Table~\ref{tab:rmse_vary_n}. 
These results highlight the effectiveness of the P-PGDPO decoupled learning structure: costates are first estimated via BPTT and stabilized through Monte Carlo averaging, and the policy is then analytically constructed using the PMP first-order conditions.

The Deep BSDE results in Figure~\ref{fig:n50k10_total_comparison_all_methods} show a marked improvement over the baseline PG-DPO. 
Its learned policy qualitatively resembles the myopic component of the benchmark in Figure~\ref{fig:n50k10_component_decomp_new}, capturing the dependence on the state variable \(Y_1\). 
However, it fails to recover the more complex intertemporal hedging demands. 
The corresponding error plot exhibits large discrepancies in regions where the hedging component is critical, indicating that this element is largely absent from the Deep BSDE solution.\footnote{See Figure \ref{fig:policy_error} in Online Appendix \ref{app:convergence_accuracy}. This shortcoming reflects both architectural and training limitations. 
The Deep BSDE framework is primarily designed to enforce the terminal condition and maintain consistency between the forward process \(Y_t\) (approximating the value function) and the backward process \(Z_t\) (approximating its gradient). 
Intertemporal hedging, however, depends crucially on the mixed second-order derivative $V_{x\mathbf{y}}$, which is not explicitly supervised in the standard two-network Deep BSDE setup. 
Although the $Z_t$ network can in principle encode this information indirectly, the absence of direct supervision makes $V_{x\mathbf{y}}$ estimation noisy, leading Deep BSDE to underperform on the hedging component despite capturing the myopic part well. 

Moreover, the loss function penalizes deviations in \(Z_t\) through a quadratic martingale residual, which may place relatively little weight on certain gradient directions, particularly those tied to $V_{x\mathbf{y}}$. 
Small errors along these directions can nonetheless be magnified when constructing the control policy, especially through multiplication by the precision matrix \(\Sigma^{-1}\), which inherits the numerical instability of an ill-conditioned covariance matrix.

As a result, while the Deep BSDE method effectively learns the dominant myopic component of the policy, it consistently underperforms in capturing the subtler hedging demands essential for dynamic portfolio optimization.}

% --- Figure for P-PGDPO Decomposition ---
\begin{figure}[t!]
 \centering
 \begin{subfigure}[b]{0.45\textwidth}
  \centering
  \includegraphics[width=\textwidth]{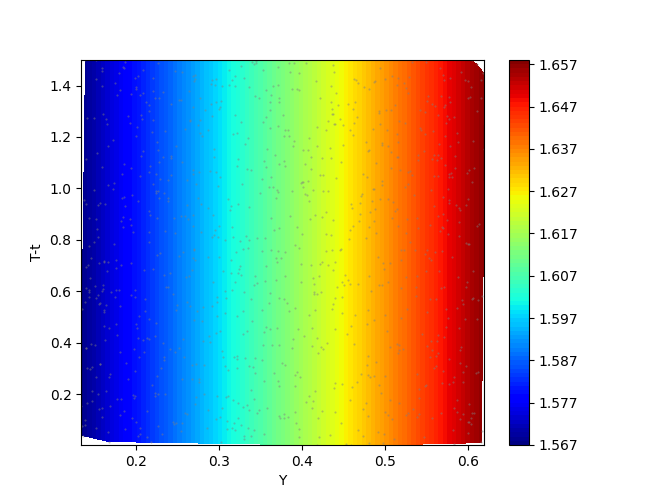}
  \caption{Analytic Myopic ($\pi_{1,0}^{\text{sol}}$)}
  \label{fig:benchmark_policy_myopic_row}
 \end{subfigure}
 \hfill
 \begin{subfigure}[b]{0.45\textwidth}
  \centering
  \includegraphics[width=\textwidth]{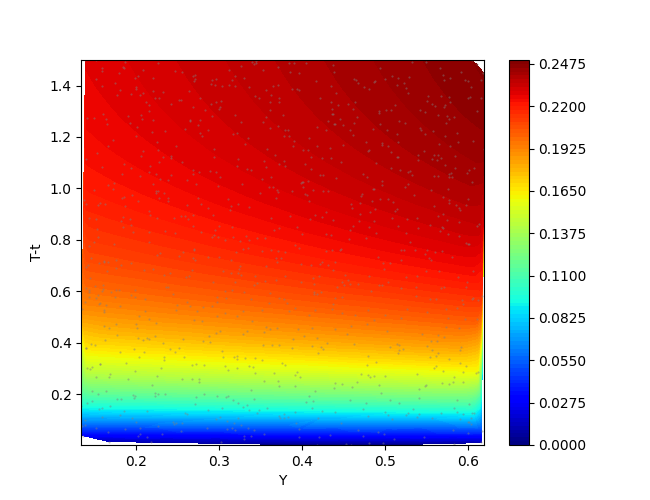}
  \caption{Analytic Hedging ($\pi_{1,1}^{\text{sol}}$)}
  \label{fig:benchmark_policy_hedging_row}
 \end{subfigure}
 \vspace{1ex}
 \begin{subfigure}[b]{0.45\textwidth}
  \centering
  \includegraphics[width=\textwidth]{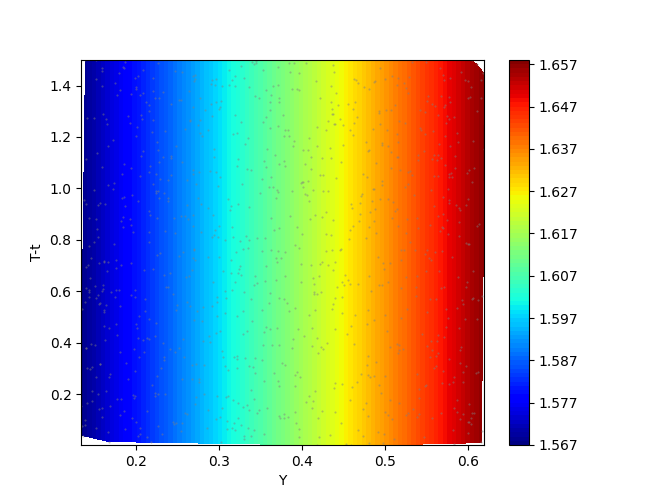}
  \caption{P-PGDPO Myopic ($\pi_{1,0}^{\mathrm{PMP}}$)}
  \label{fig:n50k10_myopic_pol_new}
 \end{subfigure}
 \hfill
 \begin{subfigure}[b]{0.45\textwidth}
  \centering
  \includegraphics[width=\textwidth]{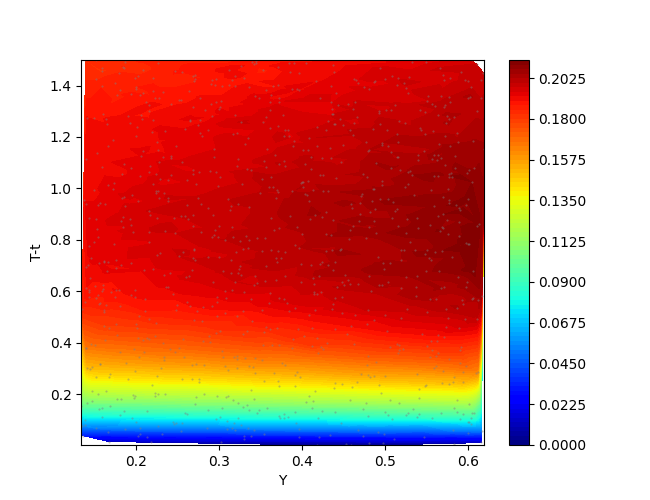}
  \caption{P-PGDPO Hedging ($\pi_{1,1}^{\mathrm{PMP}}$)}
  \label{fig:n50k10_hedging_pol_new}
 \end{subfigure}
 \caption{Decomposition of the Projected PG-DPO (P-PGDPO) policy for the first asset ($i=1$) in the $n=50, k=10$ case (results from iteration 6400 as plotted), plotted against $Y_1$ at a representative time-to-maturity $T-t$. Other state factors $Y_{j \neq 1}$ are held at their long-term means $\theta_{Y,j}$. Top row: analytic myopic and hedging components. Bottom row: myopic and hedging components of P-PGDPO. Note the different color bar ranges.}
 \label{fig:n50k10_component_decomp_new}
\end{figure}
% --- End Figure ---

We now examine the myopic and intertemporal hedging demands separately. 
For hedging demand, traditional dynamic programming methods become computationally intractable in high dimensions. 
Although we obtain numerical results for the intertemporal component, they exhibit complex dependencies on both the state variables and the remaining time to maturity.
The hedging ratio---defined as the Euclidean-norm ratio of the hedging demand to the total demand for risky assets---increases sharply with $n$ (Table~\ref{tab:rmse_vary_n}), making accurate estimation of hedging demand even more critical in high-dimensional problems. 
For instance, this percentage rises from 0.1 when $n=1$ to 12.1 when $n=50$. 

Consistent with this observation, Figure~\ref{fig:n50k10_component_decomp_new} decomposes the portfolio for the first asset into its myopic component and intertemporal hedging component, with the bottom row showing the corresponding P-PGDPO policy.
As in Figure~\ref{fig:n50k10_total_comparison_all_methods}, surfaces are plotted against the first state factor $Y_1$ at a representative time to maturity $T-t$, with all the other state factors $Y_{j \neq 1}$ fixed at their long-term means $\theta_{Y,j}$.

The myopic demand recovered by P-PGDPO is virtually identical to the analytic solution in Figure~\ref{fig:benchmark_policy_myopic_row}, with residual errors at the level of numerical precision \(O(10^{-7})\). 
This near-perfect reconstruction demonstrates that the costate-driven projection step in P-PGDPO reliably captures the static mean--variance structure of the optimal policy, even in high-dimensional settings. 

By contrast, the hedging component exhibits a richer nonlinear structure, reflecting portfolio adjustments needed to hedge against future changes in the investment opportunity set. 
Although P-PGDPO reproduces the overall shape and magnitude of this component, the error shows that most of the residual policy error originates here.\footnote{For the visualization of decomposition error, we provide the error map in Figure \ref{fig:decomposition error} in Online Appendix \ref{app:convergence_accuracy}.} 
This reflects the structural difficulty of estimating the cross-derivatives $V_{x\mathbf{y}}$ that drive the hedging term, together with the amplification of gradient noise when multiplied by the numerically unstable matrix $\Sigma^{-1}(\boldsymbol{\sigma}\,\boldsymbol{\rho}\,\boldsymbol{\sigma}_Y)$ in large-$n$ settings.

Indeed, the PMP solution for the hedging term involves the inverse covariance matrix:
\[
\pi^{\text{hedge}} 
  = -\frac{1}{X_t V_{xx}}\,
    \underbrace{\Sigma^{-1}}_{\text{ill-conditioned for large $n$}}\,
    (\boldsymbol{\sigma}\,\boldsymbol{\rho}\,\boldsymbol{\sigma}_Y)\,V_{x\mathbf{Y}}.
\]

Crucially, unlike Deep BSDE—which shows little evidence of capturing the hedging component at all (Figure~\ref{fig:n50k10_total_comparison_all_methods})—P-PGDPO’s two-stage design enables it to recover both components with high fidelity. 
By first stabilizing costate estimates via BPTT and then projecting them analytically through PMP’s first-order conditions, P-PGDPO enforces the structural form of the hedging term, a key factor behind its superior RMSE performance in high-dimensional portfolio optimization. 

In sum, the \(n=50, k=10\) visualizations clearly demonstrate that Projected PG-DPO can “break the dimensional barrier”: where baseline PG-DPO fails and Deep BSDE provides only a partial (myopic-focused) solution, P-PGDPO succeeds. 
The remaining inaccuracies in the hedging component highlight a pressing need to address the ill-conditioning of \(\Sigma\) in very high asset dimensions—potentially through factor structures, regularization, shrinkage techniques, or architecture-aware preconditioning.

\subsection{Long-Horizon Model}\label{sec:long_horizon_robustness}

We now turn to the long-horizon investment problem. 
While the results in the previous sections demonstrate that the PG-DPO framework performs well for shorter horizons ($T=1.5$ years), extending to long horizons poses significant challenges. 
In particular, the baseline PG-DPO algorithm is susceptible to compounding variance and bias over extended simulations. 
The error sources identified in Theorem~\ref{thm:policy_gap}—discretization error ($\propto \Delta t$) and Monte Carlo sampling error ($\propto 1/\sqrt{M}$)—become increasingly critical as the horizon lengthens.

To develop a robust framework capable of addressing long-horizon challenges, we introduce three key algorithmic enhancements aimed at improving both accuracy and stability. 

First, we adopt a residual learning scheme for hedging demand. 
The policy network is restructured to learn only the complex intertemporal hedging component, while the analytically known myopic demand is computed separately (see \citet{silver2018residual,johannink2019residual}). 
This separation leverages domain knowledge from financial economics: the myopic component is well understood, whereas accurate estimation of the hedging demand remains the primary challenge. 
Residual learning thus allows the network to dedicate its full capacity to the most difficult part of the problem. 

Second, we incorporate a control variate to suppress the variance inherent in long Monte Carlo rollouts, the most critical issue in long-term simulations. 
A key advantage of our model-based approach is the ability to seamlessly integrate classical variance-reduction techniques from the financial engineering literature \citep{glasserman2004monte}. 
By using the analytically tractable myopic policy as a control variate, we can significantly stabilize the learning process. 

Third, we employ Richardson extrapolation to reduce time-discretization error. 
The known structure of the underlying SDEs in our model-based setting permits the straightforward application of this classical numerical method. 
By combining simulations at time steps $\Delta t$ and $0.5\Delta t$, we achieve higher-order accuracy, which is crucial for maintaining fidelity over long horizons \citep{glasserman2004monte}.

Before applying these techniques to the long-horizon problem, we first validate their impact on the original experimental setup with $T=1.5$ years. 
This step ensures that the enhancements deliver genuine performance improvements. 
As shown in Table~\ref{tab:short_term_gains}, the new techniques yield a dramatic reduction in policy RMSE for both Baseline PG-DPO and P-PGDPO. 
Performance gains of up to $57.5\times$ confirm that these enhancements are not incremental tweaks but constitute a fundamental upgrade to the algorithm. 
This successful validation provides a strong foundation for tackling the more complex long-horizon investment problem.

\begin{table}[t!]
    \centering
    \caption{RMSE Improvement with Advanced Techniques for $T=1.5$ Horizon}
    \label{tab:short_term_gains}
    \begin{tabular}{cccccc}
        \toprule
        n & k & \makecell{New Baseline \\ RMSE} & \makecell{Baseline \\ Improvement} & \makecell{New P-PGDPO \\ RMSE} & \makecell{P-PGDPO \\ Improvement} \\
        \midrule
        1 & 1 & $5.10 \times 10^{-3}$ & $6.4 \times$ & $4.80 \times 10^{-5}$ & $23.3 \times$ \\
        1 & 2 & $8.29 \times 10^{-4}$ & $57.5 \times$ & $6.30 \times 10^{-5}$ & $3.16 \times$ \\
        1 & 5 & $1.04 \times 10^{-3}$ & $35.7 \times$ & $1.00 \times 10^{-5}$ & $25.0 \times$ \\
        1 & 10 & $5.54 \times 10^{-4}$ & $31.6 \times$ & $2.23 \times 10^{-4}$ & $2.11 \times$ \\
        10 & 2 & $1.49 \times 10^{-2}$ & $7.7 \times$ & $8.33 \times 10^{-4}$ & $4.38 \times$ \\
        10 & 10 & $5.07 \times 10^{-3}$ & $9.9 \times$ & $2.21 \times 10^{-3}$ & $1.27 \times$ \\
        50 & 2 & $8.07 \times 10^{-2}$ & $4.8 \times$ & $8.40 \times 10^{-3}$ & $8.73 \times$ \\
        50 & 10 & $3.38 \times 10^{-2}$ & $3.2 \times$ & $1.07 \times 10^{-2}$ & $1.42 \times$ \\
        \bottomrule
    \end{tabular}
    \vspace{1em}
    \parbox{\textwidth}{\small \textit{Note:} This table compares the minimum policy RMSE of the enhanced algorithms against the original results from Tables \ref{tab:rmse_n1_vary_k} and \ref{tab:rmse_vary_n}. The `Improvement' column indicates the factor by which the RMSE declines (e.g., Original RMSE / New RMSE).}
\end{table}

With a demonstrably more robust algorithm, we now address the long-horizon investment problem with a maturity of $T=20$ years. 
As the horizon extends toward infinity, the optimal policy is expected to become stationary, i.e., independent of time-to-maturity. 
Although such a stationary policy is structurally simpler for the network to learn, the extended simulations required to reach it amplify the challenges of variance and error control. 
Our enhanced framework is designed precisely for this scenario. 

\begin{table}[b!]
    \centering
    \caption{Policy Errors for Long‑Horizon Problem ($T=20$)}
    \label{tab:long_term_results}
    \begin{tabular}{cc cc cc}
        \toprule
        & & \multicolumn{2}{c}{Baseline PG-DPO} & \multicolumn{2}{c}{P-PGDPO} \\
        \cmidrule(lr){3-4} \cmidrule(lr){5-6}
        $n$ & $k$
            & \makecell{  RMSE}
            & \makecell{Iterations \\ at Min.}
            & \makecell{ RMSE}
            & \makecell{Iterations \\ at Min.} \\
        \midrule
         1 &  1  & 0.00169 &  400 & 0.000041 &  400 \\
         5 &  3  & 0.00876 &  200 & 0.000352 &    1 \\
        30 &  5  & 0.07220 & 1000 & 0.007030 &  200 \\
        50 & 10  & 0.05020 & 2600 & 0.003850 &  200 \\
        \bottomrule
    \end{tabular}
\end{table}

Table~\ref{tab:long_term_results} reports the final policy RMSE for the 20-year experiment. 
The results show remarkably low errors even in high dimensions: in the $n=50, k=10$ case, P-PGDPO achieves an error of just $3.85 \times 10^{-3}$. 
This error is even lower than in the 1.5-year experiment, a phenomenon attributable to the synergy between a more powerful algorithm and the convergence toward a simpler stationary target policy.

%An important characteristic of the long-horizon benchmark solution is that the intertemporal hedging demand exhibits strong time-dependency for short maturities (e.g., within 1.5 years), but becomes effectively time-invariant for longer maturities, converging to a stationary policy. Furthermore, creating a reliable interpolation for the 20-year benchmark is challenging, as the optimal policy changes rapidly for short maturities where evaluation points are sparse. Therefore, the figure presents scatter plots of the raw policy function outputs to ensure a faithful visualization of the results. 

\begin{figure}[t!]
    \centering
    \hfill
    \begin{subfigure}[b]{0.43\textwidth}
        \centering
        \includegraphics[width=\textwidth]{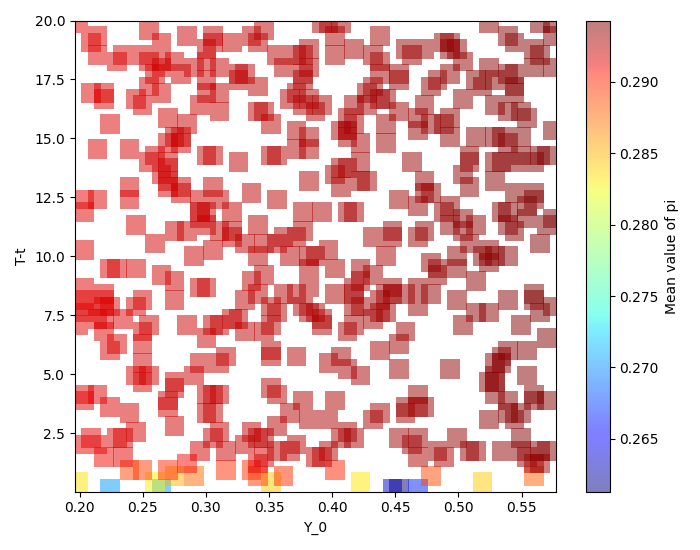}
        \caption{Analytic Solution (Asset 1)}
        \label{fig:long_term_sol_asset1}
    \end{subfigure}
    \hfill
    \begin{subfigure}[b]{0.43\textwidth}
        \centering
        \includegraphics[width=\textwidth]{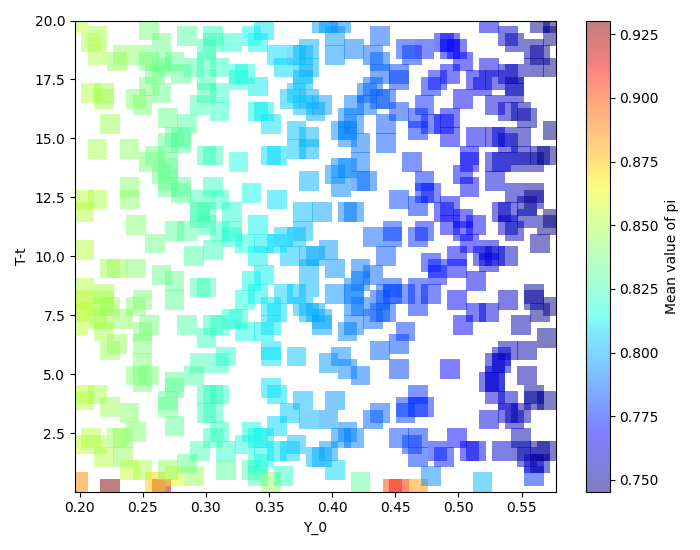}
        \caption{Analytic Solution (Asset 2)}
        \label{fig:long_term_sol_asset2}
    \end{subfigure}
\hfill \\
  %\vspace{1em}
    \hfill
    \begin{subfigure}[b]{0.43\textwidth}
        \centering
        \includegraphics[width=\textwidth]{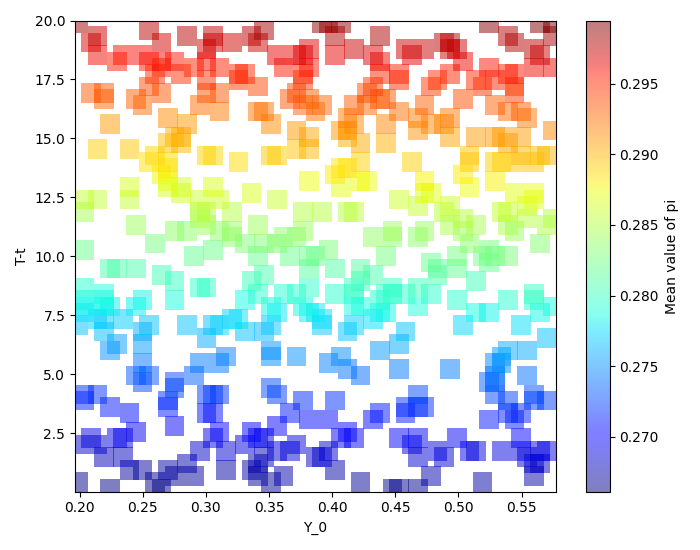}
        \caption{Baseline PG-DPO (Asset 1)}
        \label{fig:long_term_base_asset1}
    \end{subfigure}
    \hfill
    \begin{subfigure}[b]{0.43\textwidth}
        \centering
        \includegraphics[width=\textwidth]{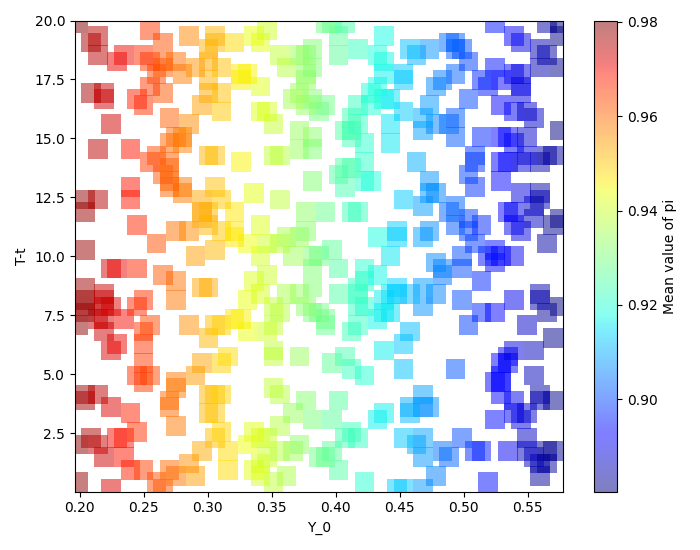}
        \caption{Baseline PG-DPO (Asset 2)}
        \label{fig:long_term_base_asset2}
    \end{subfigure}
    \hfill\\
    
    %\vspace{1em}
    \hfill
    \begin{subfigure}[b]{0.43\textwidth}
        \centering
        \includegraphics[width=\textwidth]{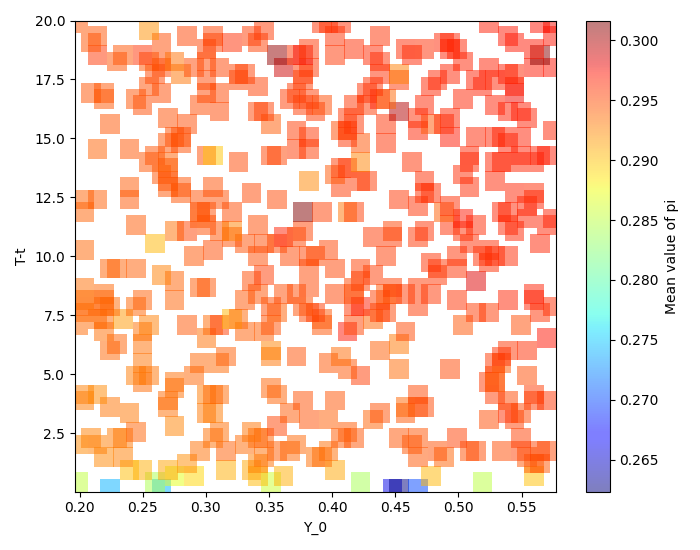}
        \caption{P-PGDPO (Asset 1)}
        \label{fig:long_term_ppgdpo_asset1}
    \end{subfigure}
    \hfill
    \begin{subfigure}[b]{0.43\textwidth}
        \centering
        \includegraphics[width=\textwidth]{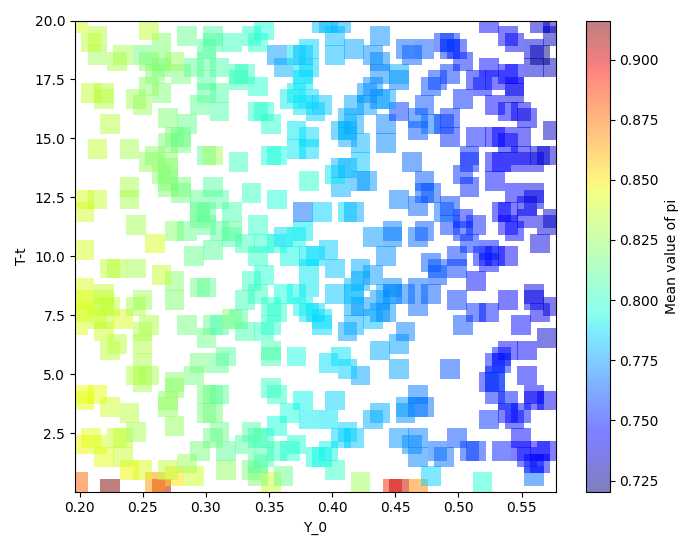}
        \caption{P-PGDPO (Asset 2)}
        \label{fig:long_term_ppgdpo_asset2}
    \end{subfigure}
    \hfill
    \caption{Comparison of learned policies for the 20-year horizon ($n=50, k=10$), plotted against the first state factor $Y_1$. Left column: Asset 1. Right column: Asset 2. The P-PGDPO results show a near-perfect match with the benchmark solution, whereas the Baseline PG-DPO fails to capture the correct policy structure.}
    \label{fig:long_term_plots}
\end{figure}

As shown in Figure~\ref{fig:long_term_plots}, the policies generated by P-PGDPO are visually indistinguishable from the benchmark, accurately capturing the intricate nonlinear dependencies on the state variables. 
By contrast, the baseline PG-DPO, even with enhanced techniques, fails to reproduce the correct structural form of the policy. 
This comparison illustrates the qualitative difference between a rough approximation—potentially yielding a low but misleading error metric—and a truly accurate approximation that captures both the value and the structural complexity of the optimal policy. 
These findings underscore the strength of the enhanced P-PGDPO framework in delivering robust, high-fidelity solutions to challenging long-horizon control problems.

\subsection{Non-affine Asset Return Dynamics}
\label{sec:non_affine_robustness}

We now assess the flexibility of our framework under non-affine model dynamics, as introduced in Section~\ref{sec:non-affine models}. 
In this specification, the $k$-dimensional OU state process is retained, but the expected returns of the risky assets include quadratic terms in the state variables:
\[
  \boldsymbol{\mu}_S(t, \mathbf{Y}_t) 
    = \text{diag}(\boldsymbol{\sigma}) \Big( \boldsymbol{\alpha} \big(\mathbf{Y}_t + \boldsymbol{\beta} \mathbf{Y}_t^2 \big) \Big) + r\mathbf{1},
\]
where the vector $\boldsymbol{\beta}$ governs the strength and direction of the nonlinearity, with overall magnitude measured by $\beta_{\text{norm}} = \|\boldsymbol{\beta}\|_2$. 
When $\beta_{\text{norm}} = 0$, the model collapses exactly to the affine setting of Section~\ref{sec:multiasset_multifactor_KO}, for which analytic solutions are available. 
As $\beta_{\text{norm}}$ increases, progressively stronger nonlinear effects are introduced, providing a controlled way to assess robustness in settings without an analytic benchmark.

To ensure a clear and direct comparison, we return to the $T=1.5$ year horizon and the foundational P-PGDPO algorithm. 
This choice allows us to attribute any changes in the optimal policy solely to the introduction of nonlinearity, isolating its effects from the advanced techniques introduced in the long-horizon study. 
All experiments in this subsection are conducted in the most challenging setting, with \(n=50\) assets and \(k=10\) state factors, using the P-PGDPO method. 

In the non-affine setting, the true optimal policy is unknown. 
To measure accuracy, we first identify in the affine case ($\beta_{\text{norm}}=0$) the iteration count that minimizes RMSE against the analytic benchmark solution from Section~\ref{sec:multi_asset_numerical}; we refer to this as the \emph{benchmark iteration}. 
For each $\beta_{\text{norm}}>0$, we then train the network for 10,000 iterations under the non-affine dynamics and compute the RMSE between the resulting policy and the affine benchmark policy at the benchmark iteration. 
This procedure is best interpreted as a deviation measure: it captures the policy changes induced by the nonlinear term relative to the affine baseline, rather than serving as an absolute accuracy metric. 
In this way, deviations reflect only the nonlinear structure, not stochastic optimization noise or premature convergence.

This approach differs from the reverse-engineering procedure of \citet{duarte2024machine}, where the environment is artificially constructed so that the optimal policies are known by design. 
Such a setup is useful for controlled evaluation of numerical methods, but its performance assessment is relative to the imposed functional form rather than to the optimal policy of an economically grounded model. 
By contrast, the affine-benchmark RMSE comparison employed here draws its reference from the same economic model in its limiting affine case, ensuring that the comparison reflects meaningful structural changes in the original portfolio optimization problem. 
Moreover, it allows us to verify continuity of the solution by confirming that the learned policy converges smoothly to the affine benchmark as $\beta_{\text{norm}}\!\to\!0$, and to attribute deviations at large $\beta_{\text{norm}}$ values directly to the nonlinearities in the model.

 %In practice, the two approaches are complementary in the sense that reverse engineering isolates algorithmic performance in a fully known synthetic environment, while the affine-benchmark RMSE method evaluates performance in the true target model with direct economic interpretability.

Table~\ref{tab:non_affine_rmse} reports the RMSE between the P-PGDPO policy learned under non-affine dynamics and the affine benchmark policy ($\beta_{\text{norm}}=0$), across varying degrees of nonlinearity. 
When $\beta_{\text{norm}}=0$, P-PGDPO achieves a low RMSE ($4.70\times 10^{-1}$), consistent with the high precision observed in the affine experiments of Section~\ref{sec:multi_asset_numerical}. 
As $\beta_{\text{norm}}$ increases, the RMSE grows monotonically, reflecting a systematic and stable adjustment of the learned policy to the nonlinear dynamics.

\begin{table}[h!]
    \centering
    \caption{Policy errors of P-PGDPO for the model with non-affine risk premia vs. affine benchmark ($n=50, k=10$). Results reported after 10,000 training iterations.}
    \label{tab:non_affine_rmse}
    % \begin{tabular}{@{}cc@{}}
    %     \toprule
    %     \(\beta_{\text{norm}}\) &  RMSE (P-PGDPO) \\ % Or "Policy RMSE (P-PGDPO)" if it's specifically at 10k iter
    %     \midrule
    %     0.0 & 0.04701 \\
    %     0.1 & 0.05686 \\
    %     0.5 & 0.06755 \\
    %     1.0 & 0.11469 \\
    %     2.0 & 0.20120 \\
    %     3.0 & 0.27530 \\
    %     4.0 & 0.33570 \\
    %     \bottomrule
    % \end{tabular}
    \begin{tabular}{@{}cccccccc@{}}
    \toprule
    $\beta_{\text{norm}}$ & 0.0 & 0.1 & 0.5 & 1.0 & 2.0 & 3.0 & 4.0 \\
    \midrule
    RMSE (P-PGDPO) & 0.04701 & 0.05686 & 0.06755 & 0.11469 & 0.20120 & 0.27530 & 0.33570 \\
    \bottomrule
\end{tabular}
\end{table}

To complement the RMSE-based results, Figure~\ref{fig:nonaffine_robust_comparison} provides a visual comparison of the learned policies. 
The figure plots the policy for the first asset $\pi_1$ as a function of $Y_1$ and time-to-maturity $T-t$ for $\beta_{\text{norm}} \in \{0, 0.1, 1.0, 2.0\}$, with all other factors fixed at their long-run means $\theta_{Y,j}$. 
When $\beta_{\text{norm}}=0$, the learned surface matches the affine benchmark almost exactly, confirming that P-PGDPO reproduces the known solution in the absence of nonlinearity. 
As $\beta_{\text{norm}}$ increases, the surfaces deviate smoothly from the affine shape, exhibiting curvature and level shifts consistent with the quadratic state effects in $\mu(t, Y_t)$. 
Even at $\beta_{\text{norm}}=2.0$, the learned policy remains smooth and free of spurious oscillations, indicating that the algorithm preserves numerical stability under substantial nonlinearity. 
This visual progression corroborates the quantitative results in Table~\ref{tab:non_affine_rmse}, showing that the P-PGDPO framework consistently captures the effect of nonlinear terms on the risk premium.

\begin{figure}[t!]
    \centering % 전체 그림 환경은 가운데 정렬하되,
    \hfill
    \begin{subfigure}[b]{0.45\textwidth}
        \includegraphics[width=\textwidth]{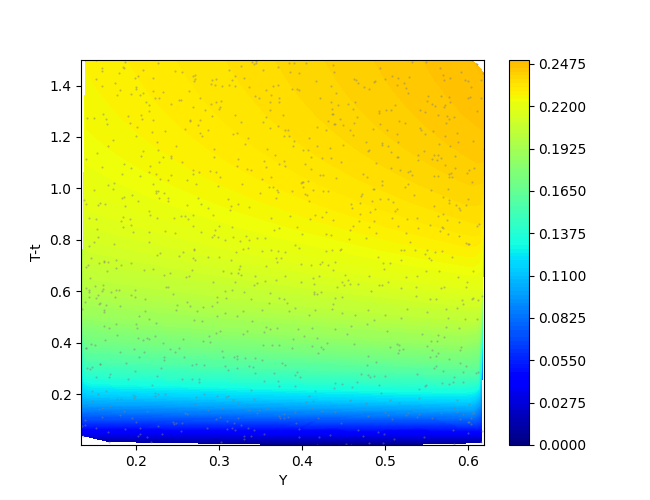}
        \caption{Analytical solution ($\beta_{\text{norm}}=0$).}
        \label{fig:nonaffine_robust_sol}
    \end{subfigure}
    \hfill
    \begin{subfigure}[b]{0.45\textwidth}
        \includegraphics[width=\textwidth]{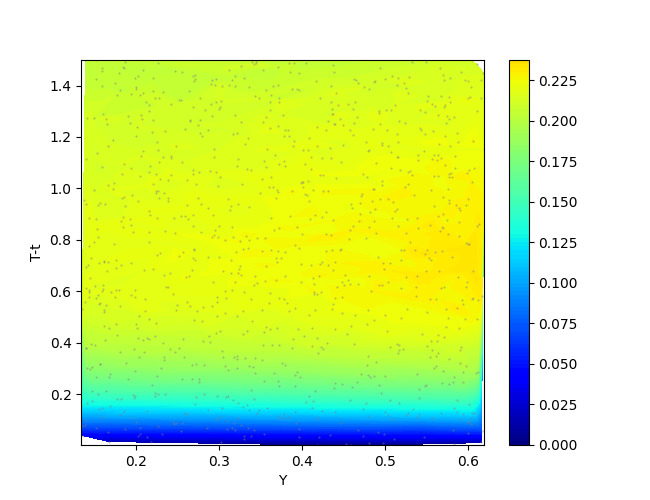}
        \caption{P-PGDPO policy ($\beta_{\text{norm}}=0.0$).}
        \label{fig:nonaffine_robust_b0_0}
    \end{subfigure}
    \hfill\\

    %\vspace{1ex} 
    \hfill
    \begin{subfigure}[b]{0.45\textwidth}
        \includegraphics[width=\textwidth]{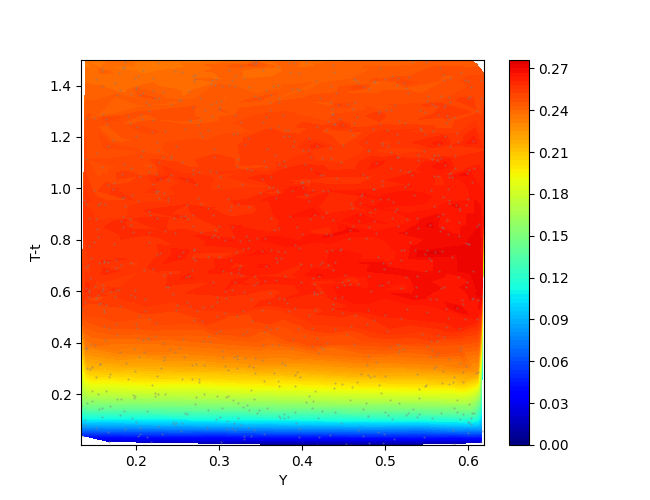}
        \caption{P-PGDPO policy ($\beta_{\text{norm}}=1.0$).}
        \label{fig:nonaffine_robust_b1_0}
    \end{subfigure}
    \hfill
    \begin{subfigure}[b]{0.45\textwidth}
        \includegraphics[width=\textwidth]{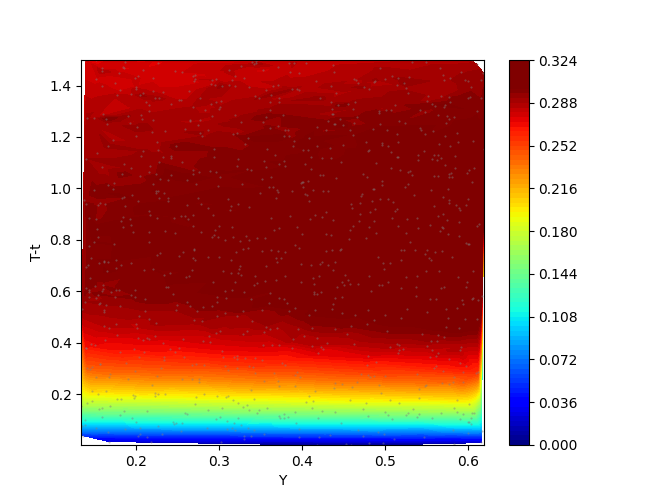}
        \caption{P-PGDPO policy ($\beta_{\text{norm}}=2.0$).}
        \label{fig:nonaffine_robust_b2_0}
    \end{subfigure}
    \hfill
    \caption{Visual comparison of optimal policies for the first asset ($\pi_1$) against the first state factor ($Y_1$) and time-to-maturity ($T-t$) in the $n=50, k=10$ setting. (a) shows the analytical solution for the affine model ($\beta_{\text{norm}}=0$). (b)-(e) display P-PGDPO learned policies for models with increasing strength of nonlinearity in asset risk premia, corresponding to $\beta_{\text{norm}}$ values of $0.0, 0.1, 1.0, 2.0$, respectively. Note the systematic change in policy structure and value ranges (indicated by colorbars) as $\beta_{\text{norm}}$ increases.}
    \label{fig:nonaffine_robust_comparison}
\end{figure}

%, which is a common practice for purely affine model evaluations presented

The quadratic terms affect both the myopic and intertemporal hedging components of the optimal policy. 
Myopic demand shifts directly in response to the altered instantaneous risk premia $\sigma A(Y_t+\beta Y_t^2)$, while intertemporal hedging adjusts more subtly as costate gradients $\partial_Y \lambda_t^*$ respond to the nonlinear state dynamics. 
For small $\beta_{\text{norm}}$, these adjustments are modest, and the policy remains close to the affine benchmark. 
For larger $\beta_{\text{norm}}$, the deviations become more pronounced, reflecting genuine changes in optimal hedging behavior rather than artifacts of the numerical method. 

Overall, P-PGDPO exhibits high accuracy in the affine limit and adapts stably under increasing nonlinearity. 
Its smooth convergence to the affine benchmark as $\beta_{\text{norm}} \to 0$ validates the implementation, while systematic deviations at larger $\beta_{\text{norm}}$ highlight the robustness and interpretability of the method in settings without analytical solutions. 
%Taken together with the results of Sections~\ref{sec:multi_asset_numerical} and~\ref{sec:long_horizon_robustness}, these findings show that the PG-DPO frameworks maintain strong performance in both tractable and intractable environments, and that the affine-benchmark RMSE provides a meaningful and economically grounded metric for evaluating policy accuracy.

\section{Conclusion}
\label{sec:conclusion}

We have proposed \emph{Pontryagin-Guided Direct Policy Optimization (PG-DPO)} for high-dimensional finite-horizon continuous-time consumption and portfolio choice. 
By integrating Pontryagin’s Maximum Principle (PMP) with backpropagation through time (BPTT), PG-DPO circumvents the curse of dimensionality that hampers DP- and PDE-based methods, directly learning policies that recover both myopic and intertemporal hedging components. 

A central innovation is the \emph{Projected PG-DPO (P-PGDPO)} algorithm. 
By exploiting the rapid stabilization of costate estimates, P-PGDPO projects them onto PMP’s first-order conditions, yielding policies that are both computationally efficient and structurally consistent. 
This design delivers substantial gains in accuracy and scalability. 

Numerical results confirm these strengths. 
Deep BSDE captures myopic demands but fails to recover hedging terms. 
In contrast, P-PGDPO achieves a policy RMSE of $1.52 \times 10^{-2}$ in a $n=50,k=10$ setting, with negligible error in the myopic component and accurate reconstruction of complex hedging demands. 
Theoretical guarantees, including our Policy Gap Bound, further establish near-optimality when costate estimates are accurate and FOC residuals are small. 
Finite-horizon problems fit naturally into this framework, with the long-horizon case emerging as a special case. 

The practical implications are significant. 
P-PGDPO’s ability to recover nonlinear hedging strategies in high dimensions makes it well-suited for institutional portfolios with many assets, ETFs, or derivatives. 
The principle—\emph{estimate costates first, then derive policy}—offers a blueprint for broader applications where PMP provides structure. 
Indeed, companion work applies it successfully to constrained portfolio optimization \citep{HJKL}. 

Future research should pursue: (i) regularization techniques for ill-conditioned covariance matrices; (ii) extensions to frictions such as transaction costs and taxes; (iii) richer preference specifications, including behavioral or sustainability criteria; and (iv) empirical validation through historical and simulated backtests. 
%Clarifying the scope of models that satisfy the regularity assumptions of P-PGDPO is another important direction. 

In sum, PG-DPO—and especially P-PGDPO—demonstrates how the synergy of deep learning and PMP’s structural guidance can expand the frontier of solvability in high-dimensional continuous-time financial control. 
It bridges optimal control and modern machine learning, providing both theoretical insight and a foundation for realistic, large-scale decision problems.

\section*{Acknowledgments}

This work was supported by the National Research Foundation of Korea (NRF) grant
funded by the Korea government (MSIT) (RS-2025-00562904).

\bibliographystyle{apalike}
\bibliography{gf_bib}

\newpage

%%%%%%%%%%%%%
\section*{Online Appendix for ``Breaking the Dimensional Barrier: A Pontryagin-Guided Direct Policy Optimization for Continuous-Time Multi-Asset Portfolio Choice"}

\begin{center}
    Jeonggyu Huh (Sungkyunkwan Univ.), Jaegi Jeon (Chonnam National Univ.), Hyeng Keun Koo (Ajou Univ.), and Byung Hwa Lim (Sungkyunkwan Univ.)
\end{center}

\appendix

%
%\renewcommand{\thesection}{S.1}

%\tableofcontents

\section{Regularity Conditions and Assumptions for a General Multi-Asset Model}\label{sec:regularity conditions}

We impose regularity conditions on the market parameters of the general multi-asset model in Section \ref{sec:market_model_with_state}.

\begin{assumption}[Market-parameter regularity]\label{ass:coeff}
Fix a finite horizon \( T > 0 \). The coefficients defining the state process \( \mathbf{Y}_t \) and influencing the wealth process \( X_t \) via Eq.~\eqref{eq:wealth}, namely
\(\boldsymbol{\mu}_{Y},\ \boldsymbol{\sigma}_{Y},\ \boldsymbol{\mu},\ \boldsymbol{\sigma},\ r\),
satisfy the following conditions:

\begin{enumerate}[label=\textup{(\alph*)}, leftmargin=3.2em]
\item \textbf{Measurability and continuity.}
    Each coefficient is jointly Borel-measurable in \( (t, \mathbf{y}) \) and continuous in \( \mathbf{y} \) for every fixed \( t \in [0,T] \).

\item \textbf{Global Lipschitz continuity.}
    There exists a constant \( K_{\mathrm{Lip}} > 0 \) such that for all
    \( t \in [0,T] \) and \( \mathbf{y}, \mathbf{y}' \in \mathbb{R}^k \),
    \begin{align*}
      &\lVert \boldsymbol{\mu}_{Y}(t,\mathbf{y}) - \boldsymbol{\mu}_{Y}(t,\mathbf{y}') \rVert
      + \lVert \boldsymbol{\sigma}_{Y}(t,\mathbf{y}) - \boldsymbol{\sigma}_{Y}(t,\mathbf{y}') \rVert \\
      &\quad + \lVert \boldsymbol{\mu}(t,\mathbf{y}) - \boldsymbol{\mu}(t,\mathbf{y}') \rVert
      + \lVert \boldsymbol{\sigma}(t,\mathbf{y}) - \boldsymbol{\sigma}(t,\mathbf{y}') \rVert
      \le K_{\mathrm{Lip}} \lVert \mathbf{y} - \mathbf{y}' \rVert.
    \end{align*}

\item \textbf{Linear growth.}
    There exists a constant \( K_{\mathrm{LG}} > 0 \) such that for all
    \( (t,\mathbf{y}) \in [0,T] \times \mathbb{R}^k \),
    \[
    \begin{aligned}
    &\lVert \boldsymbol{\mu}_{Y}(t,\mathbf{y}) \rVert^2
    + \lVert \boldsymbol{\sigma}_{Y}(t,\mathbf{y}) \rVert^2
    + \lVert \boldsymbol{\mu}(t,\mathbf{y}) \rVert^2
    + \lVert \boldsymbol{\sigma}(t,\mathbf{y}) \rVert^2 \\
    &\qquad \le K_{\mathrm{LG}}^2 \left( 1 + \lVert \mathbf{y} \rVert^2 \right), \qquad
    |r(t,\mathbf{y})| \le K_{\mathrm{LG}} \left( 1 + \lVert \mathbf{y} \rVert \right).
    \end{aligned}
    \]
\end{enumerate}
\end{assumption}

\noindent
Under Assumption~\ref{ass:coeff}, the SDE system for \( (\mathbf{Y}_t, X_t) \) (where \( X_t \) depends on the control processes \( \boldsymbol{\pi}_t \) and \( C_t \)) admits a unique strong solution for sufficiently well-behaved controls. For the existence and uniqueness of the solutions to SDEs, see chapters 5 and 6 in \cite{oksendal2013stochastic}.

Now we define the admissible set $\mathcal{A}( x,\mathbf{y})$ for our stochastic control problem. 

\begin{definition}
    [Admissible set]\label{ass:adm}
A pair \( (\boldsymbol{\pi}, C) \) is admissible if the following conditions hold:
\begin{enumerate}[label=\textup{(\alph*)},leftmargin=3.2em]
  \item \textbf{Adaptedness.}
        The process \( (\boldsymbol{\pi}_t, C_t)_{0 \le t \le T} \) is \( \{\mathcal{F}_t\} \)-progressively
        measurable, where \( \{\mathcal{F}_t\} \) is the filtration generated by
        the underlying Brownian motions.

  \item \textbf{Integrability.}
        \[
        \mathbb{E}\!\int_0^T
        \left( \lVert \boldsymbol{\pi}_t \rVert^{2} + C_t \right)\,dt \;<\; \infty.
        \]

  \item \textbf{Interior bounds.}
        There exist constants \( \Pi_{\max} > 0 \) and \( \xi \in (0,1) \) such that
        \[
        \lVert \boldsymbol{\pi}_t \rVert \;<\; \Pi_{\max}, \qquad
        0 \;\le\; C_t \;<\; \xi\,X_t,
        \qquad \forall t \in [0,T] \text{ a.s.}
        \]

  \item \textbf{Doadmissible wealth.}
        Let \( x_{\min} := -a/b \) (as defined for the HARA utility). Then
        \[
        X_t \;>\; x_{\min}, \qquad \forall t \in [0,T] \text{ a.s.}
        \]
\end{enumerate}
\end{definition}

\begin{assumption}[Risk covariance]\label{ass:cov}
The block covariance matrix
\[
\boldsymbol{\Omega} =
\begin{psmallmatrix}
\Psi & \rho \\[.2em] \rho^{\top} & \Phi
\end{psmallmatrix}
\]
is constant in time and uniformly positive definite: there exists \( \lambda_{\Omega} > 0 \) such that
\[
z^{\top} \boldsymbol{\Omega} z \ge \lambda_{\Omega} \lVert z \rVert^2, \qquad \forall z \in \mathbb{R}^{n+k}.
\]
As a consequence, both \( \Psi \) and \( \Phi \) are individually positive definite and thus invertible.
\end{assumption}

\noindent
The definition of an admissible set is standard in continuous-time portfolio choice models. Specifically, the doadmissible wealth condition in (d) ensures that the wealth process remains within the domain of the utility function, i.e., \( X_t > x_{\min} = -a/b \) almost surely, and also rules out arbitrage in this continuous-time setting \citep{dybvig1988nonnegative}.

Since \( U \) is \( C^2 \), strictly concave, and satisfies the Inada conditions on \( (x_{\min}, \infty) \), and because \( \delta > 0 \), we obtain the following integrability bounds:
\[
\mathbb{E}\left[e^{-\delta t}\,U(C_t)\right]
\;\le\;
e^{-\delta t}\,U\bigl(\xi\,\mathbb{E}[X_t]\bigr)
\;<\;\infty, \qquad
\mathbb{E}\left[U(X_T)\right] < \infty.
\]
Therefore, the value function $V(0,x,\mathbf{y})$ is well-defined for every admissible control pair \( (\boldsymbol{\pi}, C) \). We set all admissible pairs by the admissible set $\mathcal{A}(x,y)$.

\begin{assumption}[Uniform non-degeneracy]\label{ass:nondeg}

Let \( a(t,\mathbf{y}) := \boldsymbol{\sigma}(t,\mathbf{y})\, \Psi\, \boldsymbol{\sigma}(t,\mathbf{y})^{\top} \). There exists a constant \( \lambda > 0 \) such that
      \[
        \mathbf{z}^{\top} a(t,\mathbf{y})\, \mathbf{z}
        \;\ge\; \lambda\, \lVert \mathbf{z} \rVert^{2},
        \qquad
        \forall (t,\mathbf{y},\mathbf{z}) \in [0,T] \times \mathbb{R}^{k} \times \mathbb{R}^{n}.
      \]
\end{assumption}

\noindent
Assumption~\ref{ass:nondeg} ensures the diffusion matrix in the wealth dynamics is uniformly positive definite, which guarantees parabolic regularity of the associated Hamilton–Jacobi–Bellman (HJB) equation. 
We continue to work with the HARA utility function \( U \) defined in Eq.~\eqref{eq:hara_utility}, which is \( C^2 \), strictly increasing and concave on its domain \( (x_{\min}, \infty) \), and satisfies the Inada conditions. 
Together with the coefficient regularity in Assumption~\ref{ass:coeff}, these conditions yield the existence, uniqueness, and sufficient smoothness of the value function (e.g., \( V \in C^{1,2,2} \)); see \citet[Chapter 6 and 7]{oksendal2013stochastic}.

To ensure the well-posedness of the forward–backward stochastic differential equation (FBSDE) system and the applicability of Pontryagin’s Maximum Principle (PMP), additional smoothness conditions on the model coefficients are typically required.

\begin{assumption}[Additional regularity for PMP]\label{ass:pmp_reg}
In addition to Assumptions~\ref{ass:coeff}, \ref{ass:nondeg}, and an addmissible set in Definition \ref{ass:adm}, we assume:
\begin{enumerate}[label=\textup{(\alph*)},leftmargin=3.2em]
\item \textbf{\(C^1\)-regularity with respect to \(\mathbf{y}\).}
All coefficient functions—\( \boldsymbol{\mu}_{Y},\, \boldsymbol{\sigma}_{Y},\, \boldsymbol{\mu},\, \boldsymbol{\sigma},\, r \)—are continuously differentiable with respect to the state vector \( \mathbf{y} \). Moreover, their partial derivatives \( \partial_{\mathbf{y}} f \) are globally Lipschitz continuous in \( \mathbf{y} \) and satisfy linear growth bounds. These conditions ensure that the drift terms in the costate BSDEs, including expressions such as \( \mathcal{H}_{\mathbf{y}} \), satisfy the regularity requirements for the existence and uniqueness of solutions; see \citet[Chapter 2 and 6]{yong2012stochastic}.
\end{enumerate}
\end{assumption}

\section{Derivation of the Adjoint Diffusion Coefficient via It\^o's Lemma} % Title retained for now
\label{sec:ito_derivation}

We derive the diffusion coefficients \( \mathbf{Z}^{\lambda X} \) in Section \ref{sec:PMP_multiasset_state_costate}. In vector form, the state process satisfies
\[
d\mathbf{Y}_t = \boldsymbol{\mu}_Y\bigl(t,\mathbf{Y}_t\bigr)\,dt + \boldsymbol{\sigma}_Y\bigl(t,\mathbf{Y}_t\bigr)\,d\mathbf{W}^Y_t,
\]
where \( \boldsymbol{\mu}_Y(t,\mathbf{Y}_t) \in \mathbb{R}^k \) is the drift vector, \( \boldsymbol{\sigma}_Y(t,\mathbf{Y}_t) \in \mathbb{R}^{k \times k} \) is a diagonal matrix of diffusion coefficients. i.e.,
\[
\boldsymbol{\sigma}_Y(t,\mathbf{Y}_t) = \operatorname{diag}(\sigma_{Y,1}(t,\mathbf{Y}_t), \dots, \sigma_{Y,k}(t,\mathbf{Y}_t)),
\]
and \( \mathbf{W}^Y_t \in \mathbb{R}^k \) is a vector of correlated Brownian motions.

Moreover, let \( \boldsymbol{\mu}(t,\mathbf{Y}_t) \in \mathbb{R}^n \) denote the vector of expected returns, \( \boldsymbol{\sigma}(t,\mathbf{Y}_t) \in \mathbb{R}^{n \times n} \) the diagonal matrix of volatilities, and \( \mathbf{W}^X_t \in \mathbb{R}^n \) the Brownian motion vector for risky assets. Then the dynamics of the vector of risky asset prices \( \mathbf{S}_t \in \mathbb{R}^n \) can be compactly written as
\[
d\mathbf{S}_t = \operatorname{diag}(\mathbf{S}_t)\,\boldsymbol{\mu}(t,\mathbf{Y}_t)\,dt + \operatorname{diag}(\mathbf{S}_t)\,\boldsymbol{\sigma}(t,\mathbf{Y}_t)\,d\mathbf{W}^X_t.
\]

In the Pontryagin framework, the adjoint (or costate) process associated with the investor's wealth \(X_t\) is identified with the partial derivative of the value function \(V\). Concretely, we set \(\lambda_t = V_X\!\bigl(t,\,X_t,\,\mathbf{Y}_t\bigr),\) where \(\mathbf{Y}_t \in \mathbb{R}^k\) represents additional state variables and \(X_t\) is the investor's scalar wealth process. Our goal in this section is to derive the diffusion coefficient corresponding to the \(d\mathbf{W}^X_t\) term in the backward SDE for \(\lambda_t\), namely \(\mathbf{Z}_t^{\lambda X}\) from BSDE \eqref{eq:lambda_BSDE}, which plays a crucial role in determining the optimal portfolio via FOC \eqref{eq:foc_pi}. % Updated refs

Recall that under an optimal control \(\bigl(\boldsymbol{\pi}_t^*, C_t^*\bigr)\), the wealth \(X_t^*\) evolves according to Eq.~\eqref{eq:X_forward_optimal} and Eq.~\eqref{eq:Y_forward}, and the \(k\)-dimensional state process \(\mathbf{Y}_t^*\) follows its own SDE, also presented in Section~\ref{sec:PMP_multiasset_state_costate}. The dynamics are driven by correlated Brownian motions \(\mathbf{W}_t^X \in \mathbb{R}^n\) and \(\mathbf{W}_t^Y \in \mathbb{R}^k\) with the joint covariance structure \(\operatorname{Cov}(d\mathbf{W}_t) = \begin{pmatrix} \Psi & \rho \\ \rho^\top & \Phi \end{pmatrix} dt.\)

Since \(\lambda_t = V_X\!\bigl(t,\,X_t,\,\mathbf{Y}_t\bigr),\) we apply Ito's multidimensional lemma to \(V_X\). The differential \(d\lambda_t\) can be expressed as:
\begin{align}
d\lambda_t
&=\; V_{Xt}\,dt \;+\; V_{XX}\,dX_t \;+\; V_{X\mathbf{Y}}\,d\mathbf{Y}_t \nonumber\\
&\quad +\; \tfrac{1}{2}\,V_{XXX}\,\bigl(dX_t\bigr)^2 \;+\; V_{XX\mathbf{Y}}\,(dX_t)\,(d\mathbf{Y}_t)^\top \;+\; \tfrac{1}{2}\, \text{Tr}\bigl(V_{X\mathbf{Y}\mathbf{Y}}\,(d\mathbf{Y}_t)(d\mathbf{Y}_t)^\top\bigr), % Corrected quadratic terms for vector Y
\label{eq:ito_adjdiff_revised_flow} % Adjusted label
\end{align}
where subscripts denote partial derivatives (e.g., \(V_{X\mathbf{Y}}\) is the gradient w.r.t \(\mathbf{Y}\) of \(V_X\), \(V_{X\mathbf{Y}\mathbf{Y}}\) is the Hessian w.r.t \(\mathbf{Y}\) of \(V_X\)).

Our objective is to identify the coefficient of \(d\mathbf{W}_t^X\) in the expansion of \(d\lambda_t\). This coefficient is precisely the process \(\mathbf{Z}_t^{\lambda X}\) appearing in the BSDE \eqref{eq:lambda_BSDE}. Substituting the SDEs for \(dX_t\) and \(d\mathbf{Y}_t\) (from Section~\ref{sec:PMP_multiasset_state_costate}) into the Ito formula \eqref{eq:ito_adjdiff_revised_flow}, we observe that terms involving \(d\mathbf{W}_t^X\) arise directly from the diffusion part of the \(V_{XX}\,dX_t\) term, and indirectly from the \(V_{X\mathbf{Y}}\,d\mathbf{Y}_t\) term due to the correlation \(\rho\) between \(d\mathbf{W}_t^X\) and \(d\mathbf{W}_t^Y\). A careful application of Ito's lemma, considering the quadratic variation and covariation terms related to the covariance matrix \(\begin{pmatrix} \Psi & \rho \\ \rho^\top & \Phi \end{pmatrix}\), allows isolation of the full coefficient multiplying \(d\mathbf{W}_t^X\). Representing \(\mathbf{Z}_t^{\lambda X}\) as a \(1 \times n\) row vector, this process is found to be: % Added more detail on derivation source
\begin{equation}
\mathbf{Z}_t^{\lambda X} = \underbrace{(\partial_x \lambda_t^*) X_t^* \boldsymbol{\pi}_t^{*\top} \boldsymbol{\sigma}(t,\mathbf{Y}_t^*)}_{\text{Term from } V_{XX}dX_t} 
+ \underbrace{(\partial_{\mathbf{Y}} \lambda_t^*) \boldsymbol{\sigma}_Y(t,\mathbf{Y}_t^*) \rho^\top \Psi^{-1}}_{\text{Term from } V_{XY}dY_t \text{ via correlation}}. 
\label{eq:Z_lambda_X_derived_revised} % Adjusted label, ensure sigma is matrix
\end{equation}
The first term reflects the direct impact of wealth volatility related to the portfolio choice \(\boldsymbol{\pi}_t^*\), scaled by the second derivative of the value function with respect to wealth, \(V_{XX}\) (or \(\partial_x \lambda_t^*\)). The second term captures the intertemporal hedging component arising from the correlation between asset returns and state variable movements, mediated by the cross-derivative of the value function \(V_{X\mathbf{Y}}\) (or \(\partial_{\mathbf{Y}}\lambda_t^*\)).

This derivation provides the explicit form of the adjoint diffusion coefficient \(\mathbf{Z}_t^{\lambda X}\), which is fundamental for understanding the structure of the optimal portfolio policy \eqref{eq:pi_star} and for constructing the Pontryagin adjoint equations.

%----------------------------------------------

\section{Proofs}

\subsection{Proof of Lemma \ref{lem:pathwise_bptt_rec_full}}

The recurrence relation for the pathwise costate $\lambda_k^{(i)} := \partial J^{(i)}/\partial X_k^{(i)}$ is derived by applying the chain rule. The total reward $J^{(i)}$ is decomposed into the utility from the current step $k$ and the total future reward $J_{k+1}^{(i)}$.
\[
    \lambda_k^{(i)} = \frac{\partial}{\partial X_k^{(i)}}\left( e^{-\delta(t_k-t_0)}U(C_k^{(i)})\Delta t \right) + \frac{\partial J_{k+1}^{(i)}}{\partial X_k^{(i)}}.
\]
The second term becomes $\lambda_{k+1}^{(i)} (\partial X_{k+1}^{(i)}/\partial X_k^{(i)})$. Differentiating the discrete wealth dynamics, which includes the consumption term $-C_k^{(i)}\Delta t$, with respect to $X_k^{(i)}$ yields:
\[
    \frac{\partial X_{k+1}^{(i)}}{\partial X_k^{(i)}} = 1 + \bigl(r_k+\boldsymbol{\pi}_k^{\top}(\boldsymbol{\mu}_k-r_k\mathbf{1})\bigr)\Delta t - \frac{\partial C_k^{(i)}}{\partial X_k^{(i)}}\Delta t + \boldsymbol{\pi}_k^{\top}\boldsymbol{\sigma}_k \Delta\mathbf{W}_k^{X,(i)}.
\]
Substituting these components and rearranging the terms for $\lambda_k^{(i)}$ yields the expression in Eq.~\eqref{eq:pathwise_rec_full}.

\subsection{Proof of Lemma \ref{lem:quad_drift_full}}

The product inside the expectation is an inner product of vectors. The term involving the adapted part, $\mathbb{E}[\lambda_{k+1}(\dots)|\mathcal{F}_{t_k}]$, is zero as it is the expectation of a martingale increment. The non-zero terms arise from the quadratic covariation of the raw martingale components of $\lambda_{k+1}^{(i)}$ with the wealth shock. Using the standard properties $\mathbb{E}[(\Delta\mathbf W_k^{X})(\Delta\mathbf W_k^{X})^\top|\mathcal{F}_{t_k}]=I\Delta t$ and $\mathbb{E}[(\Delta\mathbf W_k^{Y})(\Delta\mathbf W_k^{X})^\top|\mathcal{F}_{t_k}]=\boldsymbol{\rho}_{Y\!X}\Delta t$, where $\boldsymbol{\rho}_{Y\!X}$ is the cross-correlation matrix, yields the stated result.

\subsection{Proof of Theorem \ref{thm:bptt_pmp_correspondence_full}}

We begin by taking the conditional expectation $\mathbb{E}[\cdot\mid\mathcal{F}_{t_k}]$ of the entire pathwise recurrence in Eq.~\eqref{eq:pathwise_rec_full} from Lemma~\ref{lem:pathwise_bptt_rec_full}. The left-hand side becomes $\lambda_k$ by definition. On the right-hand side, the expectations of the drift terms are straightforward. The crucial step is handling the stochastic term $\mathbb{E}[\lambda_{k+1}^{(i)}(\dots)|\mathcal{F}_{t_k}]$, for which we substitute the result from Lemma~\ref{lem:quad_drift_full}. This transforms the stochastic term into the additional drift component involving the effective process $\mathbf{Z}^{\lambda X}_k$. After collecting all drift terms and separating the martingale components, we arrive at the discrete-time BSDE presented in Eq.~\eqref{eq:disc_bsde_costate_full}. The final step involves a direct algebraic comparison of this drift term with the definition of the Hamiltonian $\mathcal{H}$ from Section~\ref{sec:PMP_multiasset_state_costate}, which confirms their identity.

\subsection{Proof of Theorem~\ref{thm:policy_gap}}
\label{app:math_foundation_publishable}

This appendix provides a complete, self-contained analytic backbone for Theorem~\ref{thm:policy_gap}.
All proof sketches are expanded into detailed arguments, relying on parabolic maximal regularity (MR) with VMO (vanishing mean oscillation) coefficients and
the strong monotonicity of the PMP first-order condition (FOC) map.
The reflected SDE framework is used purely as a localization device; passing to expanding domains
recovers the original non-reflected problem \citep{karatzas1991brownian,lieberman1996second}.

\subsubsection{Table of Symbols}

\begin{center}
\begin{tabular}{cc}
\toprule
Symbol & Meaning \\ \midrule
$E_v,E_h,E_f$ & Gradient / coefficient / source errors \\
$\varepsilon$ & FOC residual norm \\
$\delta_{\mathrm{BPTT}}$ & Costate estimation error (Def.~\ref{def:bptt_error_publishable}) \\
$\underline\kappa$ & Strong monotonicity modulus of the FOC map (Lemma~\ref{lem:strong_monotonicity}) \\
$C_S,C_v''$ & PDE stability / coupled-error constants \\
$L_G, L_F, C_{\text{tot}}$ & FOC-map Lipschitz / final amplification constants \\
$\tau_\star$ & Contraction slab length (distinct from numerical step $\Delta t$) \\
\bottomrule
\end{tabular}
\end{center}

\subsubsection{Preliminaries and Notation}

Throughout, set the primary costate $\lambda := \partial_x V^{\pi,c}$.
For any admissible control $(\pi,c)$ the sub-value function $V^{\pi,c}$ solves
\[
  -\partial_t V^{\pi,c}
  \;=\; \cL_{\pi,c}[V^{\pi,c}] + f_{\pi,c},\qquad
  V^{\pi,c}(T,\cdot)=g,
\]
on $Q_T:=(0,T)\times\Omega$, with oblique-derivative boundary $\partial_{\mathbf n}V=0$ on $\partial\Omega$
matching a Skorokhod reflection used for localization; solutions on expanding domains converge
to the non-reflected solution \citep{karatzas1991brownian,lieberman1996second}.

Define the error quantities on $Q_T$ by
\[
  \begin{aligned}
    h &:= (a_{\pi,c}-a_{\ast},\,b_{\pi,c}-b_{\ast}), &\quad E_h &:= \|h\|_{L^{q,p}},\\
    E_f &:= \bigl\|e^{-\delta t}\bigl(U(c^{\ast})-U(c)\bigr)\bigr\|_{L^{q,p}}, &\quad
    E_v &:= \|\nabla V^{\pi,c}-\nabla V^{\ast}\|_{L^{q,p}},\\
    \varepsilon &:= \|\mathcal R_{\mathrm{FOC}}\|_{L^{q,p}}, &&
  \end{aligned}
\]

\paragraph{Convention on $Z_{\lambda X}$.}
We use
\[
  Z_{\lambda X} \;=\; (\partial_x\lambda)\,X\,\pi^{\top}\sigma \;+\; (\partial_Y\lambda)\,\sigma_Y\,\rho^{\top},
  \qquad
  \sigma\,Z_{\lambda X}^{\top}
  \;=\; X\bigl(-\partial_x\lambda\bigr)\,\Sigma\,\pi \;+\; \sigma\rho\sigma_Y^{\top}(\partial_Y\lambda)^{\top},
\]
consistent with the portfolio FOC (cf.\ main text Eq.~(11)); see \citet{yong2012stochastic} for PMP–HJB links.

%------------------------------------------------------------------------------
\subsubsection{Minimal Assumptions}\label{subsec:assumption_publishable}

\begin{assumption}[Baseline well-posedness and regularity]\label{ass:baseline}
Fix $T>0$, a bounded $C^{1,1}$ domain $\Omega$, mixed-norm exponents $p>d+2$ and $q\ge 2$, and a compact cylinder
$\mathcal D=[0,T]\times[\underline x,\bar x]\times\{\|y\|\le R_Y\}$.
There exist constants $\underline\nu,\overline\nu,L_{\mathrm{coef}},C_{\mathrm{MR}},\eta_0>0$ such that:
\begin{enumerate}[leftmargin=1.5em] % ← 바깥 enumerate 들여쓰기
\item The Skorokhod-reflected SDE with inward normal $\mathbf n\in C^{1,\alpha}$ is well posed; expanding domains
converge to the non-reflected solution \citep{karatzas1991brownian}.
\item Parabolicity:
\begin{itemize}[leftmargin=2.5em] % ← 안쪽 itemize 들여쓰기
\item[(U)] Uniform ellipticity: $\underline\nu|\xi|^2\le\xi^{\top} a_{\pi,c}\xi\le\overline\nu|\xi|^2$
and $(a_{\pi,c},b_{\pi,c})$ have VMO/BMO (bounded mean oscillation) bounds on $\mathcal D$ \citep{dong2009parabolic};
\item[(H)] The $Y$-block is uniformly elliptic with VMO bounds while the $X$-block may degenerate; MR and gradient bounds below
are applied on interior cylinders $Q_r\Subset\mathcal D$ \citep{lieberman1996second}.
\end{itemize}
\item There exists $R_0>0$ such that $\sup_{0<r\le R_0}\omega_a(r)\le\eta_0<\bar\eta(d,p,q)$ as in \citet{dong2009parabolic}.
\item Costate and curvature bands on $\mathcal D$: $\underline\lambda\le\lambda\le\bar\lambda$ and $\underline\kappa_x\le -\partial_x\lambda\le \bar\kappa_x$.
\item $U(c)=c^{1-\gamma}/(1-\gamma)$ with $\gamma>0$, $\gamma\neq 1$, and coefficients obey linear growth/boundedness.
\end{enumerate}
\end{assumption}

\begin{remark}[Ellipticity and interior estimates]
Regime (U) can be enforced by a local lower bound on $\|\pi\|$ on $\mathcal D$.
Under (H), arguments apply on $Q_r\Subset\mathcal D$ with local constants; a finite cover yields global bounds \citep{lieberman1996second}.
\end{remark}

%------------------------------------------------------------------------------
\subsubsection{Automatic Properties of the Model}

\begin{lemma}[Strong monotonicity of the portfolio FOC map]\label{lem:strong_monotonicity}
Define
\[
  \mathcal G_{\pi}(\pi;c,\lambda,\partial\lambda)
  \;:=\; \lambda(\mu-r\mathbf 1)+\sigma\,Z_{\lambda X}^{\top}
  \;=\; X\bigl(-\partial_x\lambda\bigr)\,\Sigma\,\pi \;+\; \lambda(\mu-r\mathbf 1)\;+\;\sigma\rho\sigma_Y^{\top}(\partial_Y\lambda)^{\top}.
\]
If $\Sigma\succ 0$ and Assumption~\ref{ass:baseline} holds, then $\partial_{\pi}\mathcal G_{\pi}=X(-\partial_x\lambda)\,\Sigma\succ 0$.
Hence for all $\pi_1,\pi_2$,
\[
  \big\langle \mathcal G_{\pi}(\pi_1)-\mathcal G_{\pi}(\pi_2),\,\pi_1-\pi_2\big\rangle
  \;\ge\; \underline\kappa\,\|\pi_1-\pi_2\|^2,\qquad
  \underline\kappa:=\underline x\,\underline\kappa_x\,\lambda_{\min}(\Sigma)>0,
\]
so $\mathcal G_{\pi}$ is strongly monotone and has a Lipschitz inverse \citep{browder1965nonlinear,minty1962monotone}.
\end{lemma}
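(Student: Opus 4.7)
The plan is to exploit the fact that $\mathcal G_{\pi}$ is affine in $\pi$, so that strong monotonicity reduces to uniform positive definiteness of a single Jacobian matrix whose lower bound is supplied directly by Assumption~\ref{ass:baseline}. The Browder--Minty / inverse-function step is then automatic.

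First I would verify the closed-form identity stated in the lemma by a direct transposition. Using the convention $Z_{\lambda X} = (\partial_x\lambda)\,X\,\pi^{\top}\sigma + (\partial_Y\lambda)\,\sigma_Y\,\rho^{\top}$ fixed at the top of the appendix, together with $\Sigma = \sigma\Psi\sigma^{\top}$ and the concavity sign $\partial_x\lambda<0$, I obtain $\sigma Z_{\lambda X}^{\top} = X(-\partial_x\lambda)\,\Sigma\,\pi + \sigma\rho\sigma_Y^{\top}(\partial_Y\lambda)^{\top}$ after the bookkeeping on the signs and transposes. Because the FOC map is defined with the adjoint arguments $(\lambda,\partial\lambda)$ frozen as parameters, differentiating in $\pi$ then gives the constant Jacobian $\partial_{\pi}\mathcal G_{\pi} = X(-\partial_x\lambda)\,\Sigma$ announced in the statement.

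Next I would derive the uniform lower bound on this Jacobian. Assumption~\ref{ass:baseline}(4) yields $-\partial_x\lambda \ge \underline\kappa_x > 0$ on $\mathcal D$, admissibility gives $X \ge \underline x > 0$, and $\Sigma \succ 0$ combined with compactness of $\mathcal D$ gives $\Sigma \succeq \lambda_{\min}(\Sigma)\,I$. Hence $\partial_{\pi}\mathcal G_{\pi} \succeq \underline\kappa\, I$ with $\underline\kappa := \underline x\,\underline\kappa_x\,\lambda_{\min}(\Sigma) > 0$. Because the Jacobian is constant in $\pi$, the fundamental theorem of calculus reduces to a single linear identity $\mathcal G_{\pi}(\pi_1)-\mathcal G_{\pi}(\pi_2) = X(-\partial_x\lambda)\,\Sigma\,(\pi_1-\pi_2)$, from which the stated inequality $\langle \mathcal G_{\pi}(\pi_1)-\mathcal G_{\pi}(\pi_2),\,\pi_1-\pi_2\rangle \ge \underline\kappa\,\|\pi_1-\pi_2\|^{2}$ follows immediately by the quadratic form bound for symmetric positive definite matrices.

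Finally, Lipschitz invertibility is a standard corollary: either invoke Browder--Minty for strongly monotone hemicontinuous maps on $\mathbb R^{n}$ \citep{browder1965nonlinear,minty1962monotone}, or, more concretely, solve the affine equation explicitly as $\pi = [X(-\partial_x\lambda)\,\Sigma]^{-1}\bigl[-\lambda(\mu-r\mathbf 1)-\sigma\rho\sigma_Y^{\top}(\partial_Y\lambda)^{\top}\bigr]$, reading off the Lipschitz constant of $\mathcal G_{\pi}^{-1}$ as at most $1/\underline\kappa$. The only genuine subtlety I anticipate is the bookkeeping in the first step: the martingale coefficient $Z_{\lambda X}$ contains a $\pi$-dependent piece through the wealth diffusion, and one must be careful not to absorb implicit $\pi$-dependence into $(\lambda,\partial\lambda)$, which are treated here as exogenous arguments of the FOC map rather than as functionals of $\pi$. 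Once this separation is made cleanly, the remaining steps are routine linear algebra with no analytic subtleties beyond the sign and magnitude bands already posited.
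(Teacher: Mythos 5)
Your proposal is correct and follows essentially the same route as the paper, which gives no separate proof but justifies the lemma inline: the map is affine in $\pi$ with constant Jacobian $X(-\partial_x\lambda)\Sigma$, the bands $X\ge\underline x$, $-\partial_x\lambda\ge\underline\kappa_x$ and $\Sigma\succeq\lambda_{\min}(\Sigma)I$ from Assumption~\ref{ass:baseline} give the modulus $\underline\kappa$, and Browder--Minty (or explicit inversion of the affine map) yields the Lipschitz inverse. Your added care in treating $(\lambda,\partial\lambda)$ as frozen parameters rather than functionals of $\pi$ is exactly the right reading of the paper's convention on $Z_{\lambda X}$.
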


\begin{lemma}[Maximal regularity for $V^{\pi,c}$]\label{lem:max_regularity}
Under Assumption~\ref{ass:baseline} and $\eta_0<\bar\eta(d,p,q)$, both $V^{\pi,c}$ and $V^{\ast}$ lie in $W^{2,1}_{q,p}(Q_T)$ and
\[
  \|D^2V^{\pi,c}\|_{L^{q,p}}+\|\partial_t V^{\pi,c}\|_{L^{q,p}}
  \;\le\; C_{\mathrm{MR}}\Bigl(\|f_{\pi,c}\|_{L^{q,p}}+\|g\|_{W^{2-2/p}_q}\Bigr),
\]
with the same bound for $V^{\ast}$; under (H) the estimate holds on $Q_r\Subset\mathcal D$ \citep{dong2009parabolic}.
\end{lemma}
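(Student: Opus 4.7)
The plan is to view the sub-value function $V^{\pi,c}$ as the solution of a \emph{linear} backward parabolic Cauchy--oblique problem with a fixed admissible control, and apply parabolic maximal regularity for equations with VMO leading coefficients (the Dong--Krylov theory cited as \citet{dong2009parabolic}) combined with Lieberman's oblique-derivative extension. First, I would fix $(\pi,c)$ and time-reverse via $\tilde V(t,\cdot):=V^{\pi,c}(T-t,\cdot)$ to obtain the standard forward problem
\[
  \partial_t \tilde V - \cL_{\pi,c}[\tilde V] \;=\; \tilde f_{\pi,c}
  \quad\text{on } Q_T,\qquad
  \tilde V(0,\cdot)=g,\qquad
  \partial_{\mathbf n}\tilde V=0 \text{ on } \partial_L Q_T.
\]
Since $(\pi,c)$ is admissible and the compact cylinder $\mathcal D$ bounds all coefficients, Assumption~\ref{ass:baseline}(2)(U) gives that $a_{\pi,c}$ is uniformly elliptic between $\underline\nu$ and $\overline\nu$ with VMO modulus $\omega_a(r)\le\eta_0$ for $r\le R_0$, and the drift $b_{\pi,c}$ is bounded; the source $\tilde f_{\pi,c}=e^{-\delta(T-t)}U(c)$ lies in $L^{q,p}(Q_T)$ and $g\in W^{2-2/p}_q(\Omega)$ by HARA smoothness and the ambient bounds.

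Second, in regime (U), I would invoke the Dong (2009) mixed-norm maximal regularity theorem for second-order parabolic operators with VMO leading coefficients: under the smallness threshold $\eta_0<\bar\eta(d,p,q)$, there exists a unique $\tilde V\in W^{2,1}_{q,p}(Q_T)$ satisfying
\[
  \|D^2\tilde V\|_{L^{q,p}}+\|\partial_t \tilde V\|_{L^{q,p}}
  \;\le\; C\bigl(\|\tilde f_{\pi,c}\|_{L^{q,p}}+\|g\|_{W^{2-2/p}_q}\bigr),
\]
with $C=C(\underline\nu,\overline\nu,L_{\mathrm{coef}},T,\Omega,p,q)$. To accommodate the oblique (co-normal reflection) boundary condition I would cite Lieberman's treatment \citep{lieberman1996second}, which verifies the Lopatinski--Shapiro complementing condition for the reflection field $\mathbf n\in C^{1,\alpha}$ on a $C^{1,1}$ domain and extends MR up to $\partial_L Q_T$. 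Reversing time absorbs the sign flip into $C_{\mathrm{MR}}$. Because the optimal pair $(\pi^*,c^*)$ is itself admissible, the identical argument applied to $V^*$ yields the same bound.

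Third, in regime (H), the $X$-block of $a_{\pi,c}$ may degenerate (e.g., where $\|\pi\|$ is small or $X$ approaches $\underline x$), so global MR fails. The plan is to localize on any subcylinder $Q_r\Subset\mathcal D$: pick a cutoff $\chi\in C_c^\infty(Q_{2r})$ with $\chi\equiv 1$ on $Q_r$, so that $\chi V^{\pi,c}$ solves a parabolic equation on $Q_{2r}$ with source $\chi \tilde f_{\pi,c}+[\cL_{\pi,c},\chi]V^{\pi,c}$. On the strictly interior cylinder the admissibility constraints enforce uniform positive lower bounds on $\|\pi\|$ and $X$, which restore full uniform ellipticity with VMO moduli inherited from Assumption~\ref{ass:baseline}. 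Applying the Dong-type estimate to $\chi V^{\pi,c}$ yields the desired bound on $Q_r$, with the commutator term $[\cL_{\pi,c},\chi]V^{\pi,c}$ absorbed through a standard Caccioppoli/Gr\"onwall iteration in $W^{2,1}_{q,p}$ and the lower-order mass handled by interpolation.

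The main obstacle is verifying the Lopatinski--Shapiro complementing condition for the oblique reflection on a $C^{1,1}$ domain with VMO (not continuous) leading coefficients, since classical Lieberman estimates were formulated for H\"older-continuous coefficients; the extension to VMO coefficients, though available in the modern literature, must be invoked carefully to keep $C_{\mathrm{MR}}$ independent of the particular admissible $(\pi,c)$. A pragmatic workaround, consistent with Assumption~\ref{ass:baseline}(1), is to bypass the boundary entirely by solving on an expanding family of domains $\Omega_n\uparrow\R^d$ and passing to the limit, using the reflected problem only as a localization device; this yields the stated estimate on any $Q_r\Subset\mathcal D$ with constants depending only on $r$ and the ellipticity/VMO data.
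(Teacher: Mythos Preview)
Your proposal is correct and in fact considerably more detailed than the paper's own treatment: the paper does not supply a proof for this lemma at all, but simply states it as a direct consequence of the parabolic maximal-regularity theory with VMO leading coefficients, citing \citet{dong2009parabolic}. Your plan---time-reverse to a forward Cauchy problem, invoke the Dong mixed-norm $W^{2,1}_{q,p}$ estimate under the VMO smallness threshold $\eta_0<\bar\eta(d,p,q)$, and localize on interior cylinders $Q_r\Subset\mathcal D$ for regime (H)---is exactly the intended argument, and the paper's surrounding framework (reflected SDE as a localization device, passage to expanding domains in Assumption~\ref{ass:baseline}(1), interior estimates under (H) per the remark following the assumption) matches your workaround for the oblique-boundary/VMO compatibility issue. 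What you add beyond the paper is the explicit discussion of the Lopatinski--Shapiro condition and the commutator/Caccioppoli bookkeeping for the cutoff in regime (H); these are the right details to fill in, but the paper is content to leave them implicit in the citation.
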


%------------------------------------------------------------------------------
\subsubsection{Foundational Lemmas}

\begin{lemma}[FOC reduction: coefficient gap]\label{lem:foc_reduction_publishable}
Under Assumption~\ref{ass:baseline} and Lemma~\ref{lem:strong_monotonicity},
\[
  E_h \;\le\; L_G\,E_v \;+\; C_{\mathrm{FOC}}\,\varepsilon.
\]
\end{lemma}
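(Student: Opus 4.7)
The plan is to reduce the PDE coefficient gap $E_h$ to Lipschitz-controlled gaps in the controls $(\pi,c)$, and then bound each control gap by combining the PMP first-order conditions with the warm-up FOC residual. Since $a_{\pi,c}$ is polynomial in $\pi$ with bounded block coefficients and $b_{\pi,c}$ is affine in $(\pi,c)$ on the compact cylinder $\mathcal D$ (Assumption~\ref{ass:baseline}), pointwise Lipschitz continuity immediately gives $E_h \le L_{\mathrm{coef}}(\|\pi-\pi^{\ast}\|_{L^{q,p}}+\|c-c^{\ast}\|_{L^{q,p}})$. The task then splits into a consumption piece and a portfolio piece, handled by two different PMP optimality conditions.

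\textbf{Consumption gap.} The consumption FOC $e^{-\delta t}U'(c)=\lambda$ inverts to $c=(U')^{-1}(e^{\delta t}\lambda)$. On the uniform costate band $[\underline\lambda,\bar\lambda]$ provided by Assumption~\ref{ass:baseline} the CRRA inverse is globally Lipschitz, so I will deduce $\|c-c^{\ast}\|_{L^{q,p}} \le L_{(U')^{-1}}\,\|\lambda-\lambda^{\ast}\|_{L^{q,p}} \le L_{(U')^{-1}}\,E_v$, since $\lambda=\partial_x V$ is a component of $\nabla V$.

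\textbf{Portfolio gap via strong monotonicity.} I will apply Lemma~\ref{lem:strong_monotonicity} to the map $\pi\mapsto\mathcal G_{\pi}(\pi;c,\lambda^{\pi,c},\partial\lambda^{\pi,c})$ evaluated at the warm-up costates. By definition of the residual, $\mathcal G_{\pi}(\pi;c,\lambda^{\pi,c},\partial\lambda^{\pi,c})=\mathcal R_{\mathrm{FOC}}$, whereas $\mathcal G_{\pi}(\pi^{\ast};c^{\ast},\lambda^{\ast},\partial\lambda^{\ast})=0$ at the true optimum. Adding and subtracting $\mathcal G_{\pi}(\pi^{\ast};c,\lambda^{\pi,c},\partial\lambda^{\pi,c})$, pairing with $\pi-\pi^{\ast}$, and combining the monotonicity inequality with Cauchy--Schwarz will yield
\[
  \underline\kappa\,\|\pi-\pi^{\ast}\|_{L^{q,p}}
  \;\le\; \varepsilon \;+\; L_{\mathcal G}\bigl(\|c-c^{\ast}\|_{L^{q,p}}+\|\lambda-\lambda^{\ast}\|_{L^{q,p}}+\|\partial\lambda-\partial\lambda^{\ast}\|_{L^{q,p}}\bigr),
\]
using that $\mathcal G_{\pi}$ is affine in $\lambda$ and $\partial\lambda$ with coefficients bounded on $\mathcal D$.

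\textbf{Main obstacle and closure.} The delicate point will be bounding $\|\partial\lambda-\partial\lambda^{\ast}\|_{L^{q,p}}$, which formally involves second derivatives of $V$ and is \emph{not} directly dominated by $E_v=\|\nabla V-\nabla V^{\ast}\|_{L^{q,p}}$. I plan to handle this by combining the curvature band $-\partial_x\lambda\in[\underline\kappa_x,\bar\kappa_x]$ of Assumption~\ref{ass:baseline} with the interior maximal-regularity estimate of Lemma~\ref{lem:max_regularity}, which trades $D^{2}V$-differences for $L^{q,p}$-differences of the PDE source and coefficients; the latter are then absorbed back into $E_h$ and $E_f$, and via subsequent lemmas of the appendix ultimately into $E_v$ and $\varepsilon$. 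Under the VMO smallness bound $\eta_0<\bar\eta$ of Assumption~\ref{ass:baseline}, this implicit loop contracts inside a slab of length $\tau_\star$, and the portfolio gap is covered by a finite chain of such slabs. After collecting $L_{\mathrm{coef}}/\underline\kappa$, $L_{(U')^{-1}}$, $L_{\mathcal G}$, and the maximal-regularity constant $C_{\mathrm{MR}}$ into two lumped constants, the estimate $E_h\le L_G\,E_v+C_{\mathrm{FOC}}\,\varepsilon$ will follow.
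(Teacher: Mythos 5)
Your core route is the same as the paper's: invoke the strong monotonicity of $\mathcal G_{\pi}$ from Lemma~\ref{lem:strong_monotonicity} to get a Lipschitz inverse, bound $\|\pi-\pi^{\ast}\|$ by the costate gap plus the FOC residual $\varepsilon$, handle the consumption gap through the inverse marginal-utility map on the costate band, and finish with Lipschitz dependence of the coefficients $(a_{\pi,c},b_{\pi,c})$ on the controls. You are in fact more explicit than the paper's three-line proof, which never separately treats the consumption contribution to $b_{\pi,c}-b_{\ast}$; your handling of that piece is correct and welcome.

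The place where your proposal runs into trouble is the step you yourself flag as the ``main obstacle.'' You correctly observe that $\mathcal G_{\pi}$ depends on $\partial_x\lambda$ and $\partial_{\mathbf Y}\lambda$, i.e.\ on second derivatives of $V$, so the gap $\|\partial\lambda-\partial\lambda^{\ast}\|$ is \emph{not} dominated by $E_v=\|\nabla V^{\pi,c}-\nabla V^{\ast}\|_{L^{q,p}}$. (The paper's own proof silently identifies the two, writing the bound directly in terms of $\|\nabla V^{\pi,c}-\nabla V^{\ast}\|$; the curvature band in Assumption~\ref{ass:baseline} bounds $-\partial_x\lambda$ but does not control its \emph{difference} from the optimal one.) Your proposed repair — trade $D^2V$-differences for source/coefficient differences via Lemma~\ref{lem:max_regularity}, absorb those back into $E_h$ and $E_f$, and then ``contract inside a slab'' — is circular as stated: the slab contraction is precisely what Proposition~\ref{prop:coupled_publishable} performs \emph{downstream} of this lemma, using $E_h\le L_G E_v+C_{\mathrm{FOC}}\varepsilon$ as one of its three inputs. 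You cannot invoke that contraction inside the proof of one of its own ingredients. To make your idea rigorous you would have to restructure the chain: either augment the error functional to $\widetilde E_v:=\|\nabla V^{\pi,c}-\nabla V^{\ast}\|+\|D^2V^{\pi,c}-D^2V^{\ast}\|$ and prove Lemma~\ref{lem:foc_reduction_publishable}, Lemma~\ref{lem:gradient_stab_publishable} (upgraded to a $W^{2,1}_{q,p}$ stability estimate via maximal regularity), and the coupled resolution as a single joint contraction in $\widetilde E_v$, or add a standing hypothesis controlling the second-derivative gap by the first-derivative gap. As written, the step does not close; it is the one genuine gap in an otherwise faithful reconstruction — and it points at a looseness present in the paper's own argument as well.
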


\begin{proof}
By Lemma~\ref{lem:strong_monotonicity}, $\pi\mapsto\mathcal G_{\pi}(\pi;\lambda,\partial\lambda)$ has a Lipschitz inverse \citep{browder1965nonlinear,minty1962monotone}.
Thus $\|\pi-\pi^{\ast}\|\le c_1\bigl(\|\nabla V^{\pi,c}-\nabla V^{\ast}\|+\|\mathcal R_{\mathrm{FOC}}\|\bigr)$.
Lipschitz dependence of coefficients on controls yields the claim after absorbing constants.
\end{proof}

\begin{lemma}[Source gap]\label{lem:source_gap_publishable}
Under Assumption~\ref{ass:baseline} and HARA utility,
\[
  E_f \;\le\; C_f' E_v + C_f \,\varepsilon.
\]
\end{lemma}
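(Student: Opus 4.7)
The plan is to reduce the utility gap $U(c^\ast)-U(c)$ first to a consumption gap $c^\ast-c$, and then to split the consumption gap into a costate-driven part (controlled by $E_v$) and a FOC-residual part (controlled by $\varepsilon$). All constants will be traced to the fixed costate and consumption bands guaranteed by Assumption~\ref{ass:baseline}(d) and the HARA structure in~(e).

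First, I invoke the consumption first-order condition derived in Section~\ref{sec:PMP_multiasset_state_costate}: the optimum satisfies $e^{-\delta t}U'(c^\ast)=\lambda^\ast$, while the candidate control $c$ satisfies the same relation only up to the scalar FOC residual, i.e.\ $e^{-\delta t}U'(c)=\lambda + \mathcal R_{\mathrm{FOC}}^{(c)}$, where $\mathcal R_{\mathrm{FOC}}^{(c)}$ is the consumption component of $\mathcal R_{\mathrm{FOC}}$ (so $\|\mathcal R_{\mathrm{FOC}}^{(c)}\|_{L^{q,p}}\le \varepsilon$). Subtracting and inverting $U'$, which for HARA utility is the smooth map $y\mapsto y^{-1/\gamma}$ on the compact interval $[\underline\lambda,\bar\lambda]$ coming from Assumption~\ref{ass:baseline}(d), yields the pointwise bound
\[
  |c^\ast - c| \;\le\; L_{(U')^{-1}}\Bigl(|\lambda^\ast-\lambda| + |\mathcal R_{\mathrm{FOC}}^{(c)}|\Bigr),
\]
with $L_{(U')^{-1}}$ depending only on $\gamma$, $\underline\lambda$, $\bar\lambda$, and $\delta, T$.

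Next, I apply the mean value theorem to $U$ between $c$ and $c^\ast$. Since $c,c^\ast$ lie in a compact subinterval of $(0,\infty)$ determined by the costate bands and the inverse marginal utility, $U'$ is bounded there by some $\bar U'<\infty$. Multiplying by $e^{-\delta t}\le 1$, this gives the pointwise estimate
\[
  \bigl|e^{-\delta t}\bigl(U(c^\ast)-U(c)\bigr)\bigr|
  \;\le\; \bar U'\,L_{(U')^{-1}}\Bigl(|\lambda^\ast-\lambda| + |\mathcal R_{\mathrm{FOC}}^{(c)}|\Bigr).
\]
Since $\lambda=\partial_x V^{\pi,c}$ and $\lambda^\ast=\partial_x V^\ast$, one has $|\lambda^\ast-\lambda|\le |\nabla V^\ast-\nabla V^{\pi,c}|$. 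Taking $L^{q,p}(Q_T)$ norms on both sides and applying the triangle inequality then produces $E_f \le C_f' E_v + C_f\,\varepsilon$, with $C_f':=\bar U'\,L_{(U')^{-1}}$ and $C_f:=\bar U'\,L_{(U')^{-1}}$ absorbed from the residual term.

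The main obstacle I anticipate is keeping the constants $C_f',C_f$ genuinely independent of the particular warm-up policy $(\pi,c)$: both the Lipschitz constant of $(U')^{-1}$ and the sup-bound $\bar U'$ are finite only because $c,c^\ast$ take values in a bounded interval away from zero. This is exactly what Assumption~\ref{ass:baseline}(d) provides via the uniform costate bands $\underline\lambda\le\lambda\le\bar\lambda$ together with the HARA form of $U$ in~(e); if those bands were to degenerate (e.g.\ $\underline\lambda\downarrow 0$ corresponding to wealth near $x_{\min}$), the argument would break down and a truncation/localization on $\mathcal D$ would be required. Given the stated assumptions, however, no such localization is needed, and the lemma follows.
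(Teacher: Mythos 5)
Your proposal is correct and follows essentially the same route as the paper's proof: the paper likewise argues that $U$ and $(U')^{-1}$ are Lipschitz on the relevant ranges determined by the costate band and applies a mean-value argument to the consumption FOC. Your write-up simply makes explicit the intermediate steps (inverting $U'$, bounding $|\lambda^\ast-\lambda|$ by the gradient gap, and tracing the constants to Assumption~\ref{ass:baseline}(d)--(e)) that the paper's one-line proof leaves implicit.
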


\begin{proof}
On the costate band, $U$ and $(U')^{-1}$ are Lipschitz on the relevant ranges; apply a mean-value argument to the consumption FOC
(see PMP/HJB expositions in \citealp{yong2012stochastic}).
\end{proof}

\begin{lemma}[Gradient stability]\label{lem:gradient_stab_publishable}
With Lemma~\ref{lem:max_regularity} in force,
\[
  E_v \;\le\; C_S\,(E_h+E_f).
\]
\end{lemma}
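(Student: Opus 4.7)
The plan is to set $W := V^{\pi,c} - V^{\ast}$ and view it as the solution of an auxiliary parabolic problem with zero terminal data whose forcing bundles together the coefficient perturbation and the source gap. Subtracting the PDEs satisfied by $V^{\pi,c}$ and $V^{\ast}$ gives
\[
  -\partial_t W \;=\; \mathcal L_{\pi,c}[W] \;+\; F,\qquad
  F \;:=\; (\mathcal L_{\pi,c}-\mathcal L_{\ast})[V^{\ast}] \;+\; (f_{\pi,c}-f_{\ast}),
\]
together with $W(T,\cdot)=0$ and the same oblique-derivative boundary condition as the two value functions. Because $\mathcal L_{\pi,c}$ inherits uniform parabolicity and VMO structure from Assumption~\ref{ass:baseline}, this is precisely the setting in which Lemma~\ref{lem:max_regularity} can be applied to $W$ itself.

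The next step is to estimate $\|F\|_{L^{q,p}}$. The source piece is $\|f_{\pi,c}-f_{\ast}\|_{L^{q,p}} = E_f$ by definition. The operator piece expands as
\[
  (\mathcal L_{\pi,c}-\mathcal L_{\ast})[V^{\ast}]
  \;=\; (a_{\pi,c}-a_{\ast}):D^{2}V^{\ast} \;+\; (b_{\pi,c}-b_{\ast})\cdot\nabla V^{\ast},
\]
and is controlled by pairing the a priori maximal regularity bound on $V^{\ast}$ from Lemma~\ref{lem:max_regularity} with the parabolic Sobolev embedding $W^{2,1}_{q,p}\hookrightarrow L^{\infty}_{t}C^{1,\alpha}_{x}$, valid under $p>d+2$, which furnishes a uniform bound on $\nabla V^{\ast}$ over $\mathcal D$. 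The coefficient discrepancy then enters as a multiplier in $L^{q,p}$, so that matching the bounded Sobolev factor against the $L^{q,p}$ coefficient perturbation yields $\|F\|_{L^{q,p}}\le C'(E_h+E_f)$.

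Feeding this source bound into the maximal regularity estimate applied to $W$ delivers $\|D^{2}W\|_{L^{q,p}}+\|\partial_t W\|_{L^{q,p}} \le C_{\mathrm{MR}}\,C'(E_h+E_f)$. A final parabolic interpolation step, again using $p>d+2$, transfers this control to $\|\nabla W\|_{L^{q,p}}$ with a constant depending only on $T$, $\Omega$, and the mixed-norm exponents. Absorbing all multiplicative constants into a single $C_S$ produces $E_v \le C_S(E_h+E_f)$.

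The principal obstacle I anticipate is the product term $(a_{\pi,c}-a_{\ast}):D^{2}V^{\ast}$, since $D^{2}V^{\ast}$ only sits in $L^{q,p}$ rather than $L^{\infty}$, and a naive H\"older splitting would demote the product to the weaker space $L^{q/2,p/2}$. The resolution is to exploit the $L^{\infty}$ boundedness of the coefficient discrepancy itself, which follows from Assumption~\ref{ass:baseline} together with the compactness of $\mathcal D$ and the Lipschitz dependence of $(a,b)$ on admissible controls: the product then inherits the full $L^{q,p}$ strength of the $V^{\ast}$-factor while still being quantitatively measured by $E_h$. A secondary subtlety arises in the hypo-elliptic regime~(H): the estimate must first be established on interior cylinders $Q_r\Subset\mathcal D$ and stitched together by a finite cover, after which the expanding-domain convergence guaranteed by Assumption~\ref{ass:baseline}(1) is used to pass from the reflected problem back to the original non-reflected PDE.
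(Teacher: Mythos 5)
Your proof takes exactly the same route as the paper's (one-line) argument: set $W := V^{\pi,c}-V^{\ast}$, observe that $W$ solves a linear parabolic equation with source determined by $(h,f)$, and invoke maximal regularity with VMO coefficients, so your version simply spells out the details the paper omits. The only caveat is that your fix for the product term $(a_{\pi,c}-a_{\ast}):D^{2}V^{\ast}$ really bounds it by the $L^{\infty}$ norm of the coefficient gap rather than by $E_h=\|h\|_{L^{q,p}}$ itself (on a bounded domain $L^{\infty}$ is not controlled by $L^{q,p}$), but the paper's own terse proof glosses over the identical point, so this is a shared imprecision rather than a departure from its argument.
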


\begin{proof}
Let $W:=V^{\pi,c}-V^{\ast}$. Then $W$ solves a linear parabolic PDE with source determined by $(h,f)$.
Parabolic MR with VMO coefficients yields $\|\nabla W\|_{L^{q,p}}\le C_S(\|h\|_{L^{q,p}}+\|f\|_{L^{q,p}})$ \citep{dong2009parabolic}.
\end{proof}

\begin{proposition}[Coupled error resolution]\label{prop:coupled_publishable}
There exists $C_v''>0$ such that $E_v\le C_v''\,\varepsilon$.
\end{proposition}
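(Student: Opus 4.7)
The plan is to close the self-referential system formed by Lemmas~\ref{lem:foc_reduction_publishable}, \ref{lem:source_gap_publishable}, and~\ref{lem:gradient_stab_publishable}, and then propagate the resulting bound from short time slabs to the whole horizon $[0,T]$. Substituting the coefficient and source gaps into the gradient stability estimate yields
\[
  E_v \;\le\; C_S(E_h+E_f)
  \;\le\; C_S\bigl(L_G+C_f'\bigr)\,E_v \;+\; C_S\bigl(C_{\mathrm{FOC}}+C_f\bigr)\,\varepsilon,
\]
so the natural route is to absorb the $E_v$-term on the right. The key point is that the parabolic MR constant $C_S$ arising from Lemma~\ref{lem:gradient_stab_publishable} scales with the length of the time window: on a short slab $[T-\tau_\star,T]$, standard Calder\'on--Zygmund estimates for parabolic equations with VMO coefficients (\citealp{dong2009parabolic}) give $C_S(\tau_\star)\to 0$ as $\tau_\star\to 0$, so one can fix $\tau_\star>0$ small enough that $C_S(\tau_\star)(L_G+C_f')\le \tfrac{1}{2}$.

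On that terminal slab this absorption yields $E_v^{[T-\tau_\star,T]}\le 2\,C_S(\tau_\star)(C_{\mathrm{FOC}}+C_f)\,\varepsilon$. I would then iterate backwards: treat the slab estimate as the effective terminal datum for the preceding slab $[T-2\tau_\star,T-\tau_\star]$ by invoking the trace bound in $W^{2-2/p}_q$ supplied by Lemma~\ref{lem:max_regularity}, and repeat the same absorption argument on that slab. After $N\le\lceil T/\tau_\star\rceil$ steps the accumulated constants are finite and depend only on $T$, the MR data, $L_G$, $C_f'$, $C_{\mathrm{FOC}}$, $C_f$, and the FOC-monotonicity modulus $\underline\kappa$ from Lemma~\ref{lem:strong_monotonicity}; none of them depends on $\varepsilon$. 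Setting $C_v''$ equal to the product of these slab constants delivers the claimed linear bound $E_v\le C_v''\,\varepsilon$.

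The main obstacle I anticipate is the slab-to-slab propagation rather than the single-slab absorption. Two points need care. First, the gradient estimate in Lemma~\ref{lem:gradient_stab_publishable} controls $\nabla W$ in the mixed $L^{q,p}$ norm, but the terminal datum passed to the preceding slab must be controlled in the Besov/Sobolev trace space $W^{2-2/p}_q$; this requires an interior-to-boundary upgrade via Lemma~\ref{lem:max_regularity}, which holds globally under (U) but only on interior cylinders $Q_r\Subset\mathcal D$ under (H). One therefore needs a finite cover argument and local-to-global patching to ensure the trace bound is uniform in the slab index. Second, the strong monotonicity constant $\underline\kappa$ and the costate band $[\underline\lambda,\bar\lambda]$ from Assumption~\ref{ass:baseline} must remain uniform across slabs so that $L_G$ and $C_f'$ do not blow up; this is where the compact cylinder $\mathcal D$ and the uniform costate/curvature bounds of Assumption~\ref{ass:baseline}(4) are essential, and a short verification is needed that the reflected SDE localization does not distort these bounds when one passes to expanding domains.
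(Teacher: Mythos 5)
Your proposal is correct and follows essentially the same route as the paper: combine Lemmas~\ref{lem:foc_reduction_publishable}--\ref{lem:gradient_stab_publishable} into the self-referential inequality $E_v\le C_S(L_G+C_f')E_v+C_S(C_{\mathrm{FOC}}+C_f)\varepsilon$, absorb the $E_v$-term on a short slab where the MR constant makes $\alpha=C_S(L_G+C_f')<1$, and concatenate $\lceil T/\tau_\star\rceil$ slabs. Your discussion of the trace-space bookkeeping needed for the slab-to-slab propagation is in fact more careful than the paper's one-line ``concatenate slabs,'' but it elaborates the same argument rather than replacing it.
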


\begin{proof}
Combine the three lemmas to get $E_v\le\alpha E_v+\beta\varepsilon$, $\alpha=C_S(L_G+C_f')$.
For some slab length $\tau_\star>0$ (set by MR constants), on any slab of length $\tau\le\tau_\star$ one has $\alpha<1$, hence
$E_v\le \beta(1-\alpha)^{-1}\varepsilon$.
Concatenate $\lceil T/\tau_\star\rceil$ slabs. This is an application of the contraction principle.
\end{proof}

%------------------------------------------------------------------------------
\subsubsection{Final Policy-Gap Theorem}

\begin{definition}[BPTT error]\label{def:bptt_error_publishable}
Let $\widehat\lambda$ estimate $\nabla V^{\pi,c}$ by BPTT.
For time step $\Delta t$ and batch size $M$,
\[
\delta_{\mathrm{BPTT}}
:=\|\widehat\lambda-\nabla V^{\pi,c}\|_{L^{q,p}}
\;\le\;
\begin{cases}
\kappa_{\mathrm{EM}}\,\Delta t^{1/2} + \kappa_2/\sqrt M, \\
\kappa_{\mathrm{Mil}}\,\Delta t \; + \kappa_2/\sqrt M \quad \text{(with first-order acceleration),}
\end{cases}
\]
in line with standard SDE error theory \citep{glasserman2004monte}.
\end{definition}

{%\begin{theoremrestate}
\noindent\textbf{Theorem \ref{thm:policy_gap} (Policy-Gap Bound).}
\itshape
Let $\pi^{\ast}$ be optimal and $\widehat\pi$ the P-PGDPO policy trained with time step $\Delta t$ and batch size $M$.
Under Assumption~\ref{ass:baseline},
\[
     \|\widehat\pi-\pi^{\ast}\|_{L^{q,p}}
     \;\le\;
     C_{\text{tot}} \bigl( \varepsilon + \delta_{\mathrm{BPTT}} \bigr).
\]
Equivalently,
\[
\|\widehat\pi-\pi^{\ast}\|_{L^{q,p}}
\;\le\;
\begin{cases}
C_{\text{tot}}\bigl(\varepsilon + \kappa_{\mathrm{EM}}\,\Delta t^{1/2} + \kappa_2/\sqrt M\bigr),\\[3pt]
C_{\text{tot}}\bigl(\varepsilon + \kappa_{\mathrm{Mil}}\,\Delta t + \kappa_2/\sqrt M\bigr).
\end{cases}
\]
}
%\end{theoremrestate}

\begin{proof}
By the triangle inequality,
\[
  \|\widehat\pi-\pi^{\ast}\|
  \;\le\; L_F\Bigl(\|\widehat\lambda-\nabla V^{\pi,c}\|+\|\nabla V^{\pi,c}-\nabla V^{\ast}\|\Bigr)
  \;\le\; L_F\bigl(\delta_{\mathrm{BPTT}}+E_v\bigr),
\]
and substitute Definition~\ref{def:bptt_error_publishable} and Proposition~\ref{prop:coupled_publishable}.
\end{proof}

The preceding assumptions, lemmas, and theorem complete the core logical chain for the policy-gap bound.
What follows is not part of the minimal existence–uniqueness–bound argument itself, but collects a number
of complementary considerations and implementation notes that help ensure the assumptions are met in practice
and the results can be applied robustly in computational or extended theoretical settings.
These points are offered as guidance for verification, numerical implementation, and consistency across
related parts of the manuscript.

\begin{enumerate}
\item \textbf{VMO smallness (how to guarantee).}
Two routes:
\begin{enumerate}
\item \emph{Top-level assumption.} State explicitly that the $x$-VMO modulus of leading coefficients satisfies $\sup_{0<r\le R_0}\omega_a(r)\le \eta_0<\bar\eta(d,p,q)$.
\item \emph{Verification from primitives.} If $y\mapsto \Sigma(y)$ and $y\mapsto \sigma_Y(y)$ are uniformly continuous with moduli $\omega_\Sigma(r),\omega_{\sigma_Y}(r)\to 0$ and the policy map is locally Lipschitz on the costate band, then for small $r$,
\[
\omega_a(r)\;\le\; C\bigl(\omega_\Sigma(r)+\omega_{\sigma_Y}(r)\bigr).
\]
If $\omega_\Sigma(r),\omega_{\sigma_Y}(r)\le C_0 r^{\alpha}$, choose $R_0\le (\eta_0/(C C_0))^{1/\alpha}$.
\end{enumerate}
See \citet{dong2009parabolic} for the VMO framework in parabolic equations.

\item \textbf{Norm coherence.}
All bounds are in $L^{q,p}(Q_T)$ / $W^{2,1}_{q,p}(Q_T)$. If elsewhere an equilibrium operator $T:X\to X^{\ast}$ is used, fix $X=W^{1,p}(\Omega)$ with
\[
\|u\|_{X} \;=\; \bigl(\|u\|_{L^p}^p+\|\nabla u\|_{L^p}^p\bigr)^{1/p},
\]
and state coercivity/monotonicity in this norm (avoid using only $\|u\|_{L^p}$).
See \citet{adams2003sobolev} for Sobolev space conventions.

\item \textbf{Operator interchanges (Fubini–Tonelli/DCT).}
With bounded controls and reflection on $C^{1,1}$ domains, moments are finite, so for $\varphi\in C^{1}$ of polynomial growth and integrable weight $\psi$,
\[
  \frac{d}{dt}\,\E\bigl[\psi(t)\,\varphi(t,X_t,Y_t)\bigr]
  \;=\; \E\bigl[\psi'(t)\,\varphi(t,X_t,Y_t)\bigr]
  \;+\; \E\bigl[\psi(t)\,(\partial_t + \cL)\,\varphi(t,X_t,Y_t)\bigr],
\]
and dominated convergence justifies interchanging $\int_0^T$ and $\E[\cdot]$.
See \citet{karatzas1991brownian} for reflected SDE background.

\item \textbf{Numerical convergence (strong vs.\ weak).}
For Euler–Maruyama with step $h$ and smooth coefficients:
\[
\bigl(\E\|Z_T^{\mathrm{EM}}-Z_T\|^2\bigr)^{1/2}=\cO(h^{1/2})\quad\text{(strong)},\qquad
\bigl|\E\phi(Z_T^{\mathrm{EM}})-\E\phi(Z_T)\bigr|=\cO(h)\quad\text{(weak)}.
\]
This matches Definition~\ref{def:bptt_error_publishable}; Milstein/RE gives first-order in the discretization term.
See \citet{kloeden1992numerical} for SDE numerical analysis.

\item \textbf{Time-slab vs.\ time-step.}
The contraction uses an \emph{analytic} slab length $\tau\le \tau_{\star}$ tied to PDE constants; it is independent of the numerical step $\Delta t$.

\item \textbf{Optional GE module (if used elsewhere).}
If you introduce $T:X\to X^{\ast}$ with
\[
\langle T(u),v\rangle
=\int_{\Omega}\!\Big(A\nabla u\cdot\nabla v + Buv + N(u)\,v\Big)\,dx,
\]
$A\succeq \underline\lambda I$, $B\ge 0$, and $N$ monotone with $N(u)u\ge c_0|u|^p-C_0$, then
\[
\langle T(u),u\rangle \;\ge\; c\,\|u\|_{W^{1,p}}^p - C,
\]
so coercivity condition holds in the \emph{same} $W^{1,p}$ norm.
See \citet{browder1965nonlinear} for monotone operator theory.
\end{enumerate}

\section{Classification of Parabolic HJB PDEs in Financial Models}\label{app:parabolic_financial_models}

Table \ref{tab:parabolic_classification} summarizes the classification of parabolic HJB PDEs in financial markets. Although the theoretical proofs for classes (H) and (D) are challenging, our P-PGDPO is applicable to the broader classes of parabolic PDEs from the financial and economic models. 

\begin{table}[h!]
\centering
\small
\caption{Classification of Parabolic HJB PDEs in Financial Models}
\label{tab:parabolic_classification}
\begin{tabular}{@{} p{3.5cm} p{4.5cm} p{7cm} @{}}
\toprule
Class & Diffusion Matrix $a(t,x,\mathbf{y})$ Property & SDE Coefficient Requirements and Dynamic Portfolio Examples \\
\midrule
\textbf{(U) Uniformly Elliptic} & Uniformly positive definite across the state space & The diffusion coefficients $\boldsymbol{\sigma}$ (for wealth $X_t$) and $\boldsymbol{\sigma}_Y$ (for state $\mathbf{Y}_t$) are such that the resulting diffusion matrix in the HJB is non-zero and bounded below by a positive constant everywhere (e.g., $\boldsymbol{\sigma}\Psi\boldsymbol{\sigma}^\top$ and $\operatorname{diag}(\boldsymbol{\sigma}_Y)\Phi\operatorname{diag}(\boldsymbol{\sigma}_Y)^\top$ uniformly positive definite). Drift coefficients ($\boldsymbol{\mu}, \boldsymbol{\mu}_Y$) are also sufficiently regular (e.g., Lipschitz). \\
($\implies$ Value Function / Costate very smooth) & & \textbf{Examples:} Merton's portfolio problem with constant coefficients ($\mu, \sigma > 0, r$ are constants); models with state-dependent volatility that remains strictly positive; the multi-asset OU model with one factor in Section \ref{sec:multiasset_multifactor_KO} (if asset volatilities $\sigma_i > 0$, factor volatility $\sigma_Y > 0$, and $\Psi$ is positive definite). \\[6pt]
\midrule
\textbf{(H) Hypo-elliptic} & May degenerate (even globally). Regularity is achieved if the system structure satisfies conditions like Hörmander's (i.e., Lie brackets of vector fields span the tangent space). The problem might be uniformly elliptic on compact subsets of interest. & Diffusion coefficients ($\boldsymbol{\sigma}$ or $\boldsymbol{\sigma}_Y$) can become zero at certain points within the domain. However, interaction with non-zero drift terms or other diffusion terms maintains regularity. \\
($\implies$ Value Function / Costate smooth in relevant regions) & & \textbf{Examples:} The multi-asset and multi-factor model in Section~\ref{sec:multiasset_multifactor_KO} (if some factor volatilities in $\boldsymbol{\sigma}_Y$ can be zero at certain states but the overall system structure, possibly involving $\kappa_Y$, ensures hypo-ellipticity); a Vasicek interest rate model affecting asset returns $\mu$ but not $\sigma$. \\[6pt]
\midrule
\textbf{(D) Degenerate at the Boundary} & Degenerates ($\to 0$) as the state approaches a boundary. & The diffusion coefficients $\boldsymbol{\sigma}$ or $\boldsymbol{\sigma}_Y$ (or the combined diffusion matrix) approach zero as a state variable (e.g., wealth, volatility, interest rate) approaches a boundary (typically 0). \\
($\implies$ Value Function / Costate regularity may break down near the boundary) & & \textbf{Examples:} CIR (Cox-Ingersoll-Ross) interest rate/volatility model (diffusion term like $\sigma_Y \sqrt{Y_t}$); Heston stochastic volatility model (asset diffusion term like $\sigma \sqrt{\nu_t}$). \\
\bottomrule
\end{tabular}

\end{table}

\section{Detailed Experimental Setup}
\label{app:experimental_setup_details}

This section provides a detailed setup for the numerical experiments presented in Section~\ref{sec:numerical_experiments}. The implementation code and detailed explanations can be found at \url{https://github.com/huhjeonggyu/PGDPO}.

\subsection{Parameter Values}
\label{app:param_values} % Renamed label slightly for appendix context

The experiments consider the objective of maximizing the expected utility of terminal wealth \(X_T\), \(J = \mathbb{E}[U(X_T)]\), using a CRRA utility function \(U(x) = x^{1-\gamma}/(1-\gamma)\) with a relative risk aversion coefficient \(\gamma = 2.0\). The fixed time horizon is \(T = 1.5\).

Unless otherwise specified, the model parameters (for the model described in Section~\ref{sec:multiasset_multifactor_KO}) are generated based on the asset dimension \(n\) and factor dimension \(k\) using the following process (seeded for reproducibility, e.g., `seed=42'):
\begin{itemize}
    \item Risk-free rate: \(r = 0.03\).
    \item State process parameters (for the \(k\)-dimensional OU process \(\mathbf{Y}_t\)):
        \begin{itemize}
            \item Mean-reversion speeds: \(\boldsymbol{\kappa}_Y = \mathrm{diag}(2.0, 2.5, \dots, 2.0 + (k-1)0.5)\).
            \item Long-term means: \(\boldsymbol{\theta}_Y\) components drawn uniformly from \(U(0.2, 0.4)\).
            \item Factor volatilities: \(\boldsymbol{\sigma}_Y\) diagonal elements drawn uniformly from \(U(0.3, 0.5)\).
        \end{itemize}
    \item Asset parameters (for the \(n\) risky assets):
        \begin{itemize}
            \item Volatilities: \(\boldsymbol{\sigma} \in \mathbb{R}^n\) components drawn uniformly from \(U(0.1, 0.5)\).
            \item Factor loadings (\(\boldsymbol{\alpha} \in \mathbb{R}^{n \times k}\)): For each asset \(i = 1, \dots, n\), the \(k\)-dimensional row vector of factor loadings \(\boldsymbol{\alpha}_i^\top = (\alpha_{i1}, \dots, \alpha_{ik})\) is drawn independently from a Dirichlet distribution. This is done using `scipy.stats.dirichlet` with a concentration parameter vector consisting of \(k\) ones (\(\mathbf{1}_k = [1.0, \dots, 1.0]\)). Consequently, for each asset \(i\), all loadings are non-negative (\(\alpha_{ij} \ge 0\)) and they sum to unity (\(\sum_{j=1}^k \alpha_{ij} = 1\)). (Note: These loadings define the risk premium structure \(\boldsymbol{\mu}(t, \mathbf{Y}_t) = r\mathbf{1} + \mathrm{diag}(\boldsymbol{\sigma}) \boldsymbol{\alpha} \mathbf{Y}_t\)).
        \end{itemize}
    \item Correlation structure: The generation process aims to produce realistic correlation parameters while ensuring the overall block correlation matrix remains positive definite, which is crucial for stable simulations. This involves: % Added Rationale
        \begin{itemize}
            \item Asset correlation \(\Psi\) (\(n \times n\)) generated based on a latent factor structure ensuring validity.
            \item Factor correlation \(\Phi_Y\) (\(k \times k\)) generated as a random correlation matrix.
            \item Cross-correlation \(\boldsymbol{\rho}_Y\) (\(n \times k\)) components drawn uniformly from \(U(-0.2, 0.2)\).
            \item The full \((n+k) \times (n+k)\) block correlation matrix \(\begin{pmatrix} \Psi & \rho_Y \\ \rho_Y^\top & \Phi_Y \end{pmatrix}\) is constructed and numerically adjusted (e.g., via minimal diagonal loading) if necessary to ensure positive definiteness (tolerance \(10^{-6}\)).
        \end{itemize}
\end{itemize}
The specific values generated by the seed are used consistently across experiments for given \(n\) and \(k\).

\subsection{PG-DPO Algorithm Details}\label{app:PGDPO_algorithm}

Each iteration \(j\) of the algorithm involves the following steps:

\begin{enumerate}
 \item[Step 1.] \textbf{Sample Initial States:} Draw a mini-batch of \(M\) initial states \(\{(t_0^{(i)}, x_0^{(i)}, \mathbf{y}_0^{(i)})\}_{i=1}^M\) from the distribution \(\eta\) over the domain \(\mathcal{D}\).

 \item[Step 2.] \textbf{Simulate Forward Paths:} For each sampled initial state \(i\), simulate a discrete-time trajectory \(\{(X_k^{(i)}, \mathbf{Y}_k^{(i)})\}_{k=0}^N\) forward from \(t_k = t_0^{(i)}\) to \(t_N = T\), using the current policy networks \((\boldsymbol{\pi}_\theta, C_\phi)\). The state evolution follows:
 \[
 \begin{aligned}
  \mathbf{Y}_{k+1}^{(i)} &= \mathbf{Y}_k^{(i)} + \boldsymbol{\mu}_Y(t_k^{(i)}, \mathbf{Y}_k^{(i)}) \Delta t^{(i)} + \boldsymbol{\sigma}_Y(t_k^{(i)}, \mathbf{Y}_k^{(i)}) \Delta\mathbf{W}_k^{(i),Y}, \\
  X_{k+1}^{(i)} &= X_k^{(i)} + \left[ X_k^{(i)} \left( (1 - \boldsymbol{\pi}_k^{(i)\top}\mathbf{1}) r(t_k^{(i)}, \mathbf{Y}_k^{(i)}) + \boldsymbol{\pi}_k^{(i)\top} \boldsymbol{\mu}(t_k^{(i)}, \mathbf{Y}_k^{(i)}) \right) - C_k^{(i)} \right] \Delta t^{(i)} \\
  &\quad + X_k^{(i)} \boldsymbol{\pi}_k^{(i)\top} \boldsymbol{\sigma}(t_k^{(i)}, \mathbf{Y}_k^{(i)}) \Delta\mathbf{W}_k^{(i),X},
 \end{aligned}
 \]
 where \(\Delta t^{(i)} = (T - t_0^{(i)})/N\), \(\boldsymbol{\sigma}_Y(t_k^{(i)}, \mathbf{Y}_k^{(i)})\) is the vector of diffusion coefficients applied element-wise to the components of $\Delta\mathbf{W}_k^{(i),Y}$, and the  correlated Brownian increments \((\Delta\mathbf{W}_k^{(i),X}, \Delta\mathbf{W}_k^{(i),Y})\) are derived from the Cholesky factor \(L\) of the  covariance matrix \(\boldsymbol{\Omega}\).

 \item[Step 3.] \textbf{Calculate Realized Reward per Path:} For each path \(i\), compute the realized cumulative reward, explicitly noting its dependence on the starting state and policy parameters, denoted as \(J^{(i)}(t_0^{(i)}, x_0^{(i)}, \mathbf{y}_0^{(i)}; \theta, \phi)\):
 \[
 J^{(i)}(t_0^{(i)}, x_0^{(i)}, \mathbf{y}_0^{(i)}; \theta, \phi) = \sum_{k=0}^{N-1} e^{-\delta (t_k^{(i)} - t_0^{(i)})} U\bigl(C_\phi(t_k^{(i)}, X_k^{(i)}, \mathbf{Y}_k^{(i)})\bigr) \Delta t^{(i)} + \kappa e^{-\delta (T - t_0^{(i)})} U(X_N^{(i)}).
 \]
 This value \(J^{(i)}(t_0^{(i)}, \dots)\) is a stochastic sample related to the conditional expected utility \(J(t_0^{(i)}, \dots)\) defined in Eq.~\eqref{eq:policy_value_function}.

 \item[Step 4.] \textbf{Apply BPTT for Policy Gradient Estimate:} Treat the simulation process for path \(i\) as a computational graph. Apply automatic differentiation (BPTT) to compute the gradient of the realized reward \(J^{(i)}(t_0^{(i)}, \dots)\) with respect to the policy parameters \((\theta, \phi)\):
 \[
 \nabla_{(\theta, \phi)} J^{(i)}(t_0^{(i)}, x_0^{(i)}, \mathbf{y}_0^{(i)}; \theta, \phi).
 \]
 This gradient serves as an unbiased estimate related to the gradient of the conditional expected utility \(J(t_0^{(i)}, \dots)\).

 \item[Step 5.] \textbf{Estimate Gradient of \(\widetilde{J}\):} Obtain an unbiased estimate for the gradient of the overall extended objective \(\widetilde{J}\) by averaging the individual path gradients over the mini-batch:
 \[
 \nabla \widetilde{J}(\theta, \phi) \approx \nabla_{\text{batch}} = \frac{1}{M} \sum_{i=1}^M \nabla_{(\theta, \phi)} J^{(i)}(t_0^{(i)}, x_0^{(i)}, \mathbf{y}_0^{(i)}; \theta, \phi).
 \]

 \item[Step 6.] \textbf{Update Policy Parameters:} Update the parameters \((\theta, \phi)\) using this averaged gradient estimate via stochastic gradient ascent:
 \[
 (\theta, \phi)_{j+1} \leftarrow (\theta, \phi)_j + \alpha_j \, \nabla_{\text{batch}},
 \]
 where \(\alpha_j\) is the learning rate.
\end{enumerate}

By iterating steps from 1 to 6, the baseline PG-DPO algorithm gradually optimizes the policy \((\boldsymbol{\pi}_\theta, C_\phi)\) to maximize the extended value function \(\widetilde{J}\).

\subsection{PG-DPO Implementation Details}
\label{app:pgdpo_impl} % Renamed label slightly

We implement the Baseline PG-DPO algorithm using PyTorch.
\begin{itemize}
    \item \textbf{Policy Network (\(\boldsymbol{\pi}_\theta\)):} The policy network  consists of an input layer taking the current wealth \(W_t\), current time \(t\), and state vector \(\mathbf{Y}_t\) (total \(2+k\) inputs), followed by three hidden layers with 200 units each and LeakyReLU activation, and an output layer producing the \(n\)-dimensional portfolio weight vector \(\boldsymbol{\pi}_\theta\).
    \item \textbf{Optimizer:} Adam optimizer with a learning rate of \(1.0 \times 10^{-5}\).
    \item \textbf{Training:}
        \begin{itemize}
            \item Total epochs: \(10,000\).
            \item Batch size (\(M\)): \(1,000\).
            \item Time discretization steps (\(N\)): \(20\).
            % --- Revised Initial State Sampling for Y_0 (min of lower bounds, max of upper bounds) ---
            \item Initial state sampling (\(\eta\)): For each trajectory in a batch, the initial time \(t_0\) is drawn uniformly from \([0, T)\), initial wealth \(W_0\) from \(U(0.1, 3.0)\). The initial state variables \(\mathbf{Y}_0 = (Y_{0,1}, \dots, Y_{0,k})^\top\) are sampled such that each component \(Y_{0,i}\) is drawn independently and uniformly from a single common range \([Y_{\min}, Y_{\max}]\). This range is determined by the minimum lower bound and the maximum upper bound observed across all individual factors' typical \( \pm 3\sigma \) intervals:
            \[ Y_{\min} = \min_{i=1,\dots,k}(\theta_{Y,i} - 3\sigma_{Y,ii}), \quad Y_{\max} = \max_{i=1,\dots,k}(\theta_{Y,i} + 3\sigma_{Y,ii}), \]
            where \(\theta_{Y,i}\) is the long-term mean and \(\sigma_{Y,ii}\) is the instantaneous volatility parameter for the \(i\)-th factor. The simulation then runs forward from \(t_0\) to the fixed horizon \(T\) using \(N=20\) time steps, with step size \(\Delta t = (T - t_0) / N\). This approach ensures the policy is trained effectively across the entire time interval \([0, T]\) and a relevant hypercube in the state space defined by the collective factor dynamics.
            % --- End Revised ---
            \item Variance Reduction: Antithetic variates used.
            \item Wealth constraint: Simulated wealth \(W_t\) clamped at lower bound \(0.1\).
        \end{itemize}
    \item \textbf{Costate Estimation (for Two-Stage Evaluation):} During periodic evaluation, costate \(\lambda\) and derivatives \(\partial_x \lambda, \partial_{\mathbf{Y}} \lambda\) are computed via BPTT (`torch.autograd.grad') applied to Monte Carlo estimates of the expected terminal utility starting from evaluation points \((t_k, W_k, \mathbf{Y}_k)\).
\end{itemize}

\subsection{Benchmark Solution and Evaluation}
\label{app:benchmark_eval} % Renamed label slightly

As mentioned, the chosen multi-factor OU model allows for a semi-analytical benchmark solution. It is important to note that these experiments focus solely on the terminal wealth objective (\(J = \mathbb{E}[U(X_T)]\)), and thus do not include optimization or a neural network for the consumption policy \(C_\phi\). This simplification is made specifically to enable direct comparison with the semi-analytical benchmark solution derived from the HJB equation, which is readily available for the terminal wealth problem in this affine setting. % Added explanation for no C_phi
\begin{itemize}
    \item \textbf{Benchmark Computation:} The system of ODEs (including a matrix Riccati equation) associated with the HJB equation for this terminal wealth problem is solved numerically backwards from the fixed horizon \(T\) using \texttt{scipy.integrate.solve\_ivp} (\texttt{Radau} method) to obtain the coefficients determining the value function and the optimal policy \(\boldsymbol{\pi}^*(t, W_t, \mathbf{Y}_t)\); see \citet{kim1996dynamic, liu2007portfolio} for the derivation of these equations.
    \item \textbf{Interpolation:} The benchmark policy (total, myopic, and hedging components) is pre-computed on a grid over \((t, W_t, \mathbf{Y}_t)\) space and stored using `scipy.interpolate.RegularGrid Interpolator' for efficient lookup.
    \item \textbf{Evaluation Metrics:} Performance is evaluated via the Root Mean Squared Error (RMSE) between the policies generated by (i) Baseline PG-DPO (\(\boldsymbol{\pi}_\theta\)) and the benchmark (\(\boldsymbol{\pi}^*\)), and (ii) Two-Stage PG-DPO (using Eq.~\eqref{eq:pi_star} with BPTT-estimated costates) and the benchmark (\(\boldsymbol{\pi}^*\)). RMSEs are computed for total, myopic, and hedging components.
    \item \textbf{Visualization:} Contour plots compare the learned policies and the benchmark over the state space (typically time \(t\) vs. one state variable \(Y_k\)), alongside error plots and costate visualizations.
\end{itemize}
All computations are performed using PyTorch on an NVIDIA A6000 GPU. % Added specific GPU

\subsection{Deep BSDE Implementation Details}
\label{app:deep_bsde_impl}

The Deep BSDE method, serving as a benchmark, was implemented based on the standard framework by \citet{han2018solving} and \citet{E2017Deep}. This approach reformulates the problem's PDE as a backward stochastic differential equation (BSDE) and approximates its solution components using neural networks. Our PyTorch implementation addresses the multi-asset, multi-factor portfolio optimization problem from Section~\ref{sec:multiasset_multifactor_KO}.

\begin{itemize}
\item \textbf{Network Architectures:} Two main feedforward neural networks with Tanh activations are utilized:
\begin{itemize}
\item \textbf{V0Net:} Approximates the value function at the initial time of a path. Its output, denoted \(\hat{V}_{t_0}\), is an approximation of the true value, \(V(t_0, X_{t_0}, \mathbf{Y}_{t_0})\). It takes the initial state \((t_0, \log X_{t_0}, \mathbf{Y}_{t_0})\) as input and produces a scalar output (with a final negative softplus transformation). It typically uses 2 hidden layers of 64 units.
\item \textbf{ZNet:} Approximates \(\boldsymbol{Z}_t\), which relates to the scaled gradient of the value function concerning the underlying Brownian motions, at each time step \(t\). It takes the current state \((t, \log X_t, \mathbf{Y}_t)\) as input and outputs an \((n+k)\)-dimensional vector, representing \(\boldsymbol{Z}^X_t \in \mathbb{R}^n\) and \(\boldsymbol{Z}^Y_t \in \mathbb{R}^k\). It typically uses 3 hidden layers of 128 units.
\end{itemize}

\item \textbf{Policy Derivation:} The optimal portfolio \(\boldsymbol{\pi}_t\) is not a direct network output but is dynamically constructed at each step. For CRRA utility, the FOC-derived policy simplifies significantly. The simplification stems from the value function's known separable form in wealth (i.e., \(V \propto X_t^{1-\gamma}\)). This property allows the general intertemporal hedging demand term, \(-\frac{1}{X_t V_{xx}}\Sigma^{-1}(\boldsymbol{\sigma\rho\sigma_Y})V_{x\mathbf{y}}\), to be analytically transformed into the more tractable expression \(\frac{1}{\gamma} \Sigma^{-1} \left( \boldsymbol{\sigma} \boldsymbol{\rho} \right) \left(\boldsymbol{\sigma}_Y \frac{\nabla_{\mathbf{y}}V}{V}\right)\).

The Deep BSDE implementation then creates a policy by combining an analytical myopic demand with a network-based approximation of this tractable hedging term. Specifically, the core theoretical component \(\boldsymbol{\sigma}_Y \frac{\nabla_{\mathbf{y}}V}{V}\) is approximated by the ratio of neural network outputs, \(\frac{\boldsymbol{Z}^Y_t}{\hat{V}_t}\).\footnote{In our BSDE parameterization, the network \texttt{ZNet} learns \(\boldsymbol{Z}^Y_t \approx \nabla_{\mathbf{y}}V \cdot \boldsymbol{\sigma}_Y\), where \(\boldsymbol{\sigma}_Y\) is the diagonal matrix of factor volatilities. The ratio in the policy formula thus correctly approximates the required theoretical term \(\frac{\nabla_{\mathbf{y}}V \cdot \boldsymbol{\sigma}_Y}{V}\).} This leads to the final implemented formula:
\begin{equation}
\boldsymbol{\pi}_t = \underbrace{\frac{1}{\gamma} \Sigma^{-1} \left( \boldsymbol{\sigma} \mathbf{A} \mathbf{Y}_t \right)}_{\text{Myopic Demand}} \quad+ \underbrace{\frac{1}{\gamma} \Sigma^{-1} \left( \boldsymbol{\sigma} \boldsymbol{\rho} \frac{\boldsymbol{Z}^Y_t}{\hat{V}_t} \right)}_{\text{Intertemporal Hedging Demand}}.
\end{equation}
This structure is explicitly designed to capture both demand components, but as shown in our numerical results, the accuracy of this network-based approximation can be a practical point of failure.

\item \textbf{Loss Function:} The training objective is to accurately match the BSDE's terminal condition. For each sample path starting at state \((t_0, X_{t_0}, \mathbf{Y}_{t_0})\), the `V0Net' provides an initial value estimate, \(\hat{V}_{t_0}\). This single initial estimate is then \textbf{evolved forward to the terminal time \(T\) using the discrete-time BSDE formula}, which incorporates the outputs of the `ZNet' at each intermediate time step. This process yields a \textbf{simulated terminal value} for that specific path, which we denote \(\hat{V}_T^{\text{sim}}\). The loss is then calculated as the mean squared error between this simulated terminal value and the \textbf{true terminal condition}—the utility of the wealth \(U(X_T)\) realized at the end of that same path:
$$ \mathcal{L} = \mathbb{E}\left[ \left( \hat{V}_T^{\text{sim}} - U(X_T) \right)^2 \right]. $$
The parameters of both `V0Net' and `ZNet' are optimized to minimize this terminal loss, effectively forcing the initial prediction \(\hat{V}_{t_0}\) to be consistent with the terminal utility value through the lens of the BSDE dynamics.

\item \textbf{Training Details:} The networks are trained for \(10,000\) epochs using the Adam optimizer (initial learning rate \(3 \times 10^{-5}\), with a multi-step decay scheduler) and a batch size of \(1,024\). Each BSDE path is discretized into \(N_{\text{BSDE}}=80\) time steps. Gradient clipping (norm 2.0) and bounding of portfolio weights are applied to enhance training stability.

\item \textbf{Initial State Sampling:} For each training path, the initial time \(t_0\) is sampled uniformly from the discrete time steps over \([0, T-\Delta t]\). The initial wealth \(X_0\) is drawn uniformly from \([0.1, 3.0]\), and initial factor states \(\mathbf{Y}_0\) are sampled uniformly from a hyperrectangle representing their typical dynamic range (e.g., based on \(\pm 3\) stationary standard deviations around their long-term means).

\end{itemize}
The implementation uses PyTorch and is executed on the same NVIDIA A6000 GPU hardware as the PG-DPO experiments to ensure a fair comparison.

\section{Additional Numerical Results - Convergence Speed, Accuracy, and Policy Error Analysis}
\label{app:convergence_accuracy}

\begin{figure}[t!]
  \centering
  % --- Updated figure to include Deep BSDE ---
  % Please replace the following with your actual updated figure that includes three lines (Baseline, P-PGDPO, Deep BSDE)
  % For example, if your new figure file is named 'RMSE_all_methods_k10.pdf'
  \begin{subfigure}[b]{0.8\textwidth}
    \centering
    \includegraphics[width=\textwidth]{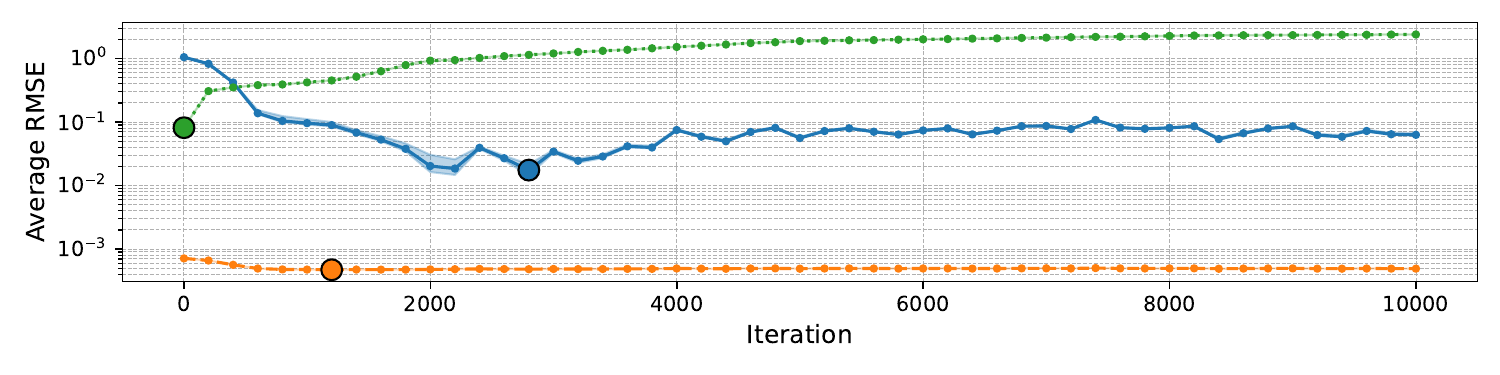} % Replace with your actual figure file
    \caption{$n=1,\;k=10$}
    \label{fig:rmse_1_10_updated}
  \end{subfigure}

  \vspace{1.2em}

  \begin{subfigure}[b]{0.8\textwidth}
    \centering
    \includegraphics[width=\textwidth]{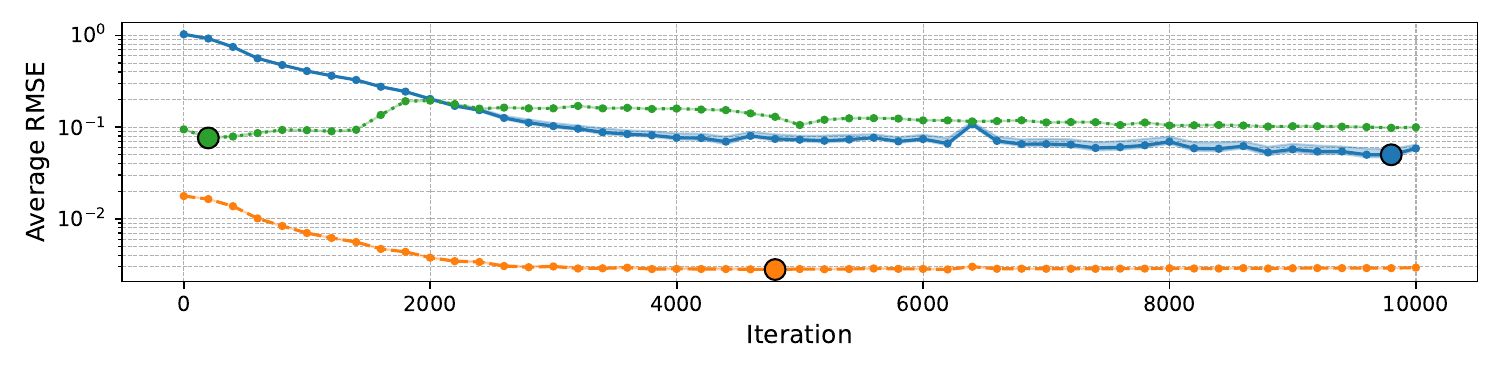} % Replace with your actual figure file
    \caption{$n=10,\;k=10$}
    \label{fig:rmse_10_10_updated}
  \end{subfigure}

  \vspace{1.2em}

  \begin{subfigure}[b]{0.8\textwidth}
    \centering
    \includegraphics[width=\textwidth]{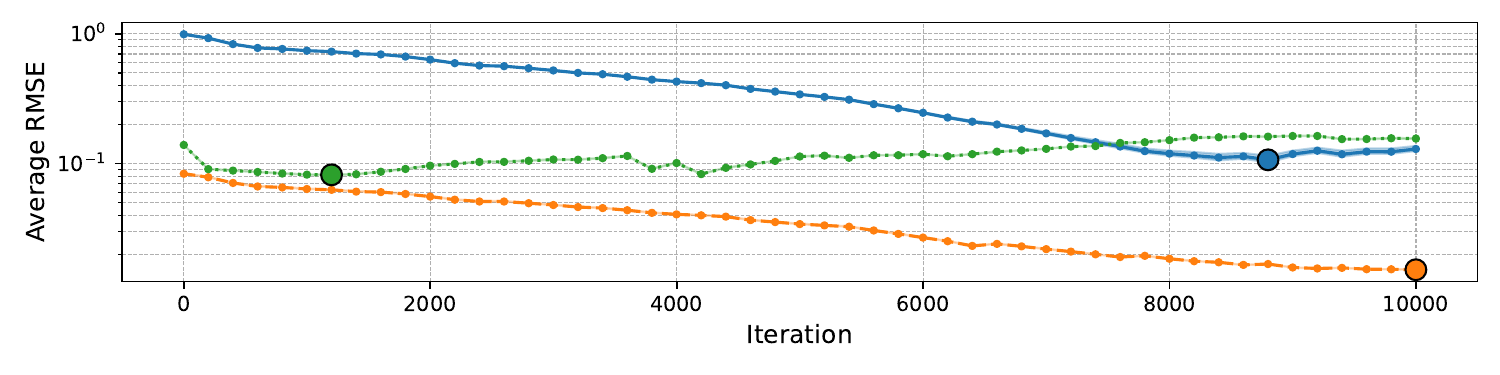} % Replace with your actual figure file
    \caption{$n=50,\;k=10$}
    \label{fig:rmse_50_10_updated}
  \end{subfigure}

  \caption{Average policy RMSE comparison between Baseline PG-DPO (e.g., blue), Projected PG-DPO (P-PGDPO) (e.g., orange), and Deep BSDE (e.g., green) relative to the benchmark solution across different asset dimensions ($n$) with $k=10$ factors. Values averaged over $k=10$ evaluation `slices.' For each slice $j=1,\dots,10$, the evaluation (e.g., for RMSE calculation or visualization) focuses on state factor $Y_j$, while other factors $Y_l, l \neq j,$ are held constant at their respective long-term means $\theta_{Y,l}$. Note the logarithmic scale on the $y$-axis.}
  \label{fig:rmse_plots_updated}
\end{figure}

This section quantitatively compares the convergence speed and final accuracy of the proposed PG-DPO framework—including both the Baseline and Projected PG-DPO (P-PGDPO) variants—against the Deep BSDE method, when applied to derive control policies. We evaluate performance across various problem dimensions, focusing on how the number of assets (\( n \)) and the number of state factors (\( k \)) affect policy accuracy.

Figure~\ref{fig:rmse_plots_updated} shows the average Root Mean Squared Error (RMSE) between the learned or derived policies and the semi-analytical benchmark policy. The RMSE is plotted against the number of training iterations on a logarithmic scale for experiments with \( k = 10 \) state variables and varying asset counts (\( n = 1, 10, 50 \)).
For P-PGDPO, the reported RMSE corresponds to the total portfolio policy \( \boldsymbol{\pi}^{\mathrm{PMP}} \) (Eq.~\eqref{eq:pi_star}), which combines the myopic and intertemporal hedging components. As discussed in Section~\ref{sec:multi_asset_numerical}, the myopic error is typically negligible—on the order of \( O(10^{-7}) \). Thus, the total RMSE for P-PGDPO is effectively governed by the accuracy of the \textit{hedging component}, which depends on estimating the costate gradient \( \partial_{\mathbf{Y}} \lambda_t^* \) via BPTT.

% \begin{table}[t]
% \centering
% \caption{Summary of Minimum Policy RMSE and Convergence Iterations}
% \label{tab:rmse_summary_updated}
% \begin{tabular}{@{}l c c c@{}}
% \toprule
% Dimension & Method & Min.\ Policy RMSE & Iterations at Min. \\
% ($n$ assets, $k=10$ factors) & & & \\
% \midrule
% \multirow{3}{*}{n=1, k=10}
%  & Baseline PG-DPO & $1.751\times10^{-2}$ & 2800 \\
%  & P-PGDPO         & $4.710\times10^{-4}$ & 1200 \\
%  & Deep BSDE         & $8.146\times10^{-2}$ & 1    \\
% \cmidrule(lr){1-4}
% \multirow{3}{*}{n=10, k=10}
%  & Baseline PG-DPO & $4.996\times10^{-2}$ & 9800 \\
%  & P-PGDPO         & $2.799\times10^{-3}$ & 4800 \\
%  & Deep BSDE         & $7.623\times10^{-2}$ & 200  \\
% \cmidrule(lr){1-4}
% \multirow{3}{*}{n=50, k=10}
%  & Baseline PG-DPO & $1.070\times10^{-1}$ & 8800 \\
%  & P-PGDPO         & $1.522\times10^{-2}$ & 10000 \\
%  & Deep BSDE         & $8.199\times10^{-2}$ & 1200 \\
% \bottomrule
% \multicolumn{4}{p{0.9\textwidth}}{\footnotesize Note: Minimum policy RMSE values are averaged across \( k \) evaluation slices. For each slice \( j = 1, \dots, k \), the evaluation (e.g., for RMSE calculation) focuses on state factor \( Y_j \), while all other factors \( Y_\ell \), for \( \ell \neq j \), are fixed at their respective long-run means \( \theta_{Y,\ell} \). The iteration numbers indicate when the minimum average RMSE was observed; evaluations and logging were performed every 200 iterations.}
% \end{tabular}
% \end{table}

Figures \ref{fig:policy_error} and \ref{fig:decomposition error} display the error maps across models. In particular, Figure~\ref{fig:policy_error} illustrates the policy errors for the first asset ($i=1$) across three models-baseline PG-DPO, P-PGDPO, and Deep BSDE- in the high-dimensional setting with $n=50$ risky assets and $k=10$ state variables. The figure plots the difference between the total portfolio policy (i.e., the sum of the myopic and intertemporal hedging demands) generated by each method and the corresponding analytic benchmark solution. Panel (a) shows the Baseline PG-DPO, where substantial deviations from the benchmark are visible, particularly near maturity and for extreme values of $Y_1$. Panel (b) reports the results for P-PGDPO, which achieves significantly lower errors overall. The projection step based on stabilized costates produces policies that are visually close to the benchmark, with errors an order of magnitude smaller than in the baseline. Panel (c) presents the Deep BSDE method, which captures the broad structure of the policy but exhibits notable inaccuracies in regions where intertemporal hedging demand is critical. 

% --- Figure for Benchmark Solution Components (Single Row) ---
\begin{figure}[t!]
 \centering
 \begin{subfigure}[b]{0.32\textwidth}
  \centering
  \includegraphics[width=\textwidth]{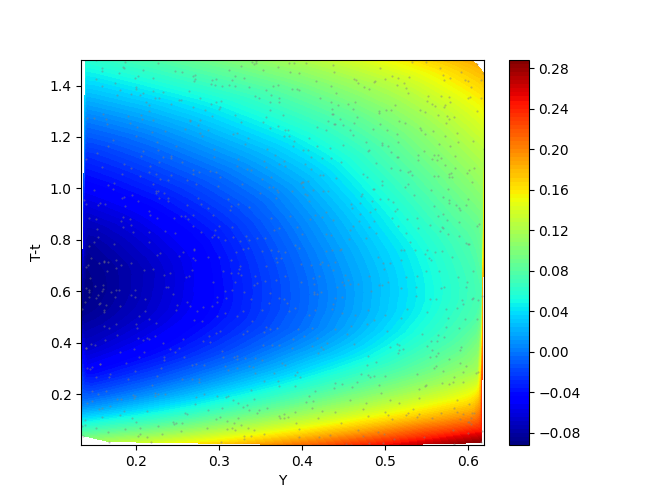}
  \caption{Baseline PG-DPO Error}
  \label{fig:benchmark_policy_total_row_2}
 \end{subfigure}
 \hfill
 \begin{subfigure}[b]{0.32\textwidth}
  \centering
  \includegraphics[width=\textwidth]{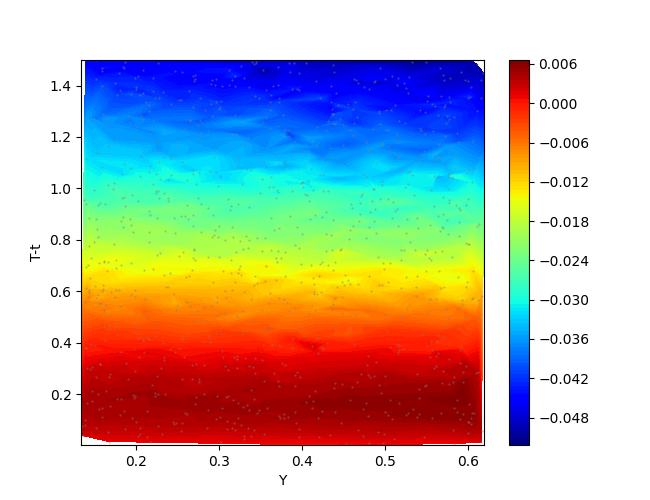}
  \caption{P-PGDPO Error}
  \label{fig:n50k10_2pg_err} % Note: label retained from original
 \end{subfigure}
 \hfill
 \begin{subfigure}[b]{0.32\textwidth}
  \centering
  \includegraphics[width=\textwidth]{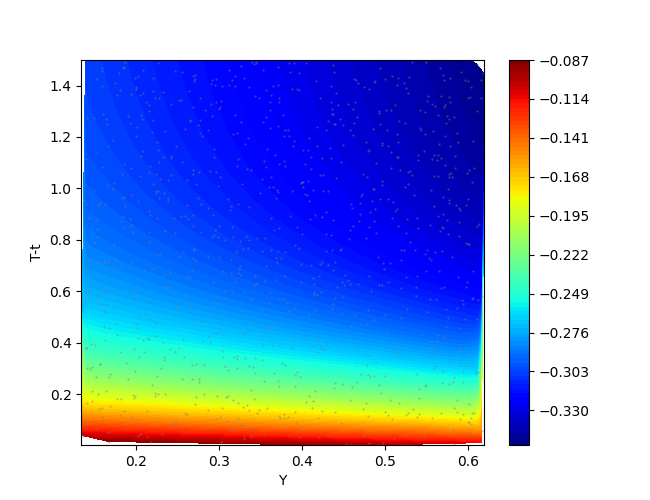}
  \caption{Deep BSDE Error}
  \label{fig:n50k10_bsde_err}
 \end{subfigure}
 \caption{Policy errors for the first asset ($i=1$) across models in the case where $n=50, k=10$. Errors are plotted against $Y_1$ at a representative time-to-maturity $T-t$ (e.g., $t=(T-T_0)/2$ if $T_0$ is initial time, assuming $T=1.5$ is the horizon from $t=0$; thus, for instance, at $t=0.75$ or time-to-maturity $0.75$). Other state factors $Y_{j \neq 1}$ are held at their long-term means $\theta_{Y,j}$.}
 \label{fig:policy_error}
\end{figure}
% --- End Figure ---

\begin{figure}[b!]
 \centering
  \hfill
\begin{subfigure}[b]{0.4\textwidth}
  \centering
  \includegraphics[width=\textwidth]{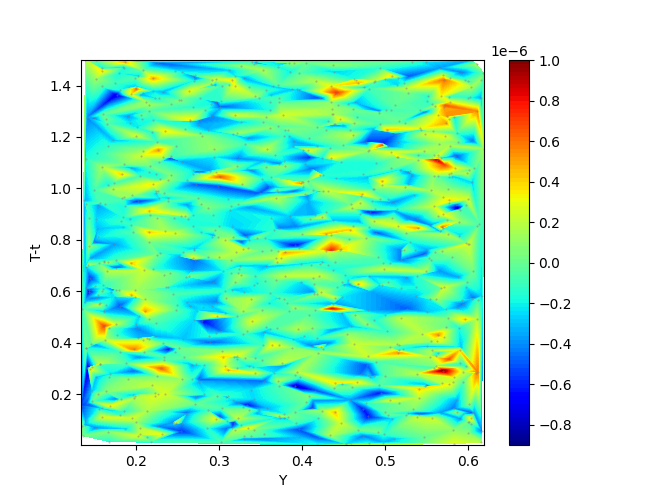}
  \caption{Myopic Error}
  \label{fig:n50k10_myopic_err_new}
 \end{subfigure}
 \hfill
 \begin{subfigure}[b]{0.4\textwidth}
  \centering
  \includegraphics[width=\textwidth]{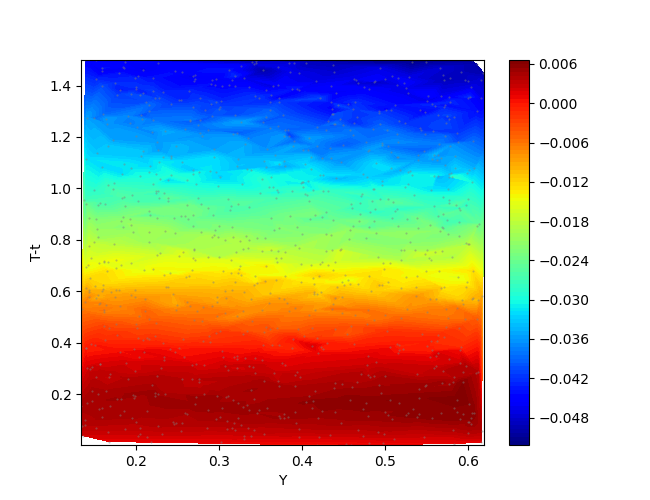}
  \caption{Hedging Error}
  \label{fig:n50k10_hedging_err_new}
 \end{subfigure}
  \hfill
 \caption{Decomposition errors of the Projected PG-DPO (P-PGDPO) policy for the first asset ($i=1$) in the $n=50, k=10$ case (results from iteration 6,400 as plotted), plotted against $Y_1$ at a representative time-to-maturity $T-t$. The first figure shows the error for myopic demand and the second figure shows the error of intertemporal hedging demand. Other state factors $Y_{j \neq 1}$ are held at their long-term means $\theta_{Y,j}$.}
 \label{fig:decomposition error}
\end{figure}
% --- End Figure ---

In addition to the total portfolio errors, Figure \ref{fig:decomposition error} provides a decomposition of the policy errors into the myopic and intertemporal hedging components for the first asset ($i=1$) under the setting with $n=50$ risky assets and $k=10$ state variables. The left panel shows the error map for the myopic demand, while the right panel shows the error for the intertemporal hedging demand, both plotted against the first variable $Y_1$ and time-to-maturity $(T-t)$, with all other state variables fixed as their long-run means. The results demonstrate that the myopic demand is recovered with residual errors on the order of $10^{-6}$, confirming the robustness of the costate-driven projection step for the static component of the portfolio. By contrast, the hedging demand exhibits larger and more structured errors, concentrated in regions where horizon effects are most pronounced. This reflects the intrinsic difficulty of estimating the cross-derivatives of the value function that drive intertemporal hedging behavior, particularly in high-dimensional environments.

In sum, Figures \ref{fig:policy_error} and \ref{fig:decomposition error} illustrate that while P-PGDPO achieves highly accurate recovery of both myopic and hedging components, the primary source of residual policy error originates from the hedging term, underscoring the importance of our methodological focus on stabilizing and projecting costate estimates to capture horizon-dependent dynamics.

\end{document}